\pgfplotsset{compat=1.9}
\DeclareMathOperator*{\minimize}{min}
\DeclareMathOperator*{\maximize}{max}
\DeclareMathOperator*{\subjto}{subj. to}
\DeclareMathOperator{\tr}{tr}
\DeclareMathOperator{\epigraph}{epi}
\DeclareMathOperator{\hypograph}{hypo}
\DeclareMathOperator{\closure}{cl}
\DeclareMathOperator{\interior}{int}
\DeclareMathOperator{\domain}{dom}
\DeclareMathOperator{\diag}{diag}
\DeclareMathOperator{\image}{im}
\DeclareMathOperator{\vect}{vec}
\DeclarePairedDelimiterX{\divx}[2]{(}{)}{#1\mspace{1.5mu}\delimsize\|\mspace{1.5mu}#2}
\DeclarePairedDelimiterX{\divy}[2]{(}{)}{#1\mspace{1mu}\delimsize|\mspace{1mu}#2}
\DeclarePairedDelimiterX{\inp}[2]{\langle}{\rangle}{#1, #2}
\DeclarePairedDelimiterX{\norm}[1]{\lVert}{\rVert}{#1}
\DeclarePairedDelimiterX{\abs}[1]{\lvert}{\rvert}{#1}
\DeclarePairedDelimiterX{\bk}[2]{\langle}{\rangle}{#1 \delimsize\vert #2}
\newcommand*{\vertbar}{\rule[-1ex]{0.5pt}{2.5ex}}
\newcommand*{\horzbar}{\rule[.5ex]{2.5ex}{0.5pt}}
\def\ALG@special@indent{%
    \ifdim\ALG@thistlm=0pt\relax
        \hskip-\leftmargin
    \else
        \hskip\ALG@thistlm
    \fi
}
\newcommand{\Input}[1]{\item[]\noindent\ALG@special@indent \textbf{Input:}\ #1}
\newcommand{\Output}[1]{\item[]\noindent\ALG@special@indent \textbf{Output:}\ #1}
\newcommand{\Indent}[1]{\item[]\noindent\ALG@special@indent \hspace{2.675em} #1}
\newcommand*\conj[1]{\overline{#1}}
\NewDocumentCommand{\grad}{e{_^}}{%
  \mathop{}\!
  \nabla
  \IfValueT{#1}{_{\!#1}}
  \IfValueT{#2}{^{#2}}
}
\def\smallunderbrace#1{\mathop{\vtop{\m@th\ialign{##\crcr
   $\hfil\displaystyle{#1}\hfil$\crcr
   \noalign{\kern3\p@\nointerlineskip}%
   \tiny\upbracefill\crcr\noalign{\kern3\p@}}}}\limits}
\newtheorem{theorem}{Theorem}[section]
\newtheorem{lemma}[theorem]{Lemma}
\newtheorem{corollary}[theorem]{Corollary}
\newtheorem{definition}[theorem]{Definition}
\newtheorem{remark}[theorem]{Remark}
\newtheorem{example}[theorem]{Example}
\newtheorem{fact}[theorem]{Fact}
\newtheorem{assump}{Assumption}
\newcommand{\footremember}[2]{%
    \footnote{#2}
    \newcounter{#1}
    \setcounter{#1}{\value{footnote}}%
}
\newcommand{\footrecall}[1]{%
    \footnotemark[\value{#1}]%
}
\newcommand*{\diff}{}
\begin{document}

\title{Interior Point Methods for Structured Quantum Relative Entropy Optimization Problems}

\author{%
    Kerry He\footremember{monash}{Department of Electrical and Computer Systems Engineering, Monash University, Clayton VIC 3800, Australia. \url{{kerry.he1, james.saunderson}@monash.edu}} \and James Saunderson\footrecall{monash} \and Hamza Fawzi\footremember{cambridge}{Department of Applied Mathematics and Theoretical Physics, University of Cambridge, Cambridge CB3 0WA, United Kingdom. \url{h.fawzi@damtp.cam.ac.uk}}
}
\date{}

\maketitle

\begin{abstract}
    Quantum relative entropy \diff{optimization refers to a class of convex problems in which a linear functional is minimized} over an affine section of the epigraph of the quantum relative entropy function. Recently, the self-concordance of a natural barrier function was proved for this set\diff{, and various implementations of interior-point methods have been made available to solve this class of optimization problems.} In this paper, we show how common structures arising from applications in quantum information theory can be exploited to improve the efficiency of solving quantum relative entropy \diff{optimization problems} using interior-point methods. First, we show that the natural barrier function for the epigraph of the quantum relative entropy composed with positive linear operators is self-concordant, even when these linear operators map to singular matrices. Compared to modelling problems using the full quantum relative entropy cone, this allows us to remove redundant log\diff{-}determinant expressions from the barrier function and reduce the overall barrier parameter. Second, we show how certain slices of the quantum relative entropy cone exhibit useful properties which should be exploited whenever possible to perform certain key steps of interior-point methods more efficiently. We demonstrate how these methods can be applied to applications in quantum information theory, including quantifying quantum key rates, quantum rate-distortion functions, quantum channel capacities, and the ground state energy of Hamiltonians. Our numerical results show that these techniques improve computation times by up to several orders of magnitude, and allow previously intractable problems to be solved.
\end{abstract}

\section{Introduction} \label{sec:intro}

The (Umegaki) quantum relative entropy is an important function in quantum information theory used to measure the divergence between two quantum states. This is defined as
\begin{equation}\label{eqn:qre}
    S\divx{X}{Y} \coloneqq \tr[X (\log(X) - \log(Y))],
\end{equation}
on the domain $\{ (X, Y) \in \mathbb{H}^n_{+}\times\mathbb{H}^n_{+} : X \ll Y \}$, where we use $X \ll Y$ to mean $\ker(Y)\subseteq\ker(X)$, $\log$ denotes the matrix logarithm, and $\mathbb{H}^n_+$ denotes the set of positive semidefinite Hermitian matrices. It is known that the quantum relative entropy is jointly convex in both of its arguments~\cite{lieb1973proof, effros2009matrix}. Therefore minimizing the quantum relative entropy subject to affine constraints is a convex optimization problem. Many important quantities in quantum information theory involve solving these quantum relative entropy \diff{optimization problems}, i.e., conic optimization problems with respect to the quantum relative entropy cone
\begin{equation}
    \mathcal{K}_{\text{qre}} \coloneqq \{ (t, X, Y) \in \mathbb{R}\times\mathbb{H}^n_{+}\times\mathbb{H}^n_{+} : t \geq S\divx{X}{Y}, \ X \ll Y \}.
\end{equation}
Recently, self-concordance of the natural barrier function of the quantum relative entropy cone
\begin{equation}\label{eqn:qre-barrier}
    (t, X, Y) \mapsto -\log(t - S\divx{X}{Y}) - \log\det(X) - \log\det(Y),
\end{equation}
with domain $\mathbb{R}\times\mathbb{H}^n_{++}\times\mathbb{H}^n_{++}$ was established~\cite{fawzi2023optimal}. Together with recent advances in nonsymmetric conic programming~\cite{skajaa2015homogeneous,myklebust2014interior,myklebust2015primal,tunccel2001generalization,nesterov2012towards,nemirovski2005cone}, \diff{there has been a surge of interest in solving} quantum relative entropy \diff{optimization problems} using primal-dual interior-point algorithms for nonsymmetric cones~\cite{coey2023performance,karimi2024domain,karimi2023efficient}. \diff{However, a limitation of interior-point methods is that they do not generally scale well to large problem dimensions. Current implementations of these methods are limited to solving small to moderately sized quantum relative entropy optimization problems. In this work, we focus on improving the computational efficiency of interior-point methods applied to solving quantum relative entropy optimization problems.} In particular, we study how we can exploit common structures arising in problems in quantum information theory to efficiently implement certain key steps of these interior-point algorithms.

\paragraph{Quantum relative entropy \diff{optimization}}
Several techniques have been proposed for solving quantum relative entropy \diff{optimization problems}. One line of work is to use first-order methods, e.g.,~\cite{zinchenko2010numerical,winick2018reliable,you2022minimizing,he2023bregman,he2023efficient}. However, these methods typically require tailored algorithms and proofs of convergence for each different problem instance, and currently there is no general purpose first-order algorithm which can solve general quantum relative entropy \diff{optimization problems}. Additionally, these techniques usually \diff{exhibit sublinear convergence rates (i.e., $O(1/\varepsilon)$), and} do not converge quickly to high accuracy solutions.

Alternatively, the quantum relative entropy cone can be approximated to arbitrarily small precision by using linear matrix inequalities~\cite{fawzi2019semidefinite}. This approximation allows us to use relatively mature techniques and software for semidefinite programming to solve these problems. However, \diff{this technique uses (multiple) semidefinite constraints with $2n^2\times 2n^2$ matrices to approximate a quantum relative entropy involving $n\times n$ matrices. Therefore, this technique scales poorly to large problem dimensions, and is only practical for relatively small problems.}

Another recent line of work aims to solve quantum relative entropy \diff{optimization problems} using interior-point algorithms. \diff{These methods are attractive as although each step of an interior-point method is more expensive to perform compared to first-order methods, these algorithms typically require many fewer iterations to converge to high accuracy solutions. Additionally, interior-point methods can be elegantly extended to account for any additional combination of conic constraints, which can be less straightforward to do for first-order methods.} 

Tailored interior-point algorithms have been proposed in~\cite{faybusovich2020self,faybusovich2022long,hu2022robust} to compute the relative entropy of entanglement and quantum key rates. More generally, we can use nonsymmetric cone programming techniques to optimize over the quantum relative entropy cone. In~\cite{tunccel2001generalization,myklebust2014interior,myklebust2015primal,nesterov2012towards,nemirovski2005cone}, it was shown how primal-dual interior-point algorithms for symmetric cone programming (i.e., linear, semidefinite, and second-order cone programming), which are known to work well in practice, could be extended to solve cone programs involving more general nonsymmetric cones while retaining the same worst case iteration complexity bounds $O(\sqrt{\nu}\log(1/\varepsilon))$, where $\nu$ is the barrier parameter associated with the cone. Practical implementations of nonsymmetric cone programming which use these ideas have recently been released~\cite{skajaa2015homogeneous,coey2023performance,karimi2024domain,papp2022alfonso,dahl2022primal}. Notably, implementations of the quantum relative entropy cone are available in Hypatia~\cite{coey2023performance} and DDS~\cite{karimi2024domain,karimi2023efficient}.

\paragraph{Exploiting structure in conic progams}
Consider the standard form conic program
\begin{equation}\label{eqn:standard-form-conic}
    \minimize_{x\in\mathbb{R}^n} \quad \inp{c}{x}, \quad \subjto \quad Ax = b, \ x \in\mathcal{K},
\end{equation}
where $c\in\mathbb{R}^n$, $A\in\mathbb{R}^{p\times n}$, $b\in\mathbb{R}^p$, and $\mathcal{K}\in\mathbb{R}^n$ is a proper (i.e., closed, pointed, and full-dimensional) convex cone, and let $F:\interior\mathcal{K}\rightarrow\mathbb{R}$ be a logarithmically-homogeneous self-concordant barrier (defined in Section~\ref{subsec:concordant-prelim}) for $\mathcal{K}$. The main bottleneck in primal-dual interior-point methods is typically in solving a linear system of equations, known as the Newton equations, involving the Hessian of the barrier function $\grad^2 F(x)$ and the constraint matrix $A$. One common way of solving the Newton equations is to use a suitable block elimination ordering (see Appendix~\ref{appdx:pdipm}) to reduce the problem to solving a linear system with the Schur complement matrix $A \grad^2 F(x)^{-1} A^\top$.

For linear and semidefinite programming, i.e., $\mathcal{K}=\mathbb{R}^n_+$ and $\mathcal{K}=\mathbb{H}^n_+$, respectively, the Schur complement matrix is relatively easy to construct as there exist simple expressions for the inverse Hessian map $\grad^2 F(x)^{-1}$ of the logarithmic and log determinant barriers used by each of these cones. Additionally, it is well known that certain structures in $A$, such as sparsity, can be exploited to make constructing the Schur complement matrix more efficient~\cite{andersen2011interior,fujisawa1997exploiting,benson2008dsdp5,vandenberghe2015chordal}. 
However, for nonsymmetric programming in general, it is not always clear if efficient expressions exist for the inverse Hessian map of the barrier function. Notably, existing implementations of interior-point algorithms for quantum relative entropy \diff{optimization problems}, i.e., $\mathcal{K}=\mathcal{K}_{\textnormal{qre}}$, construct and factor the full Hessian matrix of the barrier function~\eqref{eqn:qre-barrier}, which ends up being the dominant cost of these algorithms. In~\cite{karimi2023efficient}, a heuristic method which approximates the Hessian of quantum relative entropy by discarding off-diagonal blocks of the matrix is proposed, which simplifies the process of constructing and factoring the Hessian. However, we are not aware of any analysis of when this approximation works well.


Alternatively, sometimes the Hessian of the barrier function~\eqref{eqn:qre-barrier} can be simplified by exploiting identities which quantum relative entropy satisfies along certain slices of the quantum relative entropy cone. We illustrate \diff{two} examples of this idea which we will develop later in the paper.


\begin{example}\label{exmp:structrure}
    Consider the positive semidefinite $2\times2$ block matrix $X\in\mathbb{H}^{2n}_{+}$ whose $i,j$-th block is given by $X_{ij}\in\mathbb{H}^n$. One can show that the function
    \begin{equation}\label{eqn:dummy-quant-cond}
        X\mapsto S \biggl(
        \begin{bmatrix}
            X_{11} & X_{12} \\ X_{12}^\dag & X_{22}
        \end{bmatrix} \bigg\|
        \begin{bmatrix}
            X_{11}+X_{22} & 0\\ 0& X_{11}+X_{22}
        \end{bmatrix} \biggr),
    \end{equation}
    is equivalent to
    \begin{equation*}
        X \mapsto \tr[X\log(X)] - \tr[(X_{11}+X_{22})\log(X_{11}+X_{22})],
    \end{equation*}
    for all $X\in\mathbb{H}^{2n}_{+}$. We can represent the Hessian matrix of the second function in the form $H_1 - A^\dag H_2A$, where $H_1$ is the Hessian matrix of $X\in\mathbb{H}^{2n}_{+}\mapsto\tr[X\log(X)]$, $H_2$ is the Hessian matrix of $Y\in\mathbb{H}^{n}_{+}\mapsto\tr[Y\log(Y)]$, and $A$ represents the linear map $X\mapsto X_{11}+X_{22}$. By interpreting this Hessian as a low-rank perturbation of $H_1$, and recognizing that $H_1$ and $H_2$ are both easily invertible (see Remark~\ref{rem:easy-inverse-hessian}), we can employ the matrix inversion lemma (see Appendix~\ref{appdx:matrix-inversion-lemma}) to efficiently solve linear systems with $H_1 - A^\dag H_2A$. 

    The function in~\eqref{eqn:dummy-quant-cond} is related to quantum conditional entropy functions. We provide a more detailed discussion of these function in Example~\ref{exmp:conditional}, and provide a proof of a generalization of the identity~\eqref{eqn:dummy-quant-cond} in Lemma~\ref{lem:partial-trace-entropy}.
\end{example}

\begin{example}
    \diff{An important application of quantum relative entropy optimization is to compute quantum key rates, which requires solving a convex optimization problem of the form}
    \begin{equation*}
        \min_{X\in\mathbb{H}^n} \quad S\divx{\mathcal{G}(X)}{\mathcal{Z}(\mathcal{G}(X))} \quad \subjto \quad \mathcal{A}(X) = b, \ X\succeq0,
    \end{equation*}
    \diff{where $\mathcal{G}:\mathbb{H}^n\rightarrow\mathbb{H}^{mr}$ and $\mathcal{Z}:\mathbb{H}^{mr}\rightarrow\mathbb{H}^{mr}$ are positive linear maps, $\mathcal{A}:\mathbb{H}^n\rightarrow\mathbb{R}^p$ is a linear map, and $b\in\mathbb{R}^p$. We introduce this example in more detail in Section~\ref{subsec:qkd-experiments}.}
    
    \diff{The linear maps $\mathcal{G}$ and $\mathcal{Z}$ typically possess properties which can be exploited to make computing the Hessian of the objective function easier. Notably, the linear map $\mathcal{G}$ often maps to low rank matrices, and the linear map $\mathcal{Z}$ maps all off-diagonal blocks of a matrix to zero for a given block structure. By using these properties, it was shown in~\cite{hu2022robust} how facial reduction and a similar decomposition as shown in Example~\ref{exmp:structrure} could simplify the computation of the Hessian matrix. Later in Section~\ref{subsec:qkd-experiments}, we show how the block-diagonal structure of the image of $\mathcal{Z}$, together with further knowledge about the structure of $\mathcal{G}$, can be further exploited to simplify this computation.}

\end{example}


\paragraph{Contributions}
In our work, we focus on quantum relative entropy \diff{optimization problems} of the form, or closely related to the form,
\begin{equation}\label{eqn:qrep-base}
    \min_{X\in\mathbb{H}^n} \quad S\divx{\mathcal{G}(X)}{\mathcal{H}(X)} \quad \subjto \quad \mathcal{A}(X) = b, \ X\succeq0,
\end{equation}
where $\mathcal{G}:\mathbb{H}^n\rightarrow\mathbb{H}^m$ and $\mathcal{H}:\mathbb{H}^n\rightarrow\mathbb{H}^m$ are positive linear maps (i.e., linear operators which map positive semidefinite matrices to positive semidefinite matrices) satisfying $\mathcal{G}(X) \ll \mathcal{H}(X)$ for all $X\in\mathbb{H}^n_{++}$, and $\mathcal{A}:\mathbb{H}^n\rightarrow\mathbb{R}^p$ and $b\in\mathbb{R}^p$ encode the linear constraints. Note that $\mathcal{G}$ and/or $\mathcal{H}$ may be defined to map nonsingular matrices to singular matrices. To model this problem as a standard form conic program~\eqref{eqn:standard-form-conic}, an obvious choice is to let $\mathcal{K}=\mathcal{K}_{\textnormal{qre}}\times\mathbb{H}^n_+$, in which case we are interested in the following slice of the barrier of this cone
\begin{align}\label{eqn:barrier-qre-psd}
    (t, X) \mapsto &-\log(t - S\divx{\mathcal{G}(X)}{\mathcal{H}(X)}) - \log\det(\mathcal{G}(X)) - \log\det(\mathcal{H}(X)) - \log\det(X),
\end{align}
defined on $\mathbb{R}\times\mathbb{H}^n_{++}$, which has a total barrier parameter of $2m+n+1$. 

Our contributions are twofold. First, instead of using the aforementioned modelling strategy using the full quantum relative entropy cone, we propose directly model\diff{ling} the quantum relative entropy \diff{optimization problem}~\eqref{eqn:qrep-base} using the following slice of the quantum relative entropy cone
\begin{equation}\label{eqn:main-set}
    \mathcal{K}_{\textnormal{qre}}^{\mathcal{G}, \mathcal{H}} \coloneqq \closure \{ (t, X)\in\mathbb{R}\times\mathbb{H}^n_{++} : t \geq S\divx{\mathcal{G}(X)}{\mathcal{H}(X)} \}.
\end{equation}
We show in Section~\ref{sec:opt-barrier} that a natural barrier function for this slice of the quantum relative entropy cone is self-concordant. \diff{Additionally, this barrier is optimal in the sense it has the smallest barrier parameter out of any self-concordant barrier for the epigraph~\eqref{eqn:main-set}. Note, however, that this does not preclude the possibility of modeling the optimization problem using a different cone and barrier with a smaller barrier parameter.} We summarize this result in the following corollary, which is a special case of a more general result we present later in Theorem~\ref{thm:main}.
\begin{corollary}\label{cor:main}
    Let $\mathcal{G}:\mathbb{H}^n\rightarrow\mathbb{H}^m$ and $\mathcal{H}:\mathbb{H}^n\rightarrow\mathbb{H}^m$ be positive linear maps satisfying $\mathcal{G}(X) \ll \mathcal{H}(X)$ for all $X\in\mathbb{H}^n_{++}$. Then 
    \begin{equation}\label{eqn:main-barrier}
        (t, X) \mapsto -\log(t - S\divx{\mathcal{G}(X)}{\mathcal{H}(X)}) - \log\det(X),
    \end{equation}
    defined on $\mathbb{R}\times\mathbb{H}^n_{++}$ is an $(n+1)$-logarithmically homogeneous self-concordant barrier for $\mathcal{K}_{\textnormal{qre}}^{\mathcal{G}, \mathcal{H}}$. Moreover, this barrier is optimal in the sense that any self-concordant barrier for $\mathcal{K}_{\textnormal{qre}}^{\mathcal{G}, \mathcal{H}}$ has parameter at least $n+1$.
\end{corollary}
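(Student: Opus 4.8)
The statement splits into two parts: that \eqref{eqn:main-barrier} is an $(n+1)$-logarithmically homogeneous self-concordant barrier, and that $n+1$ is the smallest achievable parameter. Since the corollary is advertised as a special case of Theorem~\ref{thm:main}, my plan for the first part is to specialize that general result to the present setting. Concretely, I would verify that the pair $(\mathcal{G},\mathcal{H})$ satisfies the hypotheses of Theorem~\ref{thm:main} --- positivity of both maps and the domain-compatibility condition $\mathcal{G}(X)\ll\mathcal{H}(X)$ on $\mathbb{H}^n_{++}$ --- and then read off both the self-concordance of the composite barrier $-\log(t-S\divx{\mathcal{G}(X)}{\mathcal{H}(X)})-\log\det(X)$ and the value of its parameter. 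The only arithmetic to check is the parameter count: the epigraph term contributes $1$ and the extra $-\log\det(X)$ contributes $n$, for a total of $n+1$. I would record the logarithmic homogeneity directly, using that relative entropy is positively homogeneous of degree one, so $S\divx{\mathcal{G}(\lambda X)}{\mathcal{H}(\lambda X)}=\lambda\,S\divx{\mathcal{G}(X)}{\mathcal{H}(X)}$, and that $-\log\det$ is logarithmically homogeneous of degree $n$; this also confirms that $\mathcal{K}_{\textnormal{qre}}^{\mathcal{G},\mathcal{H}}$ is genuinely a cone.

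For the optimality, the plan is to lower bound the minimal barrier parameter using the monotonicity of this quantity under the standard operations of affine restriction and partial minimization. Because any $\nu$-self-concordant barrier restricts to a $\nu$-self-concordant barrier on an affine subspace, and pushes forward under partial minimization of a block of variables to a barrier of no larger parameter on the corresponding projection, I would project $\mathcal{K}_{\textnormal{qre}}^{\mathcal{G},\mathcal{H}}$ onto the $X$-coordinate by minimizing over $t$. Since for every $X\succeq0$ there is a finite $t$ with $(t,X)\in\mathcal{K}_{\textnormal{qre}}^{\mathcal{G},\mathcal{H}}$, this projection is exactly $\mathbb{H}^n_+$, whose minimal barrier parameter is known to be $n$. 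This already gives $\nu_{\min}(\mathcal{K}_{\textnormal{qre}}^{\mathcal{G},\mathcal{H}})\ge n$.

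The remaining, and genuinely delicate, step is to upgrade this to $n+1$ by accounting for the epigraph variable. The naive route --- slicing $\mathcal{K}_{\textnormal{qre}}^{\mathcal{G},\mathcal{H}}$ down to a product $\mathbb{R}_+\times\mathbb{H}^n_+$, which has minimal parameter $n+1$ --- is unavailable, because $S\divx{\mathcal{G}(X)}{\mathcal{H}(X)}$ is strictly convex on full-dimensional slices, so no affine slice retaining all $n^2$ directions of $X$ renders the entropy term affine. My plan is therefore to invoke a dedicated lower bound for epigraph cones: if $f$ is convex, positively homogeneous of degree one, and not the restriction of a linear functional, then $\closure\{(t,X):X\in\interior\mathcal{C},\ t\ge f(X)\}$ has minimal parameter at least $\nu_{\min}(\mathcal{C})+1$. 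Applied with $\mathcal{C}=\mathbb{H}^n_+$ and $f(X)=S\divx{\mathcal{G}(X)}{\mathcal{H}(X)}$, which is non-affine precisely because relative entropy is strictly convex, this yields $\nu_{\min}(\mathcal{K}_{\textnormal{qre}}^{\mathcal{G},\mathcal{H}})\ge n+1$ and closes the gap. I expect this last step to be the main obstacle: the extra $+1$ is a global feature of the epigraph and cannot be localized to a tangent cone at any proper boundary point, since the positive semidefinite constraint attains its full barrier contribution $n$ only as $X\to0$, i.e.\ at the apex, where passing to a tangent cone is vacuous. Establishing the bound rigorously thus requires the equality analysis of when partial minimization preserves the parameter, showing that preservation would force $f$ to be affine. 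An alternative, if one prefers not to prove this lemma in isolation, is simply to inherit the optimality claim from the corresponding statement inside Theorem~\ref{thm:main}.
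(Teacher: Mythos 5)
Your proposal follows essentially the same route as the paper: Corollary~\ref{cor:main} is obtained by specializing Theorem~\ref{thm:main} (take $g=\log$, $\mathcal{N}_1=\mathcal{G}$, $\mathcal{N}_2=\mathcal{H}$, so that Assumption~\ref{assump:well-defined-N-i} holds), with the parameter $n+1$ read off from Lemma~\ref{lem:compatibility-to-barrier} and logarithmic homogeneity checked directly from degree-one homogeneity of the relative entropy. For the optimality claim the paper does not reprove the lower bound but cites \cite[Corollary~3.13]{fawzi2023optimal}, which is precisely the ``epigraph of a non-affine homogeneous convex function adds one to the minimal parameter'' lemma you sketch, so your fallback of inheriting optimality from Theorem~\ref{thm:main} is exactly what the paper does.
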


There are two main advantages of working with $\mathcal{K}_{\textnormal{qre}}^{\mathcal{G}, \mathcal{H}}$ and its barrier compared to working with the full quantum relative entropy cone. First, by using a priori knowledge that $\mathcal{G}$ and $\mathcal{H}$ are positive linear maps, we do not require redundant positive semidefinite constraints which introduce additional log determinant expressions to the barrier function and increase the total barrier parameter by $2m$ (compare~\eqref{eqn:barrier-qre-psd} and~\eqref{eqn:main-barrier}). Second, the barrier~\eqref{eqn:main-barrier} is well-defined even when $\mathcal{G}$ and/or $\mathcal{H}$ map to singular matrices, whereas~\eqref{eqn:barrier-qre-psd} is not well-defined under the same setting due to the redundant log determinant terms.

Second, in Section~\ref{sec:structure}, we show how \diff{certain structures in $\mathcal{G}$ and $\mathcal{H}$, which commonly arise in quantum information theory, \diff{can be exploited to simplify} solving the Newton equations. We achieve this by extending the basic idea discussed in Example~\ref{exmp:structrure}.}
Structures we study include block diagonal, low-rank, and quantum conditional entropy-like structures. 
Using these techniques, we implement custom cone oracles which can be used with the generic conic primal-dual interior point software Hypatia~\cite{coey2023performance}, which are available at
\begin{center}
    \url{https://github.com/kerry-he/qrep-structure}.
\end{center}
\diff{In Section~\ref{sec:exp},} we show how our techniques can be applied to various applications from quantum information theory, including computing the quantum key rate, the quantum rate-distortion function, quantum channel capacities, and the ground state energy of Hamiltonians. Our numerical results demonstrate that the methods presented in this paper can improve computation times and memory requirements by up to several orders of magnitude, allowing us to solve problems with dimensions that were previously out of reach. 

At the time of writing this paper, we were made aware of~\cite{lorente2024quantum} which independently proposed using interior-point methods to optimize over $\mathcal{K}_{\textnormal{qre}}^{\mathcal{G}, \mathcal{H}}$ with a focus on computing quantum key rates. While~\cite{lorente2024quantum} applies this technique to more quantum key rate examples, and benchmarks against state-of-the-art algorithms for computing quantum key rates, our work analyses the self-concordance properties of the barrier~\eqref{eqn:main-barrier}, considers a broader range of structural properties in $\mathcal{G}$ and $\mathcal{H}$, and a wider range of applications beyond quantum key distribution. 

\diff{We have also recently released a software package, QICS, containing high-performance Python implementations of the ideas explored in this paper. See~\cite{he2024qics} for detailed documentation and benchmarks for QICS.}

\section{Preliminaries} \label{sec:preliminaries}

\paragraph{Notation}
We use $\mathbb{R}^n$, $\mathbb{C}^n$, and $\mathbb{H}^n$ to denote the set of $n$-dimensional real vectors, $n$-dimensional complex vectors, and $n\times n$ Hermitian matrices, respectively. Additionally, we use $\mathbb{R}^n_+$ and $\mathbb{R}^n_{++}$ to denote the nonnegative orthant and its interior, respectively, and use $\mathbb{H}^n_+$ and $\mathbb{H}^n_{++}$ to denote the positive semidefinite cone and its interior, respectively. We will also use the notation $X\succeq0$ and $X\succ0$ to mean $X\in\mathbb{H}^n_{+}$ and $X\in\mathbb{H}^n_{++}$, respectively. For $x,y\in\mathbb{R}^n$ and $X,Y\in\mathbb{H}^n$, we define the standard inner products $\inp{x}{y}=x^\top y$ and $\inp{X}{Y}=\tr[XY]$, respectively. For a complex matrix $X\in\mathbb{C}^{n\times m}$, we use $\conj{X}$ to denote its elementwise conjugate, and $X^\dag$ to denote its conjugate transpose. We use $\mathbb{I}_n$ to denote the $n\times n$ identity matrix, and drop the subscript if the dimension of the matrix is clear from the context. We use $\{ e_i \}_{i=1}^n$ to denote the standard basis for either $\mathbb{R}^n$ or $\mathbb{C}^n$. We use $\closure$, $\epigraph$, $\hypograph$, $\ker$, and $\image$ to denote the closure of a set, the epigraph or hypograph of a function, and the kernel or image of a linear operator, respectively. A matrix   $U\in\mathbb{C}^{n\times n}$ is unitary if $U^\dag U = UU^\dag = \mathbb{I}$. We say that a matrix $V\in\mathbb{C}^{m\times n}$ is an \emph{isometry matrix} if $m\geq n$ and $V^\dag V = \mathbb{I}$. 


\paragraph{Derivatives}
For a finite dimensional real vector space $\mathbb{V}$, consider a $C^3$ function $f:\domain f\rightarrow\mathbb{R}$ with open domain $\domain f \subset\mathbb{V}$. We denote the first, second, and third directional derivative of $f$ at $x\in\domain f$ in the direction $h\in\mathbb{V}$ as $\grad f(x)[h]$, $\grad^2 f(x)[h, h]$, and $\grad^3 f(x)[h, h, h]$, respectively. We will also use the notation $\grad f(x)\in\mathbb{V}$ and $\grad^2 f(x)[h]\in\mathbb{V}$ to denote the gradient vectors of $f$ and $\grad f(x)[h]$, respectively, and use $\grad^2 f(x)$ to denote the real Hessian matrix of $f$. We provide additional details regarding the derivatives of spectral functions and barrier functions in Appendix~\ref{appdx:derivatives}.


\paragraph{Product spaces}
For two matrices $X\in\mathbb{C}^{n\times m}$ and $Y\in\mathbb{C}^{p\times q}$, we define the Kronecker product $X\otimes Y\in\mathbb{C}^{np\times mq}$ as the block matrix
\begin{equation*}
    X\otimes Y = \begin{bmatrix}
        X_{11}Y & \ldots & X_{1m}Y \\
        \vdots & \ddots & \vdots \\
        X_{n1}Y & \ldots & X_{nm}Y
    \end{bmatrix},
\end{equation*}
where $X_{ij}$ denotes the $i,j$-th element of $X$. A related class of operators are the partial traces $\tr_1^{n,m}:\mathbb{C}^{nm\times nm}\rightarrow\mathbb{C}^{m\times m}$ and $\tr_2^{n,m}:\mathbb{C}^{nm\times nm}\rightarrow\mathbb{C}^{n\times n}$, which are defined as the adjoints of the Kronecker product with the identity matrix, i.e., $(\tr_1^{n,m})^\dag(Y)=\mathbb{I}_n\otimes Y$ for all $Y\in\mathbb{C}^{m\times m}$, and $(\tr_2^{n,m})^\dag(X)=X\otimes\mathbb{I}_m$ for all $X\in\mathbb{C}^{n\times n}$. Another interpretation of the partial traces is that they are the unique linear operators which satisfy
\begin{align*}
    \tr_1^{n,m}(X\otimes Y) &= \tr[X] Y\\
    \tr_2^{n,m}(X\otimes Y) &= \tr[Y] X,
\end{align*}
for all $X\in\mathbb{C}^{n\times n}$ and $Y\in\mathbb{C}^{m\times m}$. We will also use $\oplus$ to denote the direct sum of matrices, i.e., for matrices $X\in\mathbb{C}^{n\times m}$ and $Y\in\mathbb{C}^{p\times q}$, we define $X\oplus Y\in\mathbb{C}^{(n+p)\times(m+q)}$ as
\begin{equation*}
    X\oplus Y = \begin{bmatrix}
        X & 0 \\ 0 & Y
    \end{bmatrix}.
\end{equation*}


\paragraph{Entropy}
\sloppy Earlier, we introduced the quantum relative entropy function~\eqref{eqn:qre}. Here, we will briefly introduce some related functions to which we refer throughout the paper. A closely related function is the \emph{(von Neumann) quantum entropy}, which, through some overloading of notation, is defined as
\begin{equation}
    S(X) \coloneqq -\tr[X\log(X)],
\end{equation}
over the domain $\mathbb{H}^n_+$, and is a concave function. We define the \emph{quantum entropy cone} as the hypograph of the homogenized quantum entropy function
\begin{equation}
    \mathcal{K}_{\textnormal{qe}} \coloneqq \closure \{ (t, X, y) \in \mathbb{R}\times\mathbb{H}^n_{++}\times\mathbb{R}_{++} : t \geq -yS(X/y) \},
\end{equation}
and note that self-concordance of the natural barrier for the quantum entropy cone follows from the same result for the quantum relative entropy cone (see also~\cite{coey2023conic}). We also define a function closely related to the quantum entropy,
\begin{equation}\label{eqn:funny-entropy}
    S_C(X) \coloneqq -\tr[C\log(X)],
\end{equation}
over the domain $\{X\in\mathbb{H}^n_+ : X \ll C\}$, for some matrix $C\in\mathbb{H}^n$. In this notation, we can represent quantum relative entropy as $S\divx{X}{Y} = -S(X) + S_X(Y)$. The classical counterparts to these quantum entropies are the \emph{(Shannon) classical entropy},
\begin{equation}
    H(x) \coloneqq -\sum_{i=1}^n x_i\log(x_i),
\end{equation}
defined over $\mathbb{R}^n_{+}$, and the \emph{classical relative entropy} (or the \emph{Kullback–Leibler divergence}),
\begin{equation}\label{eqn:cre}
    H\divx{x}{y} \coloneqq \sum_{i=1}^n x_i\log(x_i / y_i),
\end{equation}
defined over $\{(x, y) \in\mathbb{R}^n_+\times\mathbb{R}^n_+ : \diag(x) \ll \diag(y) \}$. The \emph{classical relative entropy cone} is
\begin{equation}
    \mathcal{K}_{\textnormal{cre}} \coloneqq \{ (t, x, y) \in \mathbb{R}\times\mathbb{R}^n_{+}\times\mathbb{R}^n_+ : t \geq H\divx{x}{y}, \ \diag(x) \ll \diag(y) \}.
\end{equation}
Self-concordance of the natural barrier for this cone follows from the same result for the quantum relative entropy cone (see also~\cite[Theorem 1]{karimi2024domain}). These classical counterparts will be useful as quantum entropy and quantum relative entropy reduce to classical entropy and classical relative entropy, respectively, when the matrix arguments are diagonal.



\section{Optimal self-concordant barriers}\label{sec:opt-barrier}
In this section, we establish (optimal) self-concordance of the natural barrier of the epigraph of quantum relative entropy composed with positive linear maps, as well as for the epigraphs of related functions. To present our main result, we first introduce the following class of divergence measures called quasi-entropies, which were studied in~\cite{hiai2017different,petz1986quasi} and can be interpreted as a quantum extension of the classical $f$-divergences. For a function $g:(0,\infty)\rightarrow\mathbb{R}$, the quasi-entropy $S_g:\mathbb{H}^n_{++}\times\mathbb{H}^n_{++}\rightarrow\mathbb{R}$ is defined as
\begin{equation}
    S_g\divx{X}{Y} = \Psi[P_g(X\otimes\mathbb{I}, \mathbb{I}\otimes \conj{Y})],
\end{equation}
where $\Psi:\mathbb{H}^{n^2}\rightarrow\mathbb{R}$ is the unique linear operator satisfying $\Psi[X\otimes \conj{Y}]=\tr[XY]$, and
\begin{equation}
    P_g(X, Y) = X^{1/2} g(X^{-1/2} Y X^{-1/2}) X^{1/2},
\end{equation}
represents the noncommutative perspective of $g$. When $g$ is operator concave, meaning that for all integers $n$,  $X,Y\in\mathbb{H}^n_{++}$, and $\lambda\in[0, 1]$, we have
\begin{equation}
    g(\lambda X + (1-\lambda)Y) \succeq \lambda g(X) + (1-\lambda)Y,
\end{equation} 
then the quasi-entropy is a concave function~\cite[Proposition 3.10]{hiai2017different}. Notably, when $g(x)=-\log(x)$, it turns out that $S_g$ is the quantum relative entropy function. Under certain circumstances (see~\cite[Theorem B.3]{fawzi2023optimal}), is possible to extend the quasi-entropy function to the boundary of the positive semidefinite cone by considering the appropriate limit
\begin{equation}\label{eqn:quasi-entropy-limit}
    S_g\divx{X}{Y} = \lim_{\varepsilon\downarrow0} \Psi[ P_g( (X + \varepsilon\mathbb{I}) \otimes \mathbb{I}, \mathbb{I} \otimes (\conj{Y} + \varepsilon\mathbb{I})  ].
\end{equation}
When $X,Y\in\mathbb{H}^n_{++}$, this limit coincides with the original definition of quasi-entropy. For example, when $g(x)=-\log(x)$, this limit is well defined for all $X,Y\in\mathbb{H}^n_+$ such that $X \ll Y$. In~\cite{fawzi2023optimal}, it was shown that the barrier function
\begin{equation}
    (t, X, Y) \mapsto -\log(S_g\divx{X}{Y} - t) - \log\det(X) - \log\det(Y),
\end{equation}
defined on $\mathbb{R}\times\mathbb{H}^n_{++}\times\mathbb{H}^n_{++}$ is an optimal $(2n+1)$-logarithmically homogeneous self-concordant barrier for $\closure\hypograph S_g$ when $g$ is operator concave. However, in many problems arising in quantum information theory, we are interested in minimizing quasi-entropies composed with positive linear maps, i.e., functions of the form
\begin{equation}\label{eqn:quasi-entropy-N}
    S_g^{\mathcal{N}_1, \mathcal{N}_2}(X) = S_g\divx{\mathcal{N}_1(X)}{\mathcal{N}_2(X)},
\end{equation}
defined over $\mathbb{H}^n_{++}$, for positive linear operators $\mathcal{N}_1:\mathbb{H}^n\rightarrow\mathbb{H}^m$ and $\mathcal{N}_2:\mathbb{H}^n\rightarrow\mathbb{H}^m$. Often, these linear operators map to singular matrices, in which some care is needed to ensure the function is well-defined. The following set of assumptions outlines all possible scenarios when this function is well-defined when using the limit~\eqref{eqn:quasi-entropy-limit}, and closely mirrors the results from~\cite[Theorem B.3]{fawzi2023optimal}.
\begin{assump}\label{assump:well-defined-N}
    Consider a function $g:(0, \infty)\rightarrow\mathbb{R}$ and positive linear maps $\mathcal{N}_1:\mathbb{H}^n\rightarrow\mathbb{H}^m$ and $\mathcal{N}_2:\mathbb{H}^n\rightarrow\mathbb{H}^m$. Let $\hat{g}(x)\coloneqq xg(1/x)$ denote the transpose of $g$, and let $g(0^+)\coloneqq\lim_{x\rightarrow0}g(x)$ and $\hat{g}(0^+)\coloneqq\lim_{x\rightarrow0}\hat{g}(x)$. Assume that any one of the following conditions is satisfied
    \begin{enumerate}[label=(\roman*), ref=\ref{assump:well-defined-N}(\roman*),leftmargin=25pt]
        \item $\mathcal{N}_1(X) \ll \mathcal{N}_2(X)$ for all $X\in\mathbb{H}^n_{++}$ and $\hat{g}(0^+)>\infty$, or \label{assump:well-defined-N-i}
        \item $\mathcal{N}_2(X) \ll \mathcal{N}_1(X)$ for all $X\in\mathbb{H}^n_{++}$ and $g(0^+)>\infty$, or \label{assump:well-defined-N-ii}
        \item $\ker \mathcal{N}_1(X) = \ker \mathcal{N}_2(X)$ for all $X\in\mathbb{H}^n_{++}$, or \label{assump:well-defined-N-iii}
        \item $\hat{g}(0^+)>\infty$ and $g(0^+)>\infty$. \label{assump:well-defined-N-iv}
    \end{enumerate}
\end{assump}

Our main theorem extends the result in~\cite{fawzi2023optimal} by giving logarithmically homogeneous self-concordant barriers for the epigraph of quasi-entropy when composed with positive linear maps, which possibly map to singular matrices.

\diff{Before stating our main theorem, we first recall the definition of logarithmically homogeneous self-concordant barriers.
Let $\mathbb{V}$ be a finite-dimensional real vector space, and let $F:\domain F \subset \mathbb{V} \rightarrow \mathbb{R}$ be a closed, convex, $C^3$ function with open domain $\domain F \subset \mathbb{V}$. We say that $F$ is self-concordant if 
\begin{equation}
    \abs{ \grad^3 F(x)[h, h, h] } \leq 2 ( \grad^2 F(x)[h, h] )^{3/2},
\end{equation}
for all $x\in\domain F$ and $h\in\mathbb{V}$. If, additionally, we have
\begin{equation}
    2\grad F(x)[h] - \grad^2 F(x)[h, h] \leq \nu,
\end{equation}
for all $x\in\domain F$ and $h\in\mathbb{V}$, then we say \diff{that} $F$ is a $\nu$-self-concordant barrier for the set $\closure\domain F$. \diff{Separately, if $\domain F$ is a convex cone}, then we say that $F$ is $\nu$-logarithmically homogeneous if there is a constant $\nu\geq1$ such that
\begin{equation}\label{eqn:logarithmic-homogeneity}
    F(\tau x) = F(x) - \nu \log(\tau),
\end{equation}
for all $x\in\interior\domain F$ and $\tau>0$. Note that if $F$ is \diff{both self-concordant and $\nu$-logarithmically homogeneous}, then it must also be a $\nu$-self-concordant barrier~\cite[Lemma 5.4.3]{nesterov2018lectures}.}

\diff{We are now ready to state our main theorem.}
\begin{theorem}\label{thm:main}
    Let $g:(0, \infty)\rightarrow\mathbb{R}$ be an operator concave function, and let $P_g$ be its noncommutative perspective. Let $\mathcal{N}_1:\mathbb{H}^n\rightarrow\mathbb{H}^m$ and $\mathcal{N}_2:\mathbb{H}^n\rightarrow\mathbb{H}^m$ be positive linear operators. If Assumption~\ref{assump:well-defined-N} is satisfied, then the function
    \begin{equation}\label{eqn:thm-main-barrier}
        (t, X) \mapsto -\log(S_g^{\mathcal{N}_1, \mathcal{N}_2}(X) - t) - \log\det(X),
    \end{equation}
    defined on $\mathbb{R}\times\mathbb{H}^n_{++}$ is an $(n+1)$-logarithmically homogeneous self-concordant barrier for the set
    \begin{equation}
        \closure\hypograph S_g^{\mathcal{N}_1, \mathcal{N}_2} = \closure \{ (t, X)\in\mathbb{R}\times\mathbb{H}^n_{++} : S_g^{\mathcal{N}_1, \mathcal{N}_2}(X) \geq t \},
    \end{equation}
    where $S_g^{\mathcal{N}_1, \mathcal{N}_2}:\mathbb{H}_{++}^n\rightarrow\mathbb{R}$ is defined in~\eqref{eqn:quasi-entropy-N}. Moreover, this barrier is optimal in the sense that any self-concordant barrier for $\closure\hypograph S_g^{\mathcal{N}_1, \mathcal{N}_2}$ has parameter at least $n+1$.
\end{theorem}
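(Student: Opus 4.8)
The plan is to separate the two claims---that the function in~\eqref{eqn:thm-main-barrier} is an $(n+1)$-logarithmically homogeneous self-concordant barrier, and that $n+1$ is optimal---and to reduce the first claim to the self-concordance inequality alone. I would begin by disposing of logarithmic homogeneity and the value of the parameter by direct computation. Since $P_g$ is positively homogeneous of degree one and $\Psi$ is linear, the quasi-entropy $S_g$ is positively homogeneous of degree one, and because $\mathcal{N}_1,\mathcal{N}_2$ are linear, so is $S_g^{\mathcal{N}_1,\mathcal{N}_2}$. Hence $S_g^{\mathcal{N}_1,\mathcal{N}_2}(\tau X)-\tau t=\tau(S_g^{\mathcal{N}_1,\mathcal{N}_2}(X)-t)$ and $-\log\det(\tau X)=-\log\det X-n\log\tau$, so the barrier satisfies~\eqref{eqn:logarithmic-homogeneity} with $\nu=n+1$. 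By~\cite[Lemma 5.4.3]{nesterov2018lectures} it then only remains to prove that the function is self-concordant; the parameter $n+1$ comes for free, so I need not track compatibility constants carefully in order to recover it.

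The core step is the self-concordance (third-derivative) bound, which I would first establish under the assumption that $\mathcal{N}_1,\mathcal{N}_2$ map $\mathbb{H}^n_{++}$ into $\mathbb{H}^m_{++}$ and then extend to singular images. For the nonsingular case I would use an epigraph/compatibility construction (as in Nesterov--Nemirovski): if a concave function $\phi$ on $\interior Q$ is compatible with a self-concordant barrier $\Phi$ for a cone $Q$, then $(t,x)\mapsto -\log(\phi(x)-t)+\Phi(x)$ is self-concordant. Taking $Q=\mathbb{H}^n_+$, $\Phi=-\log\det X$, and $\phi=S_g^{\mathcal{N}_1,\mathcal{N}_2}$ reduces everything to a single compatibility estimate: the third derivative of $-S_g^{\mathcal{N}_1,\mathcal{N}_2}$ at $X$ must be controlled by its second derivative times the local norm $\|h\|_X=\|X^{-1/2}hX^{-1/2}\|_F$ induced by $-\log\det$. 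I would start from the compatibility of $S_g$ with $-\log\det U-\log\det V$ established in~\cite{fawzi2023optimal}, push the derivatives through the linear maps by the chain rule (with $U=\mathcal{N}_1(X)$, $V=\mathcal{N}_2(X)$, and directions $\mathcal{N}_1(h),\mathcal{N}_2(h)$), and then bound the resulting combined local norm $\|\mathcal{N}_1(X)^{-1/2}\mathcal{N}_1(h)\mathcal{N}_1(X)^{-1/2}\|_F^2+\|\mathcal{N}_2(X)^{-1/2}\mathcal{N}_2(h)\mathcal{N}_2(X)^{-1/2}\|_F^2$ by $\|X^{-1/2}hX^{-1/2}\|_F^2$.

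This last monotonicity-type inequality---that the Fisher-information quadratic form does not increase when a state is pushed through a positive linear map, so that a single $-\log\det X$ term dominates the two image log-determinant terms---is where positivity of $\mathcal{N}_1,\mathcal{N}_2$ is essential (plausibly via a Kadison--Schwarz / operator-convexity argument for the matrix inverse), and I expect it to be the main obstacle. The delicate point is to obtain it with the \emph{exact} constant that lets a single coefficient-one log-determinant suffice, since this is precisely what removes the redundant $-\log\det\mathcal{N}_i(X)$ terms and lowers the parameter from roughly $2m+1$ to $n+1$. Once the nonsingular case is in hand, I would treat singular images by a limiting argument: regularize using the positive linear maps $\mathcal{N}_i^\varepsilon(X)=\mathcal{N}_i(X)+\varepsilon\tr[X]\mathbb{I}$, which have nonsingular images, apply the nonsingular case to obtain self-concordance of the approximants with parameter $n+1$, and let $\varepsilon\downarrow 0$. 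Assumption~\ref{assump:well-defined-N} is exactly what guarantees, via the limit~\eqref{eqn:quasi-entropy-limit}, that $S_g^{\mathcal{N}_1^\varepsilon,\mathcal{N}_2^\varepsilon}\to S_g^{\mathcal{N}_1,\mathcal{N}_2}$ pointwise on $\mathbb{H}^n_{++}$, so the approximating barriers converge locally and self-concordance, being closed under such convergence, passes to the limit.

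Finally, for optimality I would show that any self-concordant barrier for $\closure\hypograph S_g^{\mathcal{N}_1,\mathcal{N}_2}$ has parameter at least $n+1$, adapting the lower-bound argument of~\cite{fawzi2023optimal}. The idea is that the boundary of this set contains, along the face where $X$ degenerates towards $\partial\mathbb{H}^n_+$, a copy of the positive semidefinite cone $\mathbb{H}^n_+$, whose minimal barrier parameter is $n$, while the hypographical inequality in the $t$-direction contributes one further unit. Restricting any candidate barrier appropriately and using monotonicity of the barrier parameter under such restrictions then yields $\nu\geq n+1$, matching the barrier constructed above and establishing its optimality.
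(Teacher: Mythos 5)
Your overall architecture is close to the paper's: verify $(n+1)$-logarithmic homogeneity directly so that only self-concordance remains (via~\cite[Lemma 5.4.3]{nesterov2018lectures}), reduce self-concordance to a compatibility estimate for $S_g^{\mathcal{N}_1,\mathcal{N}_2}$ relative to $\mathbb{H}^n_+$ starting from the compatibility of $P_g$ proved in~\cite{fawzi2023optimal}, treat singular images separately, and cite the lower-bound machinery of~\cite{fawzi2023optimal} for optimality. However, the step you yourself flag as the main obstacle rests on a false inequality. The Frobenius local norm is \emph{not} monotone under positive linear maps: for $\mathcal{N}:\mathbb{H}^1\rightarrow\mathbb{H}^m$, $\mathcal{N}(x)=x\,\mathbb{I}_m$, one has $\norm{\mathcal{N}(X)^{-1/2}\mathcal{N}(h)\mathcal{N}(X)^{-1/2}}_F^2 = m\,(h/x)^2 = m\,\norm{X^{-1/2}hX^{-1/2}}_F^2$, so the claimed bound fails by a factor of $m$ already for a single map (equivalently, $-\log\det(\mathcal{N}(X))$ is an $m$-, not a $1$-, self-concordant barrier here, which is exactly why the na\"ive modelling costs $2m$ extra units of barrier parameter). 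There is no Kadison--Schwarz-type rescue: the quadratic form $\tr[(X^{-1}h)^2]$ is not one of the Petz monotone metrics. What \emph{is} preserved by positivity is the order condition $X\pm h\succeq0 \Rightarrow \mathcal{N}(X)\pm\mathcal{N}(h)\succeq0$, and this is why the paper works with compatibility with respect to a \emph{domain} (Definition~\ref{defn:compatibility}, i.e.,~\cite[Definition 5.1.1]{nesterov1994interior}) rather than with respect to a barrier's local norm; see Remark~\ref{rem:easy-inverse-hessian}'s neighbouring Remark~3.3 in Section~\ref{subsec:concordant-prelim}, which flags precisely this distinction. In that framework the affine composition rule (Lemma~\ref{lem:compatibility-composition-ii}) is essentially free and no local-norm monotonicity is needed; the machinery of~\cite[Theorem 5.4.4]{nesterov2018lectures} (Lemma~\ref{lem:compatibility-to-barrier}) then supplies the single coefficient-one $-\log\det(X)$ term with the correct constant.

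Your treatment of the singular case also diverges from the paper and carries its own burden. The paper does not regularize and pass to a limit; it performs a facial reduction (Lemma~\ref{lem:fr}): since $\mathcal{N}_k(X)=U_k\hat{X}U_k^\dag$ with $\hat{X}\succ0$ whenever $X\succ0$, one can rewrite $S_g^{\mathcal{N}_1,\mathcal{N}_2}(X)=h(X)+\tr[XC]$ where $h$ is the compatible quasi-entropy composed with the reduced affine map $X\mapsto(U_1^\dag\mathcal{N}_1(X)U_1,\,U_2^\dag\mathcal{N}_2(X)U_2)$ whose image meets $\mathbb{H}^{r_1}_{++}\times\mathbb{H}^{r_2}_{++}$, and the linear term $\tr[XC]$ (whose existence is exactly what Assumption~\ref{assump:well-defined-N} guarantees) does not affect compatibility. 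Your limiting argument via $\mathcal{N}_i^\varepsilon$ is not implausible, but you would need to justify that self-concordance passes to the limit when the domains of the approximating barriers vary with $\varepsilon$, and to account for these linear correction terms appearing in the limit; in any case it cannot repair the proof, since the nonsingular case it relies on is the part that is broken.
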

\begin{proof}
    See Section~\ref{subsec:main-proof}. 
\end{proof}
We note that when $\mathcal{N}_1$ and $\mathcal{N}_2$ both preserve positive definiteness, i.e., $\mathcal{N}_1(X)\succ0$ and $\mathcal{N}_2(X)\succ0$ for all $X\succ0$, then this result is a straightforward consequence of~\cite[Lemma 5.1.3(iii)]{nesterov1994interior}. However, it is nontrivial to extend this result to when $\mathcal{N}_1$ and $\mathcal{N}_2$ map to a face of the positive semidefinite cone, which will be the main effort in our proof for this theorem.

\diff{A straightforward consequence of Corollary~\ref{cor:main} is the following self-concordant barrier for the quantum conditional entropy function introduced in Example~\ref{exmp:structrure}.}
\begin{corollary}\label{cor:qce-cone}
    The function
    \begin{equation}\label{eqn:qce-barrier}
        (t,X) \mapsto -\log(t - S\divx{X}{\mathbb{I}\otimes\tr_1^{n,m}(X)}) - \log\det(X),
    \end{equation}
    defined on $\mathbb{R}\times\mathbb{H}^{nm}_{++}$ is an $(nm+1)$-logarithmically homogeneous self-concordant barrier of the quantum conditional entropy cone
    \begin{equation}
        \mathcal{K}_{\textnormal{qce}} = \closure \{ (t, X)\in\mathbb{R}\times\mathbb{H}^{nm}_{++} : t \geq S\divx{X}{\mathbb{I}\otimes\tr_1^{n,m}(X)} \}.
    \end{equation}
    Moreover, this barrier is optimal in the sense that any self-concordant barrier for $\mathcal{K}_{\textnormal{qce}}$ has parameter at least $nm+1$.
\end{corollary}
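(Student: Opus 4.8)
The plan is to recognize $\mathcal{K}_{\textnormal{qce}}$ as a special case of the set $\mathcal{K}_{\textnormal{qre}}^{\mathcal{G},\mathcal{H}}$ appearing in Corollary~\ref{cor:main}, and then simply invoke that corollary. Specifically, I would apply Corollary~\ref{cor:main} with the underlying space $\mathbb{H}^n$ there replaced by $\mathbb{H}^{nm}$, and with the two positive linear maps chosen as $\mathcal{G} = \operatorname{id}_{\mathbb{H}^{nm}}$, i.e. $\mathcal{G}(X) = X$, and $\mathcal{H}(X) = \mathbb{I}_n \otimes \tr_1^{n,m}(X)$. With these choices the relative entropy $S\divx{\mathcal{G}(X)}{\mathcal{H}(X)}$ reduces exactly to $S\divx{X}{\mathbb{I}\otimes\tr_1^{n,m}(X)}$, so the set $\mathcal{K}_{\textnormal{qre}}^{\mathcal{G},\mathcal{H}}$ coincides with $\mathcal{K}_{\textnormal{qce}}$ and the barrier~\eqref{eqn:main-barrier} coincides verbatim with~\eqref{eqn:qce-barrier}. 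Since the input dimension is $nm$, the corollary delivers the claimed barrier parameter $nm+1$, and the optimality statement transfers directly.

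The only genuine verification is checking the hypotheses of Corollary~\ref{cor:main}. First, that $\mathcal{G}$ and $\mathcal{H}$ are positive linear maps: $\mathcal{G}$ is the identity and hence trivially positive, while for $\mathcal{H}$ I would use that the partial trace $\tr_1^{n,m}$ preserves positive semidefiniteness (it is the adjoint of $Y\mapsto\mathbb{I}_n\otimes Y$, and for $X\succeq0$ and any $v$ one has $v^\dagger \tr_1^{n,m}(X) v = \sum_i (e_i\otimes v)^\dagger X (e_i\otimes v)\geq0$), and that the Kronecker product of two positive semidefinite matrices is positive semidefinite, so $\mathbb{I}_n\otimes\tr_1^{n,m}(X)\succeq0$. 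Second, that $\mathcal{G}(X)\ll\mathcal{H}(X)$ for all $X\in\mathbb{H}^{nm}_{++}$: since $X\succ0$ we have $\ker(\mathcal{G}(X))=\ker(X)=\{0\}$, so the condition $\ker(\mathcal{H}(X))\subseteq\ker(\mathcal{G}(X))$ holds automatically. In fact the same computation as above shows $\tr_1^{n,m}(X)\succ0$ when $X\succ0$, hence $\mathcal{H}(X)\succ0$, so both maps even preserve positive definiteness and the barrier is well defined on all of $\mathbb{R}\times\mathbb{H}^{nm}_{++}$.

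I expect essentially no real obstacle here: all the analytic work (self-concordance, logarithmic homogeneity, and optimality, including the delicate case of maps into singular matrices) is already contained in Theorem~\ref{thm:main} and its specialization Corollary~\ref{cor:main}, so the proof reduces to correctly matching the maps $\mathcal{G},\mathcal{H}$ and confirming positivity. The one point deserving a line of care is the positivity of $\mathcal{H}$, which I would justify via the partial-trace and Kronecker-product facts recorded in the preliminaries; everything else is bookkeeping. I would then conclude by stating that~\eqref{eqn:qce-barrier} is an $(nm+1)$-logarithmically homogeneous self-concordant barrier for $\mathcal{K}_{\textnormal{qce}}$ and that optimality of the parameter $nm+1$ is inherited directly from the optimality clause of Corollary~\ref{cor:main}.
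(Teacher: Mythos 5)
Your proposal is correct and is exactly the paper's route: the paper presents this corollary as a direct specialization of Corollary~\ref{cor:main} with $\mathcal{G}=\operatorname{id}$ and $\mathcal{H}(X)=\mathbb{I}\otimes\tr_1^{n,m}(X)$, with no further argument needed beyond checking positivity and the kernel condition. One sentence is stated backwards — since $\ker(\mathcal{G}(X))=\{0\}$, the containment $\ker(\mathcal{H}(X))\subseteq\ker(\mathcal{G}(X))$ is \emph{not} automatic but actually requires $\mathcal{H}(X)\succ0$ — yet you supply precisely that fact in the next sentence, so the verification stands.
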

\diff{Later in Section~\ref{sec:exp}, we present additional extensions of Corollary~\ref{cor:main} which demonstrate how it can be applied to various applications arising in quantum information theory.}

\subsection{\diff{Compatibility} preliminaries} \label{subsec:concordant-prelim}

\diff{First}, we introduce a generalized notion of concavity with respect to the positive semidefinite cone, followed by the concept of compatibility of a function with respect to a domain.
\begin{definition}
    Let $\mathbb{V}$ be a finite-dimensional real vector space. A function $f:\domain f \subset \mathbb{V} \rightarrow \mathbb{H}^n$ with convex domain $\domain f \subset \mathbb{V}$ is $\mathbb{H}^n_+$-concave if for all $X,Y\in\domain f$ and $\lambda\in[0, 1]$, we have
    \begin{equation*}
        f(\lambda X + (1-\lambda) Y) \succeq \lambda f(X) + (1 - \lambda) f(Y).
    \end{equation*}
\end{definition}
\begin{definition}[{\cite[Definition 5.1.1]{nesterov1994interior}}]\label{defn:compatibility}
    Let $\mathbb{V}$ be a finite-dimensional real vector space, and let $f:\domain f \subset \mathbb{V} \rightarrow \mathbb{H}^n$ be a $C^3$, $\mathbb{H}^n_+$-concave function with open domain $\domain f \subset \mathbb{V}$. Then $f$ is $\beta$-compatible with the domain $\closure\domain f$ if there is a constant $\beta\geq0$ such that 
    \begin{equation*}
        \grad^3 f(x)[h, h, h] \preceq -3\beta\grad^2 f(x)[h, h],
    \end{equation*}
    for all $x\in\domain f$ and $h\in\mathbb{V}$ such that $x\pm h\in\closure\domain f$.
\end{definition}
\begin{remark}
    The proofs and results in~\cite{fawzi2023optimal} use an alternative definition of compatibility with respect to a barrier of a domain~\cite[Definition 5.1.2]{nesterov1994interior}. Although compatibility with respect to a domain is a stronger condition than compatibility with respect to a barrier of a domain~\cite[Remark 5.1.2]{nesterov1994interior}, all of the results from~\cite{fawzi2023optimal} that we use in this paper can be modified to use either definition for compatibility by using nearly identical arguments.
\end{remark}
We also introduce two important composition rules for compatibility with linear and affine maps from~\cite[Proposition 3.4]{fawzi2023optimal} and~\cite[Lemma 5.1.3(iii)]{nesterov1994interior}.
\begin{lemma}\label{lem:compatibility-composition}
    Let $\mathbb{V}$ and $\mathbb{V}'$ be finite-dimensional real vector spaces, let $f:\domain f \subset \mathbb{V} \rightarrow \mathbb{H}^n$ be a $C^3$, $\mathbb{H}^n_+$-concave function with open domain $\domain f \subset \mathbb{V}$ which is $\beta$-compatible with the domain $\closure\domain f$.
    \begin{enumerate}[label=(\roman*), ref=\ref{lem:compatibility-composition}(\roman*),leftmargin=25pt]
        \item \label{lem:compatibility-composition-i}Let $\mathcal{A}:\mathbb{H}^n\rightarrow\mathbb{H}^m$ be a positive linear map. Then $\mathcal{A}\circ f$ is $\beta$-compatible with the domain $\closure\domain f$. 
        \item \label{lem:compatibility-composition-ii}Let $\mathcal{B}:\mathbb{V}'\rightarrow\mathbb{V}$ be an affine map satisfying $\image\mathcal{B}\cap\domain f\neq\varnothing$. Then $f\circ\mathcal{B}$ is $\beta$-compatible with the domain
        \begin{equation*}
            \mathcal{B}^{-1}(\closure\domain f) \coloneqq \{ x\in\mathbb{V}' : \mathcal{B}(x) \in \closure\domain f \}.
        \end{equation*}
    \end{enumerate}
\end{lemma}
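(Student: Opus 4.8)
The plan is to verify, for each composition, the three defining ingredients of $\beta$-compatibility from Definition~\ref{defn:compatibility}: that the composed function is $C^3$, that it is concave with respect to the relevant positive semidefinite cone, and that it satisfies the third-derivative inequality $\grad^3(\wc)[h,h,h]\preceq-3\beta\grad^2(\wc)[h,h]$ on the appropriate domain. Two elementary facts drive the argument. First, a positive linear map is order-preserving for the Loewner order: $A\succeq B$ implies $\mathcal{A}(A)\succeq\mathcal{A}(B)$, since $A-B\succeq0$ gives $\mathcal{A}(A-B)\succeq0$. Second, the chain rule for compositions with linear and affine maps collapses a $k$-th directional derivative of the composition into a single $k$-th derivative of $f$ evaluated at the transformed point in the transformed direction.

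For part~\ref{lem:compatibility-composition-i}, smoothness is immediate, and $\mathbb{H}^m_+$-concavity of $\mathcal{A}\circ f$ follows from $\mathbb{H}^n_+$-concavity of $f$ by applying the order-preserving map $\mathcal{A}$ to the concavity inequality and using linearity on the right-hand side. For the key third-derivative bound, linearity lets $\mathcal{A}$ commute through each derivative, so $\grad^k(\mathcal{A}\circ f)(x)[h,\dots,h]=\mathcal{A}(\grad^k f(x)[h,\dots,h])$ for $k=2,3$. Compatibility of $f$ states that $-3\beta\grad^2 f(x)[h,h]-\grad^3 f(x)[h,h,h]\succeq0$; applying $\mathcal{A}$ to this positive semidefinite matrix yields exactly $\grad^3(\mathcal{A}\circ f)(x)[h,h,h]\preceq-3\beta\grad^2(\mathcal{A}\circ f)(x)[h,h]$. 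Since $\mathcal{A}$ is defined on all of $\mathbb{H}^n$, the domain is unchanged, so the admissible pairs $(x,h)$ with $x\pm h\in\closure\domain f$ coincide with those for $f$, and the bound holds with the same $\beta$.

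For part~\ref{lem:compatibility-composition-ii}, write $\mathcal{B}=\mathcal{L}+b$ with $\mathcal{L}$ linear and $b$ constant. Smoothness is clear, and $\mathbb{H}^n_+$-concavity of $f\circ\mathcal{B}$ follows since $\mathcal{B}$ preserves convex combinations and $f$ is concave. The chain rule gives $\grad^k(f\circ\mathcal{B})(x)[h,\dots,h]=\grad^k f(\mathcal{B}(x))[\mathcal{L}h,\dots,\mathcal{L}h]$, with $b$ dropping out and only the linear part $\mathcal{L}$ acting on the direction. Setting $y=\mathcal{B}(x)$ and $k=\mathcal{L}h$, the conditions $x\in\domain(f\circ\mathcal{B})$ and $x\pm h\in\mathcal{B}^{-1}(\closure\domain f)$ translate precisely into $y\in\domain f$ and $y\pm k\in\closure\domain f$, so invoking the compatibility inequality for $f$ at $(y,k)$ and reading it back through the chain rule yields the desired bound for $f\circ\mathcal{B}$ with the same $\beta$.

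I expect the one genuinely delicate point to be the domain bookkeeping in part~\ref{lem:compatibility-composition-ii}: one must confirm that $f\circ\mathcal{B}$ is compatible with the closed set $\mathcal{B}^{-1}(\closure\domain f)$ named in the statement, i.e.\ that this set's interior is exactly the open domain $\domain(f\circ\mathcal{B})=\mathcal{B}^{-1}(\domain f)$. This is where the hypothesis $\image\mathcal{B}\cap\domain f\neq\varnothing$ enters: it ensures the preimage has nonempty interior and, via the standard convex-analysis identity $\interior\mathcal{B}^{-1}(\closure\domain f)=\mathcal{B}^{-1}(\interior\closure\domain f)$ for affine preimages meeting the interior (together with $\interior\closure\domain f=\domain f$ for the open convex set $\domain f$), that the two descriptions of the domain agree. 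Part~\ref{lem:compatibility-composition-i}, by contrast, is essentially pure bookkeeping once order-preservation of $\mathcal{A}$ is noted.
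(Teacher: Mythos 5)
Your proof is correct; the paper itself does not prove this lemma but imports it directly from \cite[Proposition 3.4]{fawzi2023optimal} (part (i)) and \cite[Lemma 5.1.3(iii)]{nesterov1994interior} (part (ii)), and your argument — pushing the positive map $\mathcal{A}$ through the derivatives and through the Loewner inequality for (i), and the affine chain rule with the translation $y=\mathcal{B}(x)$, $k=\mathcal{L}h$ for (ii) — is exactly the standard argument those references use. Your attention to the domain bookkeeping in (ii), i.e.\ that $\image\mathcal{B}\cap\domain f\neq\varnothing$ guarantees $\mathcal{B}^{-1}(\closure\domain f)$ is the closure of the open domain $\mathcal{B}^{-1}(\domain f)$, is the one genuinely delicate point and you handle it correctly.
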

Compatibility is useful as it allows for a convenient method to construct self-concordant barriers of epigraphs of functions. This is summarized by~\cite[Theorem 3.3]{fawzi2023optimal}, which is a special case of~\cite[Theorem 5.4.4]{nesterov2018lectures}, and which we restate here for convenience.
\begin{lemma}\cite[Theorem 5.4.4]{nesterov2018lectures}\label{lem:compatibility-to-barrier}
    Let $\mathbb{V}$ be a finite-dimensional real vector space, and let $f:\domain f \subset \mathbb{V} \rightarrow \mathbb{H}^n$ be a $C^3$, $\mathbb{H}^n_+$-concave function with open domain $\domain f \subset \mathbb{V}$. If $F$ is a $\nu$-self-concordant barrier for $\closure\domain f$, and $f$ is $\beta$-compatible with $\closure\domain f$, then $(t, x)\mapsto -\log\det(f(x) - t) + \beta^3F(x)$ is an $(m+\beta^3\nu)$-self-concordant barrier for $\closure\hypograph f$.
\end{lemma}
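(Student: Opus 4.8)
The plan is to follow the classical scheme for barriers of epigraphs of compatible maps, treating $\Phi(t,x) \coloneqq -\log\det(f(x)-t) + \beta^3 F(x)$, whose domain $\{(t,x) : x\in\domain f,\ f(x)-t\succ0\}$ has closure $\closure\hypograph f$. Fix $(t,x)$ in the domain and a direction $(s,h)$, and set $Z = f(x)-t\succ0$ together with the directional derivatives of $\tau\mapsto f(x+\tau h)-(t+\tau s)$, namely $U = \grad f(x)[h]-s$, $V = \grad^2 f(x)[h,h]$, and $W = \grad^3 f(x)[h,h,h]$; here $\mathbb{H}^n_+$-concavity of $f$ gives $V\preceq0$. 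The first task is the standard differentiation of $g(\tau) \coloneqq -\log\det(Z(\tau))$, which yields $g'' = \tr[Z^{-1}UZ^{-1}U]-\tr[Z^{-1}V]$ and $g''' = -2\tr[(Z^{-1}U)^3] + 3\tr[Z^{-1}UZ^{-1}V] - \tr[Z^{-1}W]$. Introducing the Hermitian matrix $P = Z^{-1/2}UZ^{-1/2}$ and the positive semidefinite matrix $Q = Z^{-1/2}(-V)Z^{-1/2}$, and writing $a = \sqrt{\tr[P^2]}$ and $b = \tr[Q]$, these collapse to $g'' = a^2+b$ and $g''' = -2\tr[P^3]-3\tr[PQ]-\tr[Z^{-1}W]$.

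The crux of the argument, and the step I expect to be the main obstacle, is to control the term $\tr[Z^{-1}W]$ carrying $\grad^3 f$. Definition~\ref{defn:compatibility} only supplies $W\preceq -3\beta V$ for directions with $x\pm h\in\closure\domain f$, so I would first upgrade this to an estimate valid for every $h$. Since $F$ is a self-concordant barrier for $\closure\domain f$, its open unit Dikin ellipsoid is contained in $\domain f$; hence $x\pm\lambda h\in\domain f$ whenever $\lambda\sqrt{\grad^2 F(x)[h,h]}<1$. Applying compatibility to the admissible directions $\pm\lambda h$, using that $V$ and $W$ are homogeneous of degrees $2$ and $3$ in the direction, and letting $\lambda$ increase to $1/\sqrt{F''}$ (with $F'' \coloneqq \grad^2 F(x)[h,h]$) produces the two-sided operator bound $-3\beta\sqrt{F''}\,(-V)\preceq W\preceq3\beta\sqrt{F''}\,(-V)$. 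Conjugating by $Z^{-1/2}$ and taking traces then gives $\abs{\tr[Z^{-1}W]}\le3\beta\sqrt{F''}\,b$.

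With this estimate the self-concordance inequality $\abs{\phi'''}\le2(\phi'')^{3/2}$ for $\phi(\tau) \coloneqq \Phi$ restricted to the line reduces to scalar algebra. Using $\abs{\tr[P^3]}\le a^3$ and $\abs{\tr[PQ]}\le ab$, together with the self-concordance $\abs{F'''}\le2(F'')^{3/2}$ of $F$, I would bound $\abs{\phi'''}\le2a^3+3ab+3\beta\sqrt{F''}\,b+2\beta^3(F'')^{3/2}$, while $\phi'' = a^2+b+\beta^3 F''$. Writing $c = \beta\sqrt{F''}$, so that $2\beta^3(F'')^{3/2} = 2c^3$ and, after assuming $\beta\ge1$ (harmless for establishing self-concordance, since $\beta$-compatibility implies $\beta'$-compatibility for all $\beta'\ge\beta$), $\beta^3 F''\ge c^2$, the claim follows from the elementary inequality $2u^3+2v^3+3w(u+v)\le2(u^2+v^2+w)^{3/2}$ for $u,v,w\ge0$, applied with $u = a$, $v = c$, $w = b$. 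I would prove this last inequality by minimizing the difference of its two sides over $w\ge0$: the minimum occurs at $w = 2uv$, where it equals $2(u+v)^3-2u^3-2v^3-6uv(u+v) = 0$.

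It remains to identify the barrier parameter, i.e.\ to verify $(\grad\Phi(t,x)[(s,h)])^2\le\nu'\,\grad^2\Phi(t,x)[(s,h),(s,h)]$ with $\nu' = m+\beta^3\nu$. From $\grad(-\log\det(f(x)-t))[(s,h)] = -\tr[P]$ and Cauchy--Schwarz, $(\tr[P])^2\le m\,\tr[P^2]\le m(a^2+b)$, so the log-determinant term satisfies the defining barrier inequality with parameter equal to the order (call it $m$) of the matrices in the range of $f$; since $F$ is a $\nu$-self-concordant barrier, $\beta^3 F$ is a $\beta^3\nu$-barrier. Finally, I would combine the two using the standard fact that barrier parameters add under summation, which follows from an AM--GM estimate of the cross term. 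Self-concordance of $\Phi$ together with this bound is exactly the assertion that $\Phi$ is an $(m+\beta^3\nu)$-self-concordant barrier for $\closure\hypograph f$.
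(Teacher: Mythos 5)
The paper never proves this lemma --- it is quoted verbatim from Nesterov (Theorem 5.4.4; see also Nesterov--Nemirovskii, Proposition 5.1.7) --- so the right comparison is with that classical argument, and your proposal is essentially a correct reconstruction of it. The restriction to a line, the formulas $g''=\tr[P^2]+\tr[Q]$ and $g'''=-2\tr[P^3]-3\tr[PQ]-\tr[Z^{-1}W]$, the estimates $\abs{\tr[P^3]}\leq a^3$ and $\abs{\tr[PQ]}\leq ab$, and the additivity of barrier parameters via Cauchy--Schwarz are all correct. Your Dikin-ellipsoid step, upgrading the inequality of Definition~\ref{defn:compatibility} (valid only for $x\pm h\in\closure\domain f$) to the scaled bound $\abs{\tr[Z^{-1}W]}\leq 3\beta\sqrt{\grad^2F(x)[h,h]}\,b$ for arbitrary $h$, is exactly the bridge between compatibility with a domain and compatibility with a barrier alluded to in the remark following Definition~\ref{defn:compatibility}; it is legitimate because $\domain f$ is open and convex, so $\interior\closure\domain f=\domain f$ and the open Dikin ellipsoid of $F$ lies in $\domain f$. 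Your elementary inequality $2u^3+2v^3+3w(u+v)\leq 2(u^2+v^2+w)^{3/2}$ with minimizer $w=2uv$ also checks out.

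One caveat is worth flagging: your reduction to $\beta\geq1$ is not ``harmless'' as stated, because $\beta$ appears in the function $\Phi$ itself --- replacing $\beta$ by $\beta'\geq\beta$ proves self-concordance of $-\log\det(f(x)-t)+\beta'^3F(x)$, which for $\beta<1$ is not the function in the lemma. What your argument actually establishes for general $\beta\geq0$ is the statement with $\max\{1,\beta\}^3$ in place of $\beta^3$, which is precisely how Nesterov--Nemirovskii phrase the result; the lemma as restated in the paper should therefore be read with $\beta\geq1$ implicit. Since every invocation in the paper has $\beta=1$ (via Lemma~\ref{lem:pg-compat} and Corollary~\ref{cor:main-compatibility}), nothing downstream is affected. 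Two further points are left tacit but routine: the verification that $\Phi$ is closed (blows up on the boundary of its domain, so that it is genuinely a barrier for $\closure\hypograph f$), and the observation that the parameter $m$ in the statement is the order of the matrices in the range of $f$, which your Cauchy--Schwarz bound $(\tr[P])^2\leq m\,\tr[P^2]$ uses correctly.
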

Finally, we recall one of the results from~\cite{fawzi2023optimal} which establishes compatibility of the non-commutative perspective.
\begin{lemma}[{\cite[Proposition 3.5]{fawzi2023optimal}}]\label{lem:pg-compat}
    Let $g:(0, \infty)\rightarrow\mathbb{R}$ be an operator concave function, and let $P_g$ be its noncommutative perspective. Then $(X, Y)\in \mathbb{H}^{n_1}_{++}\times\mathbb{H}^{n_2}_{++}\mapsto P_g(X\otimes\mathbb{I}, \mathbb{I}\otimes\conj{Y})$ is $1$-compatible with the domain $\mathbb{H}^{n_1}_+\times\mathbb{H}^{n_2}_+$. 
\end{lemma}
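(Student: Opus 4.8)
The plan is to exploit the commutativity of the tensor-embedded arguments together with the integral (Löwner) representation of operator concave functions, so as to reduce the operator inequality of Definition~\ref{defn:compatibility} to a family of elementary, explicitly differentiable atoms. Write $\Phi(X,Y) := P_g(X\otimes\mathbb{I}, \mathbb{I}\otimes\conj{Y})$ and set $A = X\otimes\mathbb{I}$, $B = \mathbb{I}\otimes\conj{Y}$. The key observation is that $A$ and $B$ commute, hence are simultaneously diagonalizable: if $X = U\diag(x)U^\dag$ and $Y = V\diag(y)V^\dag$, then in the basis $\{u_i\otimes\conj{v_j}\}$ the perspective acts entrywise through the scalar perspective $\tilde g(a,b) := a\,g(b/a)$,
\[
    \Phi(X,Y) = \sum_{i,j}\tilde g(x_i, y_j)\,(u_iu_i^\dag)\otimes\conj{(v_jv_j^\dag)}.
\]
Thus $\Phi$ is a bivariate spectral function of the commuting pair $(A,B)$; it is $\mathbb{H}_+$-concave because $g$ is operator concave, this being exactly the joint concavity of the noncommutative perspective~\cite[Proposition 3.10]{hiai2017different}. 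The object to control is therefore the operator-valued Taylor expansion of such a spectral function along the directions $(H\otimes\mathbb{I},\mathbb{I}\otimes\conj{K})$, and the goal is the inequality $\grad^3\Phi(X,Y)[h,h,h]\preceq -3\grad^2\Phi(X,Y)[h,h]$ with $h = (H,K)$, for all $(H,K)$ with $-X\preceq H\preceq X$ and $-Y\preceq K\preceq Y$.

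Next I would reduce to atoms. Compatibility with a fixed constant is preserved under nonnegative combinations and integration: if $\Phi = \int\Phi_s\,d\mu(s)$ with $\mu\geq0$ and each $\Phi_s$ satisfies $\grad^3\Phi_s\preceq-3\grad^2\Phi_s$, then integrating this operator inequality against $\mu$ yields the same bound for $\Phi$. Since $P_g$ depends linearly on $g$, I invoke the integral representation of operator concave functions: every operator concave $g$ on $(0,\infty)$ is an affine function plus a nonnegative superposition of the elementary operator concave kernels $g_s(t) = t/(t+s)$, $s>0$, together with the boundary atoms $-t^{-1}$ and $-t^2$ (the $s\to0$ and $s\to\infty$ limits). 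The affine part of $g$ produces a perspective that is \emph{linear} in $(X,Y)$ — for $g(t)=\alpha+\beta t$ one has $\tilde g(x,y)=\alpha x+\beta y$, so $\Phi = \alpha(X\otimes\mathbb{I})+\beta(\mathbb{I}\otimes\conj{Y})$ — whose second and third derivatives vanish and which is therefore trivially $1$-compatible. A short computation with the commuting arguments then identifies the generic atom with a weighted operator parallel sum, $\Phi_s(X,Y) = (X^{-1}\otimes\mathbb{I} + s\,\mathbb{I}\otimes\conj{Y}^{-1})^{-1}$, and the boundary atoms with the matrix-fractional functions $-X^2\otimes\conj{Y}^{-1}$ and $-X^{-1}\otimes\conj{Y}^2$; all of these are manifestly $\mathbb{H}_+$-concave, so it remains to verify $1$-compatibility for each.

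Establishing the atom inequality is where I expect the main obstacle to lie. Writing $M := X^{-1}\otimes\mathbb{I} + s\,\mathbb{I}\otimes\conj{Y}^{-1}$ so that $\Phi_s = M^{-1}$, the derivatives of $\Phi_s$ along $(H,K)$ follow from repeatedly differentiating the matrix inverse; because $X\mapsto X^{-1}$ and $Y\mapsto Y^{-1}$ have the familiar alternating derivatives, $\grad^2\Phi_s$ and $\grad^3\Phi_s$ become fixed polynomials in $M^{-1}$ and in the operators $\mathcal R := X^{-1}HX^{-1}\otimes\mathbb{I} + s\,\mathbb{I}\otimes\conj{Y}^{-1}\conj{K}\conj{Y}^{-1}$ and its higher analogues. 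The target $\grad^3\Phi_s\preceq-3\grad^2\Phi_s$ is then an operator inequality which, after the congruence by $M^{1/2}$ that preserves the positive semidefinite order, I expect to reduce to a polynomial inequality in the contraction $M^{-1/2}\mathcal R M^{-1/2}$. The genuine difficulty is that the perturbation directions $H\otimes\mathbb{I}$ and $\mathbb{I}\otimes\conj{K}$ do \emph{not} commute with the spectral projections of $X$ and $Y$, so $\mathcal R$ is not diagonal in the joint eigenbasis and the inequality cannot be checked scalar-by-scalar; the domain constraints $-X\preceq H\preceq X$ and $-Y\preceq K\preceq Y$ enter precisely to bound $\mathcal R$ against $M$ and to deliver the sharp constant $\beta = 1$. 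That $1$ is best possible, and that the argument must use the full strength of the constraint, is already visible in the scalar specialization $\Phi_s(x,y) = xy/(y+sx)$, for which $\grad^3\Phi_s = -3\grad^2\Phi_s$ holds with equality at the boundary $|h_1|=x$, $|h_2|=y$ (for instance at $x=y=s=1$, $h_1=h_2=1$). To discharge the operator inequality I would represent $\grad^2\Phi_s$ and $\grad^3\Phi_s$ through the first-, second-, and third-order divided differences of $g_s$ in the joint eigenbasis and verify positivity of the resulting quadratic form using their explicit rational expressions.
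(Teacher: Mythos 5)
Your architecture --- simultaneous diagonalization of the commuting tensor-embedded arguments, linearity of $P_g$ in $g$, the integral representation of operator concave functions as an affine part plus a nonnegative superposition of atoms $t/(t+s)$ with boundary atoms $-t^{-1}$ and $-t^2$, and closure of $1$-compatibility under nonnegative combinations and integration --- is exactly the skeleton of the proof of this result in the cited reference (note the paper itself offers no proof; it imports \cite[Proposition~3.5]{fawzi2023optimal}). The genuine gap is that you never discharge the only nontrivial step: the inequality $\grad^3\Phi_s[h,h,h]\preceq-3\grad^2\Phi_s[h,h]$ for the atoms. Your closing sentence --- represent the derivatives through first-, second-, and third-order divided differences in the joint eigenbasis and ``verify positivity of the resulting quadratic form'' --- is a plan, not an argument, and it runs straight into the obstacle you yourself identify: since $H\otimes\mathbb{I}$ and $\mathbb{I}\otimes\conj{K}$ do not commute with the spectral projections of $X$ and $Y$, the third-order divided-difference tensor cannot be sign-checked entrywise, and you offer no mechanism by which the constraints $-X\preceq H\preceq X$, $-Y\preceq K\preceq Y$ enter that quadratic form to produce the sharp constant $\beta=1$. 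As written, the crux of the lemma is asserted rather than proved.

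The missing idea, and the way \cite{fawzi2023optimal} actually closes it, is to abandon spectral calculus for the atoms and rewrite them as matrix-fractional functions with \emph{affine} data. With $A=X\otimes\mathbb{I}$ and $B=\mathbb{I}\otimes\conj{Y}$, your parallel-sum atom satisfies $\bigl(X^{-1}\otimes\mathbb{I}+s\,\mathbb{I}\otimes\conj{Y}^{-1}\bigr)^{-1}=A-A\bigl(A+s^{-1}B\bigr)^{-1}A$; equivalently, one may use the atoms $-(t-1)^2/(t+s)$, whose noncommutative perspective equals $-(B-A)(B+sA)^{-1}(B-A)$ even for noncommuting arguments. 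Modulo affine terms each atom is thus $-T$ with $T=EF^{-1}E$, where $E,F$ are affine in $(X,Y)$ with constant differentials $E',F'$ along $h=(H,K)$. A short computation then gives the closed forms $\grad^2 T[h,h]=2GF^{-1}G^\dag$ and $\grad^3 T[h,h,h]=-6GF^{-1}F'F^{-1}G^\dag$ with $G=E'-EF^{-1}F'$, so $1$-compatibility of $-T$ is \emph{equivalent} to $GF^{-1}(\pm F')F^{-1}G^\dag\preceq GF^{-1}G^\dag$, i.e.\ to $\pm F'\preceq F$ --- which is verbatim the domain condition: e.g.\ $F'=H\otimes\mathbb{I}+s^{-1}\,\mathbb{I}\otimes\conj{K}\preceq A+s^{-1}B=F$ follows from $X\pm H\succeq0$ and $Y\pm K\succeq0$. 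This computation also subsumes your boundary atoms (take $E=A$, $F=B$ for $-t^{-1}$ and $E=B$, $F=A$ for $-t^2$), handles the noncommuting directions without any divided differences, and makes explicit where $x\pm h\in\closure\domain f$ is used --- precisely the point your proposal leaves open. Incidentally, your scalar sharpness check is consistent with this formula: equality holds exactly when $F'=F$, e.g.\ at $x=y=s=1$, $h_1=h_2=1$.
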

Using this result together with appropriate applications of the composition rules from Lemma~\ref{lem:compatibility-composition} will form the backbone of our proof of Theorem~\ref{thm:main}.

\subsection{Proof of Theorem~\ref{thm:main}}\label{subsec:main-proof}

We begin with a simple lemma which helps us to characterize positive linear maps which map to singular matrices.
\begin{lemma}\label{lem:fr}
    Let $\mathcal{N}:\mathbb{H}^n\rightarrow\mathbb{H}^m$ be a positive linear map, and let $\mathcal{N}(\mathbb{I})$ have spectral decomposition
    \begin{equation*}
        \mathcal{N}(\mathbb{I}) = \begin{bmatrix} U & V\end{bmatrix} \begin{bmatrix} \Lambda & 0 \\ 0 & 0\end{bmatrix} \begin{bmatrix} U^\dag \\ V^\dag \end{bmatrix} = U\Lambda U^\dag,
    \end{equation*}
    where $U\in\mathbb{C}^{n\times r}$, $V\in\mathbb{C}^{n\times(n-r)}$ are isometry matrices, and $\Lambda\in\mathbb{R}^{r\times r}$ is a diagonal matrix with strictly positive diagonal entries. Then for any $X\in\mathbb{H}^n$, we have $\mathcal{N}(X) = U \hat{X} U^\dag$ for some $\hat{X}\in\mathbb{H}^r$. If, additionally, $X\succ0$, then $\hat{X}\succ0$.
\end{lemma}
\begin{proof}
    This follows from~\cite[Lemma 3.8]{hu2022robust} and~\cite[Theorem 6.6]{rockafellar1970convex}. 
\end{proof}
To prove Theorem~\ref{thm:main}, it would be convenient to use the composition rules outlined in Lemma~\ref{lem:compatibility-composition} together with the compatibility result from Lemma~\ref{lem:pg-compat}. However, when $\mathcal{N}_1$ or $\mathcal{N}_2$ map to singular matrices, we can no longer use a direct application of Lemma~\ref{lem:compatibility-composition-ii} as the intersection between a face of the positive semidefinite cone and the interior of the positive semidefinite cone is empty. Instead, we first show in the following theorem that we can always rewrite $S_g^{\mathcal{N}_1, \mathcal{N}_2}$ as an appropriate composition between the noncommutative perspective and positive linear maps which map to the interior of the positive semidefinite cone, after which the desired composition rules become applicable.
\begin{theorem}
    For a function $g:(0, \infty)\rightarrow\mathbb{R}$, let $P_g$ be its noncommutative perspective. Let $\mathcal{N}_k:\mathbb{H}^n\rightarrow\mathbb{H}^m$ for $k=1,2$ be positive linear operators such that
    \begin{equation*}
        \mathcal{N}_k(\mathbb{I}) = \begin{bmatrix} U_k & V_k \end{bmatrix} \begin{bmatrix} \Lambda_k & 0 \\ 0 & 0\end{bmatrix} \begin{bmatrix} U_k^\dag \\ V_k^\dag \end{bmatrix} = U_k\Lambda_k U_k^\dag,
    \end{equation*}
    where $U_k\in\mathbb{C}^{n\times r_k}$, $V_k\in\mathbb{C}^{n\times(n-r_k)}$ are isometry matrices, and $\Lambda_k\in\mathbb{R}^{r_k\times r_k}_{++}$ is a diagonal matrix with strictly positive diagonal entries. Let $\hat{\Psi}:\mathbb{H}^{n^2}\rightarrow\mathbb{R}$ be the unique linear operator satisfying $\hat{\Psi}[X\otimes \conj{Y}] = \tr[ U_1XU_1^\dag U_2YU_2^\dag ]$. Consider the function $h:\mathbb{H}^{n}_{++}\rightarrow\mathbb{R}$ defined by
    \begin{align*}
        h(X) &= \hat{\Psi} \bigl[P_g( U_1^\dag \mathcal{N}_1(X) U_1 \otimes\mathbb{I}, \mathbb{I}\otimes \conj{U_2^\dag\mathcal{N}_2(X)U_2}) \bigr].
    \end{align*}
    If Assumption~\ref{assump:well-defined-N} is satisfied, then for all $X\in\mathbb{H}^{n}_{++}$, we have
    \begin{equation*}
        S_g^{\mathcal{N}_1, \mathcal{N}_2}(X) = h(X) + \tr[X C],
    \end{equation*}
    where $C\in\mathbb{H}^n$ is some constant matrix dependent on $\mathcal{N}_1$, $\mathcal{N}_2$, and $g$.
\end{theorem}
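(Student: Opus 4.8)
The plan is to reduce the quasi-entropy to a finite spectral sum on the supports of $\mathcal{N}_1(X)$ and $\mathcal{N}_2(X)$, and then evaluate the defining limit \eqref{eqn:quasi-entropy-limit} block by block. Fix $X\in\mathbb{H}^n_{++}$. By Lemma~\ref{lem:fr} we may write $\mathcal{N}_1(X)=U_1\hat{X}_1U_1^\dag$ and $\mathcal{N}_2(X)=U_2\hat{X}_2U_2^\dag$ with $\hat{X}_1=U_1^\dag\mathcal{N}_1(X)U_1\succ0$ and $\hat{X}_2=U_2^\dag\mathcal{N}_2(X)U_2\succ0$, so that $\image\mathcal{N}_k(X)=\image U_k$. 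I would diagonalize $\mathcal{N}_1(X)=\sum_i a_iu_iu_i^\dag$ and $\mathcal{N}_2(X)=\sum_j b_jv_jv_j^\dag$ in orthonormal bases $\{u_i\}$, $\{v_j\}$ of $\mathbb{C}^m$, ordered so that $u_1,\dots,u_{r_1}\in\image U_1$ carry the positive eigenvalues and $u_{r_1+1},\dots,u_m\in\image V_1$ carry $a_i=0$, and likewise for $\{v_j\}$. Since the shift by $\varepsilon\mathbb{I}$ leaves the eigenvectors unchanged and $(\mathcal{N}_1(X)+\varepsilon\mathbb{I})\otimes\mathbb{I}$ commutes with $\mathbb{I}\otimes\conj{\mathcal{N}_2(X)+\varepsilon\mathbb{I}}$, the operator inside the limit is diagonal in the joint basis $\{u_i\otimes\conj{v_j}\}$. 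Applying $\Psi$, together with $\Psi[u_iu_i^\dag\otimes\conj{v_jv_j^\dag}]=\abs{\inp{u_i}{v_j}}^2$ and the scalar perspective $P_g(a,b)=a\,g(b/a)$, yields the finite-sum representation
\[
\Psi\bigl[P_g\bigl((\mathcal{N}_1(X)+\varepsilon\mathbb{I})\otimes\mathbb{I},\,\mathbb{I}\otimes\conj{\mathcal{N}_2(X)+\varepsilon\mathbb{I}}\bigr)\bigr]
=\sum_{i,j}(a_i+\varepsilon)\,g\!\left(\frac{b_j+\varepsilon}{a_i+\varepsilon}\right)\abs{\inp{u_i}{v_j}}^2 .
\]

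Next I would split this sum into four blocks according to whether $a_i$ and $b_j$ are positive (``large'') or zero (``small'') and send $\varepsilon\downarrow0$ in each. In the large--large block, continuity of $g$ on $(0,\infty)$ gives the limit $\sum_{i\leq r_1,\,j\leq r_2}a_i\,g(b_j/a_i)\abs{\inp{u_i}{v_j}}^2$; rewriting this in the compressed coordinates via $U_1\hat{u}_i=u_i$, $U_2\hat{v}_j=v_j$ (where $\hat u_i=U_1^\dag u_i$, $\hat v_j=U_2^\dag v_j$) and the definition of $\hat{\Psi}$ shows it equals exactly $h(X)$. The large--small block has $g((b_j+\varepsilon)/(a_i+\varepsilon))\to g(0^+)$ with prefactor $\to a_i$, giving $g(0^+)\sum_{i\leq r_1}a_i\sum_{j>r_2}\abs{\inp{u_i}{v_j}}^2=g(0^+)\tr[\mathcal{N}_1(X)V_2V_2^\dag]$, where $V_2V_2^\dag=\mathbb{I}-U_2U_2^\dag$. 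The small--large block, using $\varepsilon\,g((b_j+\varepsilon)/\varepsilon)=(b_j+\varepsilon)\hat{g}(\varepsilon/(b_j+\varepsilon))\to b_j\hat{g}(0^+)$, converges to $\hat{g}(0^+)\tr[\mathcal{N}_2(X)V_1V_1^\dag]$. The small--small block equals $\varepsilon\,g(1)\sum\abs{\inp{u_i}{v_j}}^2\to0$. Both surviving linear terms are linear in $X$ through the adjoints $\mathcal{N}_k^\dag$, so they combine into $\tr[XC]$ with $C=g(0^+)\,\mathcal{N}_1^\dag(V_2V_2^\dag)+\hat{g}(0^+)\,\mathcal{N}_2^\dag(V_1V_1^\dag)$.

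It then remains to make sense of the two linear terms, since $g(0^+)$ or $\hat{g}(0^+)$ may be infinite; this is precisely where Assumption~\ref{assump:well-defined-N} enters. The key point is that each potentially problematic block carries a support factor that vanishes identically (already before the limit) in exactly the cases where its coefficient is not guaranteed finite. Under \ref{assump:well-defined-N-i} we have $\image U_1\subseteq\image U_2$, so $\inp{u_i}{v_j}=0$ whenever $i\leq r_1$ and $j>r_2$; hence the large--small block is identically zero and only the finite coefficient $\hat{g}(0^+)$ survives. Under \ref{assump:well-defined-N-ii} the roles are symmetric, under \ref{assump:well-defined-N-iii} ($\image U_1=\image U_2$) both linear blocks vanish so no finiteness of $g(0^+),\hat g(0^+)$ is needed at all, and under \ref{assump:well-defined-N-iv} both coefficients are finite. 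In every case $C$ is a well-defined constant matrix depending only on $\mathcal{N}_1,\mathcal{N}_2,g$, which establishes $S_g^{\mathcal{N}_1,\mathcal{N}_2}(X)=h(X)+\tr[XC]$.

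The main obstacle I anticipate is the boundary bookkeeping. Interchanging the limit with the finite sum is immediate once the eigenvectors are $\varepsilon$-independent, but it relies on continuity of $g$ and on the limits $g(0^+)$ and $\hat{g}(0^+)$ existing; the more delicate part is pairing each of the four hypotheses of Assumption~\ref{assump:well-defined-N} with the exact combination of a vanishing support factor and a finite coefficient, so that no indeterminate $\infty\cdot0$ product ever arises (the relevant blocks are \emph{identically} zero, not merely limits of $g(0^+)\cdot0$). A secondary care point is verifying that the large--large block genuinely equals $h(X)$: this requires tracking how the compressions by $U_1$ and $U_2$ interact with $\hat{\Psi}$ and correctly discarding the off-support eigenvectors.
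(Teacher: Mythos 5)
Your proposal is correct and follows essentially the same route as the paper's proof: expand the noncommutative perspective in the joint eigenbasis (possible since the $\varepsilon$-shift commutes with the spectral projections), split the limit~\eqref{eqn:quasi-entropy-limit} into the four support blocks, identify the large--large block with $h(X)$, kill the small--small block, and handle the two cross blocks via the transpose identity $xg(y/x)=y\hat{g}(x/y)$ together with the case analysis on Assumption~\ref{assump:well-defined-N} (support orthogonality forcing a block to vanish identically versus finiteness of $g(0^+)$ or $\hat{g}(0^+)$). The only difference is presentational --- you use rank-one eigenprojections and $\abs{\inp{u_i}{v_j}}^2$ where the paper uses eigenprojections $P_i,Q_j$ and $\tr[P_iQ_j]$, and you spell out all four assumption cases where the paper only illustrates case (i).
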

\begin{proof}
    Consider any two positive definite matrices $A\in\mathbb{H}^{n_a}_{++}$ and $B\in\mathbb{H}^{n_b}_{++}$ with spectral decompositions $A=\sum_{i=1}^{n_a} a_i R_i$ and $B=\sum_{i=1}^{n_b} b_i S_i$. Since $A\otimes\mathbb{I}$ and $\mathbb{I}\otimes B$ commute, we have 
    \begin{align}\label{eqn:expanded-noncommutative-perspective}
        P_g(A\otimes\mathbb{I}, \mathbb{I}\otimes B) &= (A\otimes\mathbb{I})\,g(A^{-1} \otimes B) = \sum_{i=1}^{n_a}\sum_{j=1}^{n_b} a_ig\biggl(\frac{b_j}{a_i}\biggr) R_i \otimes S_j.
    \end{align}
    Now using Lemma~\ref{lem:fr}, for any $X\in\mathbb{H}^n_{++}$ we can always represent $\mathcal{N}_1(X)$ and $\mathcal{N}_2(X)$ using the spectral decompositions $\mathcal{N}_1(X) = \sum_{i=1}^{r_1} \lambda_i P_i$ and $\mathcal{N}_2(X) = \sum_{i=1}^{r_2} \mu_i Q_i$, where $\image P_i \subseteq \image U_1$ and $\lambda_i>0$ for all $i=1,\ldots,r_1$, and $\image Q_i \subseteq \image U_2$ and $\mu_i>0$ for all $i=1,\ldots,r_2$. Therefore, noting that $U_1^\dag \mathcal{N}_1(X) U_1$ and $U_2^\dag \mathcal{N}_2(X)U_2$ are positive definite and using~\eqref{eqn:expanded-noncommutative-perspective}, we obtain
    \begin{align*}
        h(X) &= \hat{\Psi} \biggl[ \sum_{i=1}^{r_1}\sum_{j=1}^{r_2} \lambda_i g \biggl( \frac{\mu_j}{\lambda_i}  \biggr) U_1^\dag P_i U_1\otimes \conj{U_2^\dag Q_jU_2} \biggr] = \sum_{i=1}^{r_1}\sum_{j=1}^{r_2} \lambda_i g \biggl( \frac{\mu_j}{\lambda_i}  \biggr) \tr[P_i Q_j].
    \end{align*}
    Now let $P^0=V_1V_1^\dag$ and $Q^0=V_2V_2^\dag$ be the projectors onto $\ker\mathcal{N}_1(X)$ and $\ker\mathcal{N}_2(X)$, respectively. Note that these projectors are independent of $X$, by Lemma~\ref{lem:fr}. Then using~\eqref{eqn:quasi-entropy-limit} and~\eqref{eqn:expanded-noncommutative-perspective} we can similarly show that
    \begin{align*}
        S_g^{\mathcal{N}_1, \mathcal{N}_2}(X) &= \lim_{\varepsilon\downarrow0} \Psi \bigl[P_g((\mathcal{N}_1(X) + \varepsilon\mathbb{I})\otimes\mathbb{I}, \mathbb{I}\otimes(\conj{\mathcal{N}_2(X)}+\varepsilon\mathbb{I})) \bigr]\\
        &= \lim_{\varepsilon\downarrow0} \biggl\{ \sum_{i=1}^{r_1}\sum_{j=1}^{r_2} \biggl[ (\lambda_i + \varepsilon) g\biggl(\frac{\mu_j + \varepsilon}{\lambda_i + \varepsilon} \biggr) \tr[P_i Q_j] \biggr] + \sum_{i=1}^{r_1} \biggl[ (\lambda_i + \varepsilon) g\biggl(\frac{\varepsilon}{\lambda_i + \varepsilon} \biggr) \tr[P_i Q^0] \biggr]  \\
        & \hphantom{ = \lim_{\varepsilon\downarrow0}} + \sum_{j=1}^{r_2} \biggl[(\mu_j + \varepsilon) \hat{g}\biggl(\frac{\varepsilon}{\mu_j + \varepsilon} \biggr) \tr[P^0 Q_j]\biggr] + \varepsilon g(1) \tr[P^0 Q^0] \biggr\} \\
        &= h(X) + \lim_{\varepsilon\downarrow0} \biggl\{ \sum_{i=1}^{r_1} \biggl[ (\lambda_i + \varepsilon) g\biggl(\frac{\varepsilon}{\lambda_i + \varepsilon} \biggr) \tr[P_i Q^0] \biggr]  + \sum_{j=1}^{r_2} \biggl[(\mu_j + \varepsilon) \hat{g}\biggl(\frac{\varepsilon}{\mu_j + \varepsilon} \biggr) \tr[P^0 Q_j] \biggr] \biggr\},
    \end{align*}
    where we have used the identity $xg(y/x)=y\hat{g}(x/y)$ in the second equality. For each scenario (i)--(iv) in Assumption~\ref{assump:well-defined-N}, the remaining limit is well-defined and reduces to a linear expression in $X$. We will only illustrate this for scenario (i), as similar arguments can be used for the other cases. For this scenario, we have that the columns of $V_2$ are all orthogonal to the columns of $U_1$, and therefore $\tr[P_i Q^0]=0$ for all $i=1,\ldots,r_1$. As the limit $\hat{g}(0^+)$ is also assumed to be finite, then
    \begin{align*}
        S_g^{\mathcal{N}_1, \mathcal{N}_2}(X) &= h(X) + \sum_{j=1}^{r_2} \mu_j \hat{g}(0^+) \tr[P^0 Q_j]= h(X) + \hat{g}(0^+)\tr[P^0 \mathcal{N}_2(X)],
    \end{align*}
    i.e., we have the desired form where $C=\hat{g}(0^+) \mathcal{N}_2^\dag(P^0)$, which is independent of $X$ since $P^0$ is independent of $X$. 
\end{proof}

Using this, we can prove the following compatibility result.
\begin{corollary}\label{cor:main-compatibility}
    Let $g:(0, \infty)\rightarrow\mathbb{R}$ be an operator concave function, and let $P_g$ be its noncommutative perspective. Let $\mathcal{N}_1:\mathbb{H}^n\rightarrow\mathbb{H}^m$ and $\mathcal{N}_2:\mathbb{H}^n\rightarrow\mathbb{H}^m$ be positive linear operators. If Assumption~\ref{assump:well-defined-N} is satisfied, then the function $S_g^{\mathcal{N}_1, \mathcal{N}_2}:\mathbb{H}_{++}^n\rightarrow\mathbb{R}$, as defined in~\eqref{eqn:quasi-entropy-N}, is $1$-compatible with the domain $\mathbb{H}^n_+$.
\end{corollary}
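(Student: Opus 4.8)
The plan is to reduce everything to the composition rules of Lemma~\ref{lem:compatibility-composition} and the base compatibility estimate of Lemma~\ref{lem:pg-compat}, using the preceding theorem to dispose of the singular-map difficulty. Concretely, the preceding theorem lets us write $S_g^{\mathcal{N}_1,\mathcal{N}_2}(X) = h(X) + \tr[XC]$ on $\mathbb{H}^n_{++}$, where $C$ is a constant matrix. Since $X\mapsto\tr[XC]$ is linear, its second and third derivatives vanish and it preserves concavity; hence compatibility of $S_g^{\mathcal{N}_1,\mathcal{N}_2}$ with $\mathbb{H}^n_+$ is equivalent to compatibility of $h$ with $\mathbb{H}^n_+$. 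So the whole task is to show that $h$ is $1$-compatible with $\mathbb{H}^n_+$.

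To this end I would factor $h$ as a composition $h = \hat{\Psi}\circ\Phi\circ\mathcal{B}$, where $\mathcal{B}:\mathbb{H}^n\rightarrow\mathbb{H}^{r_1}\times\mathbb{H}^{r_2}$ is the linear map $\mathcal{B}(X) = (U_1^\dag\mathcal{N}_1(X)U_1,\, U_2^\dag\mathcal{N}_2(X)U_2)$, and $\Phi:(A,B)\mapsto P_g(A\otimes\mathbb{I}, \mathbb{I}\otimes\conj{B})$. By Lemma~\ref{lem:pg-compat}, $\Phi$ is $1$-compatible with $\mathbb{H}^{r_1}_+\times\mathbb{H}^{r_2}_+$. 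The three steps are then: (i) By Lemma~\ref{lem:fr}, for $X\succ0$ we have $\mathcal{N}_k(X)=U_k\hat{X}_kU_k^\dag$ with $\hat{X}_k\succ0$, so $U_k^\dag\mathcal{N}_k(X)U_k=\hat{X}_k\succ0$; thus $\mathcal{B}$ maps $\mathbb{H}^n_{++}$ into $\mathbb{H}^{r_1}_{++}\times\mathbb{H}^{r_2}_{++}$ and in particular $\image\mathcal{B}\cap(\mathbb{H}^{r_1}_{++}\times\mathbb{H}^{r_2}_{++})\neq\varnothing$ (e.g.\ witnessed by $\mathcal{B}(\mathbb{I})=(\Lambda_1,\Lambda_2)$). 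Lemma~\ref{lem:compatibility-composition-ii} then gives that $\Phi\circ\mathcal{B}$ is $1$-compatible with $\mathcal{B}^{-1}(\mathbb{H}^{r_1}_+\times\mathbb{H}^{r_2}_+)$. (ii) Because $\mathcal{N}_1,\mathcal{N}_2$ are positive and $U_1,U_2$ are isometries, $X\succeq0$ implies $U_k^\dag\mathcal{N}_k(X)U_k\succeq0$, so $\mathbb{H}^n_+\subseteq\mathcal{B}^{-1}(\mathbb{H}^{r_1}_+\times\mathbb{H}^{r_2}_+)$; since compatibility with a domain passes to its sub-domains, $\Phi\circ\mathcal{B}$ restricted to $\mathbb{H}^n_{++}$ is $1$-compatible with $\mathbb{H}^n_+$. (iii) Finally, $\hat{\Psi}$ is a positive linear map into $\mathbb{R}=\mathbb{H}^1$, so Lemma~\ref{lem:compatibility-composition-i} yields that $h=\hat{\Psi}\circ(\Phi\circ\mathcal{B})$ is $1$-compatible with $\mathbb{H}^n_+$, completing the argument.

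The one genuinely new verification is the positivity of $\hat{\Psi}$. I would establish it by writing $\hat{\Psi}[Z] = \Psi_m[(U_1\otimes\conj{U_2})\,Z\,(U_1\otimes\conj{U_2})^\dag]$, where $\Psi_m$ is the standard pairing with $\Psi_m[\tilde{X}\otimes\conj{\tilde{Y}}]=\tr[\tilde{X}\tilde{Y}]$; indeed $(U_1\otimes\conj{U_2})(X\otimes\conj{Y})(U_1\otimes\conj{U_2})^\dag = U_1XU_1^\dag\otimes\conj{U_2YU_2^\dag}$, and $\Psi_m$ of this equals $\tr[U_1XU_1^\dag\,U_2YU_2^\dag]$. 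Since $\Psi_m[W]=\langle\Omega|W|\Omega\rangle$ for the unnormalized maximally entangled vector $|\Omega\rangle=\sum_i e_i\otimes e_i$, it is positive, and congruence by $U_1\otimes\conj{U_2}$ is positive, so $\hat{\Psi}$ is a composition of positive maps and hence positive.

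The main obstacle in the overall development is the case where $\mathcal{N}_1$ or $\mathcal{N}_2$ map to singular matrices, which blocks a naive composition: the image of $\mathcal{N}_k$ then misses $\mathbb{H}^m_{++}$, so $\Phi\circ(\text{compose with }\mathcal{N}_k)$ fails the nonempty-interior-intersection hypothesis of Lemma~\ref{lem:compatibility-composition-ii}. This difficulty, however, has already been absorbed by the preceding theorem: compressing onto the ranges via $U_k^\dag\mathcal{N}_k(X)U_k$ restores positive definiteness, and the constant correction $\tr[XC]$ accounts for the boundary contributions produced in the limit~\eqref{eqn:quasi-entropy-limit} under each scenario of Assumption~\ref{assump:well-defined-N}. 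Consequently the corollary itself is an essentially mechanical chaining of the composition rules, with the only points needing care being the containment $\mathbb{H}^n_+\subseteq\mathcal{B}^{-1}(\mathbb{H}^{r_1}_+\times\mathbb{H}^{r_2}_+)$ and the positivity of $\hat{\Psi}$ noted above.
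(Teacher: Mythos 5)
Your proposal is correct and takes essentially the same route as the paper: both reduce to the representation $S_g^{\mathcal{N}_1,\mathcal{N}_2}=h+\tr[XC]$ from the preceding theorem, invoke Lemma~\ref{lem:pg-compat} for the base compatibility, and chain the two composition rules of Lemma~\ref{lem:compatibility-composition} through the compressed map $\mathcal{B}(X)=(U_1^\dag\mathcal{N}_1(X)U_1, U_2^\dag\mathcal{N}_2(X)U_2)$, using Lemma~\ref{lem:fr} to verify the nonempty-interior hypothesis and the containment $\mathbb{H}^n_+\subseteq\mathcal{B}^{-1}(\mathbb{H}^{r_1}_+\times\mathbb{H}^{r_2}_+)$. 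The only cosmetic differences are that you apply the affine composition rule before absorbing $\hat{\Psi}$ (the paper does the reverse), and you supply an explicit verification that $\hat{\Psi}$ is a positive map, which the paper merely asserts.
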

\begin{proof}
    First, using Lemmas~\ref{lem:compatibility-composition-i} and~\ref{lem:pg-compat} and recognizing that $\hat{\Psi}$ is a positive linear map, we know that 
    \begin{equation}\label{eqn:main-cor-proof-a}
        (X, Y)\mapsto \hat{\Psi}[P_g(X\otimes\mathbb{I}, \mathbb{I}\otimes\conj{Y})],
    \end{equation}
    defined on $\mathbb{H}^{r_1}_{++}\times\mathbb{H}^{r_2}_{++}$ is $1$-compatible with the domain $\mathbb{H}^{r_1}_+\times\mathbb{H}^{r_2}_+$. Next, we recognize from Lemma~\ref{lem:fr} that 
    \begin{equation*}
        \{(U_1^\dag \mathcal{N}_1(X) U_1, U_2^\dag \mathcal{N}_2(X) U_2): X\in\mathbb{H}^n\} \cap (\mathbb{H}^{r_1}_{++}\times\mathbb{H}^{r_2}_{++}) \neq \varnothing.
    \end{equation*}
    Therefore, we can apply Lemma~\ref{lem:compatibility-composition-ii} where $\mathcal{B}:\mathbb{H}^n\rightarrow\mathbb{H}^{r_1}\times\mathbb{H}^{r_2}, \mathcal{B}(X)=(U_1^\dag \mathcal{N}_1(X) U_1, U_2^\dag \mathcal{N}_2(X) U_2)$ and $f$ is~\eqref{eqn:main-cor-proof-a} to show that $h$ is $1$-compatible with the domain
    \begin{equation*}
        \mathcal{C} \coloneqq \{ X\in\mathbb{H}^n : U_k^\dag \mathcal{N}_k(X) U_k \in \mathbb{H}^{r_k}_+, \  \forall k=1,2 \}.
    \end{equation*}
    As $X\mapsto U_k^\dag \mathcal{N}_k(X) U_k$ are positive linear maps for $k=1,2$, we know that $\mathbb{H}^n_+\subseteq\mathcal{C}$. Therefore, it follows from the definition of compatibility that $h$ must also be $1$-compatible with the domain $\mathbb{H}^n_+$. By definition, compatibility only depends on the second and third derivatives of a function. Therefore it also follows that $S_g^{\mathcal{N}_1, \mathcal{N}_2}$, which only has an additional linear term, must also be $1$-compatible with the domain $\mathbb{H}^n_+$, as desired. 
\end{proof}
The proof for the main theorem is now a straightforward consequence of this compatibility result.
\begin{proof}[Proof of Theorem~\ref{thm:main}]
    The fact that~\eqref{eqn:thm-main-barrier} is a self-concordant barrier for $\closure \hypograph S_g^{\mathcal{N}_1, \mathcal{N}_2}$ follows from Corollary~\ref{cor:main-compatibility} and Lemma~\ref{lem:compatibility-to-barrier}, where we use the fact that $X\in\mathbb{H}^n_{++}\mapsto-\log\det(X)$ is an $n$-self-concordant barrier for the domain $\mathbb{H}^n_{+}$. Logarithmic homoegeneity with parameter $n+1$ can also be easily be directly confirmed from~\eqref{eqn:logarithmic-homogeneity}. Optimality of the barrier parameter follows from~\cite[Corollary 3.13]{fawzi2023optimal}. 
\end{proof}

\section{Exploiting structure}\label{sec:structure}

\begin{table*}
\centering
\caption{Summary of the approximate number of floating point operations required to solve linear systems with the Hessian of~\eqref{eqn:barrier-hat}, for various structures in the linear maps $\mathcal{G}:\mathbb{H}^n\rightarrow\mathbb{H}^{mp}$ and $\mathcal{H}:\mathbb{H}^n\rightarrow\mathbb{H}^{mp}$ that we can take advantage of. We compare the cost of a na\"ive matrix approach which does not use any structure in $\mathcal{G}$ and $\mathcal{H}$ to (i) if $\mathcal{G}$ and $\mathcal{H}$ both map to block diagonal matrices with $p$ blocks of size $m\times m$, (ii) if $\mathcal{G}$ and $\mathcal{H}$ are both rank $r^2$ linear maps, and (iii) if $\mathcal{G}(X)=X$ and $\mathcal{H}=\mathbb{I}\otimes\tr_1^{p,m}(X)$ model the quantum conditional entropy where $n=mp$. We also compare against the cost of solving a linear system with the Hessian of the barrier of the quantum relative entropy cone where matrices have dimension $mp$, i.e., the cone we would use if lifted the problem into a quantum relative entropy \diff{optimization problem}.} \label{table:flop-count}
\small
\begin{tabular*}{1\textwidth}{@{\extracolsep{\fill}}p{0.25cm}lccc@{\extracolsep{\fill}}}
\toprule
 && \textbf{Matrix construction} & \textbf{Matrix factorization} & \textbf{Matrix solve} \\ \midrule
&Na\"ive & $O(m^3p^3n^2 + m^2p^2n^4)$ & $O(n^6)$ & $O(n^4)$ \\
(i) &Block diag. & $O(m^3pn^2 + m^2pn^4)$ & $O(n^6)$ & $O(n^4)$ \\
(ii) &Low rank & $O(m^3p^3r^2 + m^2p^2r^4 + n^2r^4 + n^3r^2)$ & $O(r^6)$ & $O(n^3+n^2r^2+r^4)$ \\ 
(iii) &Quant. cond. entr. & $O(m^5p^3)$ & $O(m^6)$ & $O(m^3p^3 + m^4)$ \\
 &Quant. rel. entr. & $O(m^5p^5)$ & $O(m^6p^6)$ & $O(m^4p^4)$ \\
\bottomrule
\end{tabular*}
\end{table*}

As discussed in Section~\ref{sec:intro}, the main bottleneck when solving quantum relative entropy \diff{optimization problems} by using interior-point methods is in solving linear systems with the Hessian of the barrier function~\eqref{eqn:qre-barrier}. A na\"ive implementation of this step involves constructing and factoring the full Hessian matrix. In this section, we focus on how we can efficiently compute solve linear systems with the Hessian of the barrier function~\eqref{eqn:main-barrier} by taking advantage of structure in the linear maps $\mathcal{G}$ and $\mathcal{H}$. 
By performing a suitable block-elimination on the Hessian matrix of~\eqref{eqn:main-barrier}, we can show that the main computational effort is in solving linear systems with the Hessian of
\begin{equation}\label{eqn:barrier-hat}
    \hat{F}(X) = \zeta^{-1}S\divx{\mathcal{G}(X)}{\mathcal{H}(X)} - \log\det(X),
\end{equation}
where $\zeta=t-S\divx{\mathcal{G}(X)}{\mathcal{H}(X)}$ is treated as a constant. \diff{Details of this derivation can be found in Appendix~\ref{sec:barrier-hessian}}.
It is fairly straightforward to show that the Hessian matrix of $\hat{F}$ can be represented as
\begin{align}\label{eqn:hessian-naive}
    \grad^2\hat{F}(X) &= \zeta^{-1} \begin{bmatrix} \bm{\mathcal{G}}^\dag & \bm{\mathcal{H}}^\dag \end{bmatrix} \begin{bmatrix}
        -\grad^2 S(\mathcal{G}(X)) & \grad^2 S (\mathcal{H}(X)) \\
       \grad^2 S (\mathcal{H}(X)) & \grad^2 S_{\mathcal{G}(X)} (\mathcal{H}(X))
    \end{bmatrix} \begin{bmatrix} \bm{\mathcal{G}} \\ \bm{\mathcal{H}} \end{bmatrix}  - \grad^2\log\det(X),
\end{align}
where we use the bold letters $\bm{\mathcal{G}}$ and $\bm{\mathcal{H}}$ to denote the matrix representations of the linear maps $\mathcal{G}$ and $\mathcal{H}$, and we recall the definition of $S_{\mathcal{G}(X)}$ from~\eqref{eqn:funny-entropy}, where $\mathcal{G}(X)$ is treated as a constant. We refer to Appendix~\ref{sec:derivatives} for concrete descriptions of the Hessians that appear in~\eqref{eqn:hessian-naive}.

If we have no a priori information about $\mathcal{G}$ and $\mathcal{H}$, then the most straightforward method to solve linear systems with the Hessian matrix~\eqref{eqn:hessian-naive} is to construct the full matrix, then perform a Cholesky factor-solve procedure. Overall, the cost of forming the Hessian is $O(m^3n^2 + m^2n^4)$ flops, performing a Cholesky factorization costs $O(n^6)$ flops, and performing a single back- and forward-substitution using the Cholesky factorization costs $O(n^4)$. We compare this against solving linear systems with the Hessian of the log determinant function, as required in semidefinite programming. In this case, we do not need to construct nor Cholesky factor a Hessian matrix, and we apply the inverse Hessian map by applying a congruence transformation at a cost of $O(n^3)$ flops. 

However, if we have some information about $\mathcal{G}$ and $\mathcal{H}$ a priori, then there are certain structures we can exploit to either make constructing the Hessian cheaper, avoid having to build and store the whole Hessian in memory, or to improve the numerical stability of solving linear systems with the Hessian. We will explore three main categories of strategies in the remainder of this section, which we summarize the flop count complexities for in Table~\ref{table:flop-count}. We will later show in Section~\ref{sec:exp} how these techniques can be specifically tailored to solve various problems arising from quantum information theory. 

\subsection{Block diagonal structure} \label{subsec:blk-diag}
One structure we can take advantage of to make constructing the Hessian of~\eqref{eqn:barrier-hat} easier is if $\mathcal{G}$ and/or $\mathcal{H}$ map to block diagonal subspaces, i.e.,
\begin{subequations}\label{eqn:blk-diag-linear}
    \begin{align}
        \mathcal{G}(X) &= V \biggl( \bigoplus_{i=1}^{p_g} \mathcal{G}_i(X) \biggr) V^\dag = \begin{bmatrix} \vertbar && \vertbar \\ V_1 & \ldots & V_{p_g} \\ \vertbar && \vertbar \end{bmatrix} \begin{bmatrix} \mathcal{G}_1(X) && \\ & \ddots & \\ && \mathcal{G}_{p_g}(X) \end{bmatrix} \begin{bmatrix} \horzbar & V_1^\dag & \horzbar \\  & \vdots &  \\  \horzbar & V_{p_g}^\dag & \horzbar \end{bmatrix} \\
        \mathcal{H}(X) &= W \biggl( \bigoplus_{i=1}^{p_h} \mathcal{H}_i(X) \biggr) W^\dag = \begin{bmatrix} \vertbar && \vertbar \\ W_1 & \ldots & W_{p_h} \\ \vertbar && \vertbar \end{bmatrix} \begin{bmatrix} \mathcal{H}_1(X) && \\ & \ddots & \\ && \mathcal{H}_{p_h}(X) \end{bmatrix} \begin{bmatrix} \horzbar & W_1^\dag & \horzbar \\  & \vdots &  \\  \horzbar & W_{p_h}^\dag & \horzbar \end{bmatrix},
    \end{align}
\end{subequations}
where $\mathcal{G}_i:\mathbb{H}^n\rightarrow\mathbb{H}^{\hat{m}_{g_i}}$ and $V_i\in\mathbb{C}^{n\times \hat{m}_{g_i}}$ for $i=1,\ldots.p_g$, $\mathcal{H}_i:\mathbb{H}^n\rightarrow\mathbb{H}^{\hat{m}_{h_i}}$ and $W_i\in\mathbb{C}^{n\times \hat{m}_{h_i}}$ for $i=1,\ldots.p_h$, and $V$ and $W$ are unitary matrices. Then by using the property that for the trace of any spectral function $f$, we have
\begin{equation*}
    \tr\biggl[ f\biggl(U \biggl[\bigoplus_{i=1}^p X_i\biggr] U^\dag \biggr) \biggl] = \tr\biggl[\bigoplus_{i=1}^p f(X_i)\biggr] = \sum_{i=1}^p \tr[f(X_i)],
\end{equation*}
we can decompose quantum relative entropy into a sum of functions acting on smaller matrices, i.e.,
\begin{align}
    S\divx{\mathcal{G}(X)}{\mathcal{H}(X)} &= -S(\mathcal{G}(X)) + S_{\mathcal{G}(X)}(\mathcal{H}(X)) \nonumber \\
    &= -\sum_{i=1}^{p_g} S(\mathcal{G}_i(X)) + \sum_{i=1}^{p_h} S_{\hat{\mathcal{G}}_i(X)}(\mathcal{H}_i(X)), \label{eqn:block-diag-qre}
\end{align}
where $\hat{\mathcal{G}}_i(X) \coloneqq W_i\mathcal{G}(X) W_i^\dag$. Given this decomposition, building the Hessian just involves building the Hessian of each of these decomposed terms, which can be represented as
\begin{align}\label{eqn:hessian-block}
    \grad^2\hat{F}(X) &= -\zeta^{-1}\sum_{i=1}^{p_g} \bm{\mathcal{G}}^\dag_i \grad^2 S (\mathcal{G}_i(X)) \bm{\mathcal{G}}_i \nonumber \\
    &\hphantom{=} + \zeta^{-1}\sum_{i=1}^{p_h} \begin{bmatrix} \bm{\hat{\mathcal{G}}}_i^\dag & \bm{\mathcal{H}}_i^\dag \end{bmatrix} \begin{bmatrix}
        0 & \grad^2 S (\mathcal{H}_i(X)) \\
       \grad^2 S (\mathcal{H}_i(X)) & \grad^2 S_{\hat{\mathcal{G}}_i(X)} (\mathcal{H}_i(X))
    \end{bmatrix} \begin{bmatrix}  \bm{\hat{\mathcal{G}}}_i \\ \bm{\mathcal{H}}_i \end{bmatrix}  \- \grad^2\log\det(X).
\end{align}
By constructing the Hessian in this manner, we only incur a cost of $O(\hat{m}^3n^2p + \hat{m}^2n^4p)$ flops, where for simplicity, we have assumed $p_g=p_h=p$, and $\hat{m}_{g_i}=\hat{m}$ and $\hat{m}_{h_i}=\hat{m}$ for all $i=1,\ldots,p$. Noting that $m=\hat{m}p$, we see that we have saved a factor of approximately $p$ to $p^2$ flops in constructing the Hessian as compared to using the na\"ive expression~\eqref{eqn:hessian-naive}, depending on the relative sizes of $n$ and $m$.

\begin{remark}\label{rem:same-block}
     In the case that $\mathcal{G}$ and $\mathcal{H}$ share the same block diagonal structure, i.e., $p_g=p_h=p$, and $V_i=W_i$ for all $i=1,\ldots,p$ then~\eqref{eqn:block-diag-qre} can be expressed as
    \begin{align*}
        S\divx{\mathcal{G}(X)}{\mathcal{H}(X)} &= \sum_{i=1}^{p} S\divx{\mathcal{G}_i(X)}{\mathcal{H}_i(X)}.
    \end{align*}
\end{remark}

\begin{remark}\label{rem:classical}
    In the extreme case when $\mathcal{G}$ and $\mathcal{H}$ map to diagonal matrices, or, more generally, when $p_g=p_h=m$, and $\hat{m}_{g_i}=\hat{m}_{h_i}=1$ and $V_i=W_i$ for all $i=1,\ldots,m$, then we reduce to a classical relative entropy expression~\eqref{eqn:cre}.
\end{remark}

\subsubsection{Facial reduction}\label{subsubsec:facial-reduction}
When solving quantum relative entropy \diff{optimization problems} of the form~\eqref{eqn:qrep-base} by lifting to the full quantum relative entropy cone, an issue that arises is when $\mathcal{G}$ and/or $\mathcal{H}$ map to singular matrices. In this case, strict feasibility and Slater's condition no longer hold, which can lead to numerical issues with primal-dual interior-point algorithms~\cite{drusvyatskiy2017many}.  Many works such as~\cite{hu2022robust} use facial reduction to recover strict feasibility. 

Alternatively, we can recover strict feasibility by modelling the problem directly using the cone~\eqref{eqn:main-barrier}. We can also take advantage of knowledge of the fact that $\mathcal{G}$ and/or $\mathcal{H}$ map to singular matrices to compute the Hessian matrix of the barrier function more easily, which we demonstrate as follows. In the notation of~\eqref{eqn:blk-diag-linear} and without loss of generality, let us assume that $p_g=p_h=2$, and $\mathcal{G}_{2}(X)=0$ and $\mathcal{H}_{2}(X)=0$ for all $X\in\mathbb{H}^n$. Recalling that we assumed $\mathcal{G}(X) \ll \mathcal{H}(X)$ for all $X\in\mathbb{H}^n_{++}$, Lemma~\ref{lem:fr} implies that $\hat{\mathcal{G}}_{2}(X)=0$ for all $X\in\mathbb{H}^n$. Using $0\log(0)=0$, we can simplify~\eqref{eqn:block-diag-qre} to
\begin{align}
    S\divx{\mathcal{G}(X)}{\mathcal{H}(X)} &= -S(V_1^\dag\mathcal{G}(X)V_1) + S_{W_1^\dag\hat{\mathcal{G}}(X)W_1}(W_1^\dag\mathcal{H}(X)W_1) \nonumber \\
    &= -S(\mathcal{G}_1(X)) + S_{\hat{\mathcal{G}}_1(X)}(\mathcal{H}_1(X)),
\end{align}
i.e., we drop the terms corresponding to the kernels of $\mathcal{G}(\mathbb{I})$ and $\mathcal{H}(\mathbb{I})$. The corresponding Hessian matrix $\grad^2\hat{F}(X)$ is a straightforward modification of~\eqref{eqn:hessian-block}. Overall, we have reduced the size of the matrices we need to consider, which can significantly simplify computations if the dimensions of the kernels of $\mathcal{G}(\mathbb{I})$ and $\mathcal{H}(\mathbb{I})$ are large. Note that this does not preclude the possibility that $\mathcal{G}_1$ and $\mathcal{H}_1$ have further block diagonal structure that we can exploit in a similar way as~\eqref{eqn:block-diag-qre}. 


\subsection{Low-rank structure} \label{subsec:low-rank}
Although the block diagonal structure allows us to construct the Hessian more efficiently, it is often more desirable if we can solve linear systems with the Hessian without having to construct and factor the entire Hessian matrix. We can do this if $\mathcal{G}$ and/or $\mathcal{H}$ are low rank linear maps, in which case we can apply variants of the matrix inversion lemma discussed in Appendix~\ref{appdx:matrix-inversion-lemma} to solve linear systems with the Hessian more efficiently. More concretely, let us assume that we can decompose $\mathcal{G}$ and $\mathcal{H}$ as
\begin{subequations}
    \begin{align}
        \mathcal{G} &= \mathcal{G}_2 \circ \mathcal{G}_1 \\
        \mathcal{H} &= \mathcal{H}_2 \circ \mathcal{H}_1,
    \end{align}
\end{subequations}
where $\mathcal{G}_1:\mathbb{H}^n\rightarrow\mathbb{H}^{r_g}$, $\mathcal{G}_2:\mathbb{H}^{r_g}\rightarrow\mathbb{H}^m$, $\mathcal{H}_1:\mathbb{H}^n\rightarrow\mathbb{H}^{r_h}$, and $\mathcal{H}_2:\mathbb{H}^{r_h}\rightarrow\mathbb{H}^m$, and $r_g \ll n$ and $r_h \ll n$, respectively. In this case, we can write the Hessian of $\hat{F}$ as
\begin{align}\label{eqn:hessian-low-rank}
    \grad^2 \hat{F}(X) &= \zeta^{-1}\begin{bmatrix} \bm{\mathcal{G}}_1^\dag & \bm{\mathcal{H}}_1^\dag \end{bmatrix} \begin{bmatrix}
        -\bm{\mathcal{G}}_2^\dag\grad^2 S (\mathcal{G}(X)) \bm{\mathcal{G}}_2 & \bm{\mathcal{G}}_2^\dag\grad^2 S (\mathcal{H}(X)) \bm{\mathcal{H}}_2\\
       \bm{\mathcal{H}}_2^\dag\grad^2 S (\mathcal{H}(X)) \bm{\mathcal{G}}_2 & \bm{\mathcal{H}}_2^\dag\grad^2 S_{\mathcal{G}(X)} (\mathcal{H}(X)) \bm{\mathcal{H}}_2
    \end{bmatrix} \begin{bmatrix} \bm{\mathcal{G}}_1 \\ \bm{\mathcal{H}}_1 \end{bmatrix} - \grad^2\log\det(X).
\end{align}
Constructing the full Hessian in this form costs $O(m^3r^2 + m^2r^4 + n^2r^4 + n^4r^2)$ flops, where for simplicity we assumed $r_g=r_h=r$. When $r\ll n$, this is already cheaper than constructing the full Hessian using the na\"ive approach~\eqref{eqn:hessian-naive}. However, we can gain further computational advantages by treating the Hessian as a rank-$2r^2$ perturbation of the matrix $\grad^2\log\det(X)$, which we can efficiently compute inverse Hessian products with using~\eqref{eqn:logdet-hess-inv}. Therefore, we can solve linear systems with $\grad^2 \hat{F}(X)$ by using the matrix inversion lemma~\eqref{eqn:matrix-inversion-lemma-a}. In this case, we only need to construct a smaller Schur complement matrix which takes $O(m^3r^2 + m^2r^4 + n^2r^4 + n^3r^2)$ flops, and factor this matrix which now only costs $O(r^6)$ flops. 

\begin{example}\label{exmp:low-rank-small}
    A common example of a low-rank linear map is when $\mathcal{G}$ and $\mathcal{H}$ map to small matrices, i.e., $m \ll n$. In this case, $r_h=r_g=m$ and $\mathcal{G}_2=\mathcal{H}_2=\mathbb{I}$.
\end{example}

\begin{example}\label{exmp:low-rank-submatrix}
    Another example of low-rank linear maps is when $\mathcal{G}$ and $\mathcal{H}$ only act on a small principal submatrix of the input matrix. In this case, we can let $\mathcal{G}_1$ and $\mathcal{H}_1$ be the linear operators which extract the corresponding principal submatrices.
\end{example}

\begin{remark}\label{rem:efficient-schur}
    When constructing the Schur complement matrix for~\eqref{eqn:hessian-low-rank}, we need to construct a matrix of the form
    \begin{equation*}
        \begin{bmatrix} \bm{\mathcal{G}}_1 \\ \bm{\mathcal{H}}_1 \end{bmatrix} \grad^2\log\det(X) \begin{bmatrix} \bm{\mathcal{G}}_1^\dag & \bm{\mathcal{H}}_1^\dag \end{bmatrix},
    \end{equation*}
    where we recall that $\grad^2\log\det(X)$ is the Hessian corresponding to the log determinant, i.e., the barrier for the positive semidefinite component of the cone. While we discussed briefly in Section~\ref{sec:intro} how sparse and low-rank structure was difficult to exploit to perform Hessians products of general spectral functions more efficiently, there are well-known techniques to exploit these same structures for the Hessian products of the log determinant~\cite{fujisawa1997exploiting,benson2008dsdp5}. Therefore, if $\mathcal{G}_1$ or $\mathcal{H}_1$ have these sparse or low-rank structures (such as those described in Example~\ref{exmp:low-rank-submatrix}), then these same techniques can be used to construct this component of the Schur complement matrix more efficiently. 
\end{remark}

\subsubsection{Identity operator} \label{subsubsec:identity-low-rank}
A special instance of this low-rank structure occurs when $\mathcal{G}=\mathbb{I}$ (or is a congruence by any unitary matrix) and $\mathcal{H}=\mathcal{H}_2\circ\mathcal{H}_1$ is low rank, where $\mathcal{H}_1:\mathbb{H}^n\rightarrow\mathbb{H}^{r}$, and $\mathcal{H}_2:\mathbb{H}^{r}\rightarrow\mathbb{H}^m$, and $r \ll n$. First, let us represent the Hessian matrix as
\begin{equation}\label{eqn:id-low-rank}
    \grad^2 \hat{F}(X) = \zeta^{-1} \begin{bmatrix} \mathbb{I} & \bm{\mathcal{H}}_1^\dag \end{bmatrix} \begin{bmatrix}
        -\grad^2 (S + \,\zeta\log\det) (X) & \grad^2 S (\mathcal{H}(X))\bm{\mathcal{H}}_2 \\
       \bm{\mathcal{H}}_2^\dag\grad^2 S (\mathcal{H}(X)) & \bm{\mathcal{H}}_2^\dag\grad^2 S_{X} (\mathcal{H}(X)) \bm{\mathcal{H}}_2
    \end{bmatrix} \begin{bmatrix} \mathbb{I} \\ \bm{\mathcal{H}}_1 \end{bmatrix}. 
\end{equation}
This matrix can be interpreted as a rank-$2r^2$ perturbation of the matrix $\grad^2 (S + \,\zeta\log\det) (X)$. We can efficiently solve linear systems with by using Remark~\ref{rem:easy-inverse-hessian}, where $f(x)=-x\log(x)+\zeta/x$. Therefore, by using the variant of the matrix inversion lemma from Lemma~\ref{lem:matrix-inversion-lemma-c} on~\eqref{eqn:id-low-rank}, the cost of constructing the required Schur complement matrix is still $O(m^3r^2 + m^2r^4 + n^2r^4 + n^3r^2)$ flops, and the matrix factorization step now only costs $O(r^6)$ flops. That is, we can still apply a variant of the matrix inversion lemma to compute the inverse Hessian product efficiently, despite $\mathcal{G}$ being full rank, due to the sum of the matrices $\grad^2 S (X)$ and $\grad^2 \log\det (X)$ being easily invertible. 

\subsection{Difference of entropies}\label{subsec:diff-entropies}
In some instances, a certain block diagonal structure of the linear map $\mathcal{H}$ allows us to decompose quantum relative entropy into a difference of quantum entropies. 
\begin{lemma}\label{lem:partial-trace-entropy}
    For all $X\in\mathbb{H}^{nm}_+$, the identity $S_X(\mathbb{I}\otimes\tr_1^{n,m}[X]) = S(\tr_1^{n,m}(X))$ is satisfied.
\end{lemma}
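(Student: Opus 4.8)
The plan is to reduce the identity to two elementary ingredients: how the matrix logarithm interacts with tensoring by the identity, and the defining adjoint relation of the partial trace. Throughout, abbreviate $Y \coloneqq \tr_1^{n,m}(X)\in\mathbb{H}^m_+$, so that, directly from the definitions $S_X(Z)=-\tr[X\log(Z)]$ and $S(Y)=-\tr[Y\log(Y)]$, the claim to be shown becomes $-\tr[X\log(\mathbb{I}_n\otimes Y)] = -\tr[Y\log(Y)]$.

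First I would check that the left-hand side is well-defined, i.e.\ that $\ker(\mathbb{I}_n\otimes Y)\subseteq\ker X$. For any $w\in\ker Y$, the adjoint relation $(\tr_1^{n,m})^\dag(Z)=\mathbb{I}_n\otimes Z$ gives $\sum_{i=1}^n \inp{e_i\otimes w}{X(e_i\otimes w)} = \inp{w}{Yw} = 0$; since $X\succeq0$ each summand vanishes and hence $X(e_i\otimes w)=0$ for all $i$. As $\ker(\mathbb{I}_n\otimes Y)=\mathbb{C}^n\otimes\ker Y$, this yields the required inclusion and guarantees that $S_X(\mathbb{I}\otimes\tr_1^{n,m}(X))$ is finite.

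Next I would establish the spectral identity $\log(\mathbb{I}_n\otimes Y)=\mathbb{I}_n\otimes\log(Y)$. Diagonalizing $Y=UDU^\dag$, the matrix $\mathbb{I}_n\otimes U$ is unitary and conjugates $\mathbb{I}_n\otimes Y$ to $\mathbb{I}_n\otimes D$; since the logarithm acts on the (strictly positive) nonzero spectrum and commutes with the Kronecker factor, the claimed factorization holds on the range of $\mathbb{I}_n\otimes Y$. Combining this with the adjoint relation once more produces the chain
\[ -\tr[X\log(\mathbb{I}_n\otimes Y)] = -\tr[X(\mathbb{I}_n\otimes\log(Y))] = -\inp{\tr_1^{n,m}(X)}{\log(Y)} = -\tr[Y\log(Y)], \]
which is exactly $S(\tr_1^{n,m}(X))$.

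The one point needing care — and the step I would treat most carefully — is the singular case, where $\log(Y)$ is only defined on $\image(Y)$ and $\log(\mathbb{I}_n\otimes Y)$ only on $\image(\mathbb{I}_n\otimes Y)$. Here I would use the kernel inclusion from the first step to restrict every trace to the relevant range (with the convention $0\log(0)=0$), so that all quantities in the displayed chain are finite and the factorization of the logarithm is legitimate; alternatively one can recover the identity for singular $X$ as the $\varepsilon\downarrow0$ limit of the positive-definite case, mirroring~\eqref{eqn:quasi-entropy-limit}. Once this support bookkeeping is settled, the remaining manipulations are routine.
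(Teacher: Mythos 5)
Your proof is correct and follows essentially the same route as the paper's: both rest on the identity $\log(\mathbb{I}\otimes Y)=\mathbb{I}\otimes\log(Y)$ together with the adjoint relation $\inp{\mathbb{I}\otimes Z}{X}=\inp{Z}{\tr_1^{n,m}(X)}$. Your additional verification that $\ker(\mathbb{I}\otimes\tr_1^{n,m}(X))\subseteq\ker(X)$ and your handling of the singular case are welcome details the paper leaves implicit, but they do not change the argument.
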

\begin{proof}
    This follows from
    \begin{align*}
        S_X(\mathbb{I}\otimes\tr_1^{n,m}(X)) &= -\inp{X}{\log(\mathbb{I}\otimes\tr_1^{n,m}(X))} \\
        &= -\inp{X}{\mathbb{I}\otimes\log(\tr_1^{n,m}(X))} \\
        &= -\inp{\tr_1^{n,m}(X)}{\log(\tr_1^{n,m}(X))} = S(\tr_1^{n,m}(X)),
    \end{align*}
    where the second equality uses the fact that spectral functions act on each block of a block diagonal matrix independently, and the third equality uses the fact that the adjoint of the partial trace is the Kronecker product, i.e., $\inp{Y}{\tr_1^{n,m}(X)}=\inp{\mathbb{I}\otimes Y}{X}$ for all $X\in\mathbb{H}^{nm}$ and $Y\in\mathbb{H}^m$. 
\end{proof}
We now show how we can generalize this result to capture a larger class of linear operators which can be decomposed into a difference of entropies. Consider a linear operator $\mathcal{H}$ which consists of a direct sum of the partial trace expressions from Lemma~\ref{lem:partial-trace-entropy}, i.e.,
\begin{align}
    \mathcal{H}(X) &= V \bigoplus_{i=1}^{p} \biggl( (\mathcal{H}_i^\dag \circ \mathcal{H}_i \circ \mathcal{G}_i)(X) \biggl) V^\dag \nonumber \\
    &= 
    \begin{bmatrix} \vertbar && \vertbar \\ V_1 & \ldots & V_{p} \\ \vertbar && \vertbar \end{bmatrix} 
    \setlength{\arraycolsep}{0pt}
    \begin{bmatrix} (\mathcal{H}_1^\dag \circ \mathcal{H}_1 \circ \mathcal{G}_1)(X) && \\ & \ddots & \\ && (\mathcal{H}_p^\dag \circ \mathcal{H}_p \circ \mathcal{G}_p)(X) \end{bmatrix} 
    \begin{bmatrix} \horzbar & V_1^\dag & \horzbar \\  & \vdots &  \\  \horzbar & V_{p}^\dag & \horzbar \end{bmatrix},
\end{align}
where $\mathcal{H}_i(X)=\tr_1^{a_i,b_i}(X)$ are partial trace operators for all $i=1,\ldots,p$. Additionally, $V_i\in\mathbb{C}^{n\times a_ib_i}$ and
$\mathcal{G}_i(X)=V_i^\dag \mathcal{G}(X) V_i$ for $i=1,\ldots,p$, and $V$ is a unitary matrix. Then we can express the quantum relative entropy function as a sum of quantum entropy terms
\begin{align*}
    S\divx{\mathcal{G}(X)}{\mathcal{H}(X)} &= -S(\mathcal{G}(X)) + \sum_{i=1}^p S_{\mathcal{G}_i(X)}((\mathcal{H}_i^\dag \circ \mathcal{H}_i \circ \mathcal{G}_i)(X)) \\
    &= -S(\mathcal{G}(X)) + \sum_{i=1}^p S((\mathcal{H}_i \circ \mathcal{G}_i)(X)),
\end{align*}
where we use~\eqref{eqn:block-diag-qre} for the first equality, and Lemma~\ref{lem:partial-trace-entropy} for the last equality. The main advantage of this reformulation is that we can now express the Hessian as
\begin{align}\label{eqn:diff-entropy-hess}
    \grad^2 \hat{F}(X) &=  -\zeta^{-1}\bm{\mathcal{G}}^\dag \grad^2 S (\mathcal{G}(X)) \bm{\mathcal{G}} + \zeta^{-1}\sum_{i=1}^{p} \biggl(\bm{\mathcal{G}}_i^\dag \bm{\mathcal{H}}_i^\dag \grad^2 S ((\mathcal{H}_i \circ \mathcal{G}_i)(X)) \bm{\mathcal{H}}_i \bm{\mathcal{G}}_i \biggr) - \grad^2 \log\det (X),
\end{align}
which can be interpreted as a simplification of the Hessian previously found in~\eqref{eqn:hessian-block}. This allows us to avoid computing the second divided differences matrix associated with $\grad^2 S_C$ (see Lemma~\ref{eqn:lem-tr-c-f-derivative}), which should help improve the numerical stability of constructing the Hessian matrix. 

\begin{example}\label{exmp:conditional}
    An important example of a linear map satisfying this property is the quantum conditional entropy
    \begin{equation*}
        X \mapsto S\divx{X}{\mathbb{I} \otimes \tr_1^{n,m}(X)},
    \end{equation*}
    defined over $\mathbb{H}^{nm}_{+}$. This corresponds to the case when $p=1$, $\mathcal{G}(X)=X$, and $\mathcal{H}_1(X)=\tr_1^{n,m}(X)$. For the quantum conditional entropy, the Hessian~\eqref{eqn:diff-entropy-hess} can be simplified to 
    \begin{equation*}
        \grad^2 \hat{F}(X) = -\grad^2 (\zeta^{-1} S + \log\det) (X) + \bm{\mathcal{H}}^\dag_1 \grad^2 S (\mathcal{H}_1(X)) \bm{\mathcal{H}}_1.
    \end{equation*}
    This is a similar scenario to that discussed in Section~\ref{subsubsec:identity-low-rank}, and therefore we can use the matrix inversion lemma to efficiently compute inverse Hessian products for the barrier of the quantum conditional entropy cone introduced in Corollary~\ref{cor:qce-cone}. Note that the small matrix corresponding to the low rank perturbation matrix is positive definite and easily invertible, unlike in, e.g.,~\eqref{eqn:hessian-low-rank}. Therefore, we can use the symmetric variant of the matrix inversion lemma~\eqref{eqn:matrix-inversion-lemma-b}, and we can perform a Cholesky factorization of the Schur complement matrix rather than an LU factorization. Overall, it costs $O(n^3m^5)$ flops to construct the Schur complement, $O(m^6)$ flops to Cholesky factor the Schur complement, and $O(n^3m^3+m^4)$ flops to perform a linear solve using this Cholesky factorization.
\end{example}

\begin{example}\label{exmp:pinching}
    Another example is when $\mathcal{H}$ is the pinching map~\eqref{eqn:pinching-map}, which zeros out all off-diagonal blocks of a block matrix. This corresponds to the case when $a_1=\ldots=a_p=1$, i.e., the partial traces act like the identity. More concretely, we have
    \begin{align*}
        S\divx{\mathcal{G}(X)}{\mathcal{Z}(\mathcal{G}(X))} &= -S(\mathcal{G}(X)) + \sum_{i=1}^p S(\mathcal{G}_i(X)),
    \end{align*}
    where $\mathcal{G}_i(X)$ represents the $i$-th diagonal block of $\mathcal{G}(X)$. 
\end{example}

\section{Numerical experiments}\label{sec:exp}

In this section, we present numerical results for solving \diff{a variety} of quantum relative entropy \diff{optimization problems} \diff{arising in quantum information theory} to demonstrate how the techniques presented in this paper can be utilized. For each experiment, we show how 1) we can use tailored cones and barriers from Section~\ref{sec:opt-barrier} to model the quantum relative entropy \diff{optimization problem}, and 2) how to use techniques from Section~\ref{sec:structure} to efficiently solve linear systems with the Hessian matrix of the barrier.


All experiments are run in Julia using Hypatia~\cite{coey2023performance}, which already provides support for the quantum relative entropy cone, quantum entropy cone, and classical relative entropy cone. All experiments are run using Hypatia's default settings, and on an Intel i5-11700 CPU with 32GB of RAM. We report the total number of iterations and time taken to solve each problem to the desired tolerance (Hypatia uses a default relative gap tolerance and primal-dual feasibility tolerance of $1.5\times10^{-8}$), excluding time spent in preprocessing steps. We also report the average time taken to compute a single interior-point step. We use ``OOM'' to indicate instances where there is insufficient computational memory to solve a problem. Whenever problems are modelled using a constraint of the form $h-Gx\in\mathcal{K}$ where $G\neq-\mathbb{I}$, we use Hypatia's default system solver, which is based on the method from~\cite[Section 10.3]{vandenberghe2010cvxopt}. Otherwise, we use a custom-implemented Newton system solver based on the block elimination method discussed in Appendix~\ref{appdx:pdipm}. We present more details about how each experiment is modelled in the standard conic program form accepted by Hypatia in Appendix~\ref{appdx:model}.

\diff{For the interested reader, we also provide a brief comparison between our proposed method and prior techniques which have been used to solve the quantum relative entropy optimization problems in each application we explore.}

\subsection{Quantum key distribution}\label{subsec:qkd-experiments}

\diff{Quantum key distribution is a scheme for generating and sharing a private key, which allows two parties to communicate information securely. The security of quantum key distribution protocols depends on a quantity called the key rate, which was originally given by the Devetak-Winter formula~\cite{devetak2005distillation}. This was later simplified in~\cite{coles2016numerical,winick2018reliable} to the following convex optimization problem}
\begin{subequations}\label{eqn:qkd}
    \begin{align}
        \minimize_{X\in\mathbb{H}^n} \quad & S\divx{\mathcal{G}(X)}{\mathcal{Z}(\mathcal{G}(X))} + p_{\textnormal{pass}}\delta_{\textnormal{EC}} \\
        \subjto \quad & \mathcal{A}(X) = b \\
        & X \succeq 0,
    \end{align}
\end{subequations}
\diff{where $\mathcal{G}:\mathbb{H}^n\rightarrow\mathbb{H}^{mr}$ is a positive linear map related to a particular description of a quantum key distribution protocol. Usually, $\mathcal{G}$ is given in the form}
\begin{equation}
    \mathcal{G}(X) = \sum_{i=1}^l K_i X K_i^\dag,
\end{equation}
\diff{where $K_i:\mathbb{C}^n\rightarrow\mathbb{C}^{mr}$ for $i=1,\ldots,l$, and are known as the Kraus operators associated with $\mathcal{G}$. Additionally, $\mathcal{Z}:\mathbb{H}^{mr}\rightarrow\mathbb{H}^{mr}$ is another positive linear map known as the pinching map, which maps all off-diagonal blocks of a matrix to zero for a given block structure. Concretely, we can represent this as}
\begin{equation}\label{eqn:pinching-map}
    \mathcal{Z}(Y) = \sum_{i=1}^r (e_ie_i^\top \otimes \mathbb{I}_{m}) Y (e_ie_i^\top \otimes \mathbb{I}_{m}),
\end{equation}
\diff{where $e_i\in\mathbb{C}^r$ are the standard $r$-dimensional basis vectors for $i=1,\ldots,r$. Finally, $\mathcal{A}:\mathbb{H}^n\rightarrow\mathbb{R}^p$ and $b\in\mathbb{R}^p$ encode a set of experimental constraints $\mathcal{A}(X)=b$, and $p_{\textnormal{pass}}$ and $\delta_{\textnormal{EC}}$ are constants dependent on the quantum key distribution protocol.}


\diff{To compute the quantum key rate}, we can directly model~\eqref{eqn:qkd} using a suitable application of the cone and barrier presented in Corollary~\ref{cor:main} (see~\eqref{eqn:dprbb84-dpr} for a concrete representation). To construct the Hessian matrix of this barrier, we first take advantage of the pinching map $\mathcal{Z}$ to decompose the quantum relative entropy into a difference of quantum entropies, as seen in Example~\ref{exmp:pinching}. Additionally, the image of $\mathcal{G}$ consistes of (typically) large but singular matrices with a large common kernel. Therefore, we can use the technique discussed in Section~\ref{subsubsec:facial-reduction} to reduce the size of the matrices we deal with. By defining $V$ and $W_i$ as the isometry matrices with orthonormal columns living in the image of matrices $\mathcal{G}(\mathbb{I})$ and $\mathcal{G}_i(\mathbb{I})$ for $i=1,\ldots,p$, respectively, we obtain
\begin{align}\label{eqn:qkd-hessian}
    S\divx{\mathcal{G}(X)}{\mathcal{Z}(\mathcal{G}(X))} &= -S(V^\dag\mathcal{G}(X)V) + \sum_{i=1}^p S(W_i^\dag\mathcal{G}_i(X)W_i).
\end{align}
The Hessian matrix can then be computed using a straightforward modification of~\eqref{eqn:diff-entropy-hess}. Similar techniques were used in~\cite{hu2022robust} to compute quantum key rates.

We now focus on the two particular quantum key distribution protocols to see how knowledge about $\mathcal{G}$ can further simplify computation of the inverse Hessian product, and present numerical results for each of these problems. \diff{Note we choose to study these protocols as their problem dimensions are parameterized, and are therefore the most computationally interesting examples out of the protocols studied in~\cite{hu2022robust}.}

\subsubsection{Discrete-phase-randomized protocol}
\diff{\sloppy We first introduce the discrete-phase-randomized variant of the Bennett-Brassard 1984 (dprBB84) protocol~\cite{cao2015discrete}. This protocol is parameterized by a number of global phases $c$ used to generate the private key, and a probability $p$ that a particular measurement basis is chosen. Given these parameters, the Kraus operators of $\mathcal{G}:\mathbb{H}^{12c}\rightarrow\mathbb{H}^{48c}$ are given by}
\begin{subequations}\label{eqn:dprbb84-kraus}
    \begin{align}
        K_1 = \sqrt{p} &\mleft[ \begin{bmatrix} 1 \\ 0 \end{bmatrix} \otimes \biggl(\bigoplus_{i=1}^c \Pi_1 \biggr) + \begin{bmatrix} 0 \\ 1 \end{bmatrix} \otimes \biggl(\bigoplus_{i=1}^c \Pi_2 \biggr) \mright] \otimes \begin{bmatrix}
            1 & & \\ & 1 & \\ & & 0
        \end{bmatrix} \otimes \begin{bmatrix} 1 \\ 0 \end{bmatrix} \\ 
        K_2 = \sqrt{1 - p} &\mleft[ \begin{bmatrix} 1 \\ 0 \end{bmatrix} \otimes \biggl(\bigoplus_{i=1}^c \Pi_3 \biggr) + \begin{bmatrix} 0 \\ 1 \end{bmatrix} \otimes \biggl(\bigoplus_{i=1}^c \Pi_4 \biggr) \mright] \otimes \begin{bmatrix}
            1 & & \\ & 1 & \\ & & 0
        \end{bmatrix} \otimes \begin{bmatrix} 0 \\ 1 \end{bmatrix},
    \end{align}   
\end{subequations}
\diff{where $\Pi_i\in\mathbb{H}^4$ is defined as $\Pi_i=e_ie_i^\top\in\mathbb{H}^4$ for $i=1,2,3,4$. The pinching map $\mathcal{Z}:\mathbb{H}^{48c}\rightarrow\mathbb{H}^{48c}$ is defined to act on a $2\times2$ block matrix, i.e., $r=2$. See~\cite[Section IV.D]{george2021numerical} for additional details of this protocol.}

In Appendix~\ref{appdx:dprBB84} we show that both $\mathcal{G}$ and $\mathcal{Z}\circ\mathcal{G}$ have block diagonal structures~\eqref{eqn:dpr-g} and~\eqref{eqn:dpr-z}, respectively. Applying the techniques for block diagonal structures from Section~\ref{subsec:blk-diag} and the facial reduction techniques from Section~\ref{subsubsec:facial-reduction}, we therefore have
\begin{equation}\label{eqn:dpr-a}
    S\divx{\mathcal{G}(X)}{\mathcal{Z}(\mathcal{G}(X))} = \sum_{i=1}^2 \biggl( -S(\mathcal{G}_i(X)) + S(\mathcal{Z}_1(\mathcal{G}_i(X))) + S(\mathcal{Z}_2(\mathcal{G}_i(X))) \biggr),
\end{equation}
where $\mathcal{G}_1:\mathbb{H}^{12c}\rightarrow\mathbb{H}^{4c}$, $\mathcal{G}_2:\mathbb{H}^{12c}\rightarrow\mathbb{H}^{4c}$, $\mathcal{Z}_1:\mathbb{H}^{4c}\rightarrow\mathbb{H}^{2c}$ and $\mathcal{Z}_2:\mathbb{H}^{4c}\rightarrow\mathbb{H}^{2c}$ are defined in Appendix~\ref{appdx:dprBB84}. To summarize, we have facially reduced $\mathcal{G}(X)$ and $\mathcal{Z}(\mathcal{G}(X))$ from size $48c\times48c$ matrices to size $8c\times8c$ matrices. We then further decomposed $\mathcal{G}(X)$ into two blocks of size $4c\times4c$, and $\mathcal{Z}(\mathcal{G}(X))$ into four blocks of size $2c\times2c$. As these reduced matrices are relatively small compared to $X$, then, as discussed in Example~\ref{exmp:low-rank-small}, we can also exploit this low-rank structure to compute the inverse Hessian product more efficiently. We can see this by expressing the Hessian matrix as
\begin{equation}\label{eqn:dpr-hessian}
    \grad^2 \hat{F}(X) = \zeta^{-1} \begin{bmatrix}
        \bm{\mathcal{G}}_1^\dag & \bm{\mathcal{G}}_2^\dag
    \end{bmatrix} \begin{bmatrix}
        M_1 \\ & M_2
    \end{bmatrix}  \begin{bmatrix}
        \bm{\mathcal{G}}_1 \\ \bm{\mathcal{G}}_2
    \end{bmatrix} - \grad^2\log\det(X),
\end{equation}
where
\begin{equation*}
    M_i = -\grad^2 S(\mathcal{G}_i(X)) + \bm{\mathcal{Z}}_1^\dag \grad^2 S(\mathcal{Z}_1(\mathcal{G}_i(X))) \bm{\mathcal{Z}}_1 + \bm{\mathcal{Z}}_2^\dag \grad^2 S(\mathcal{Z}_2(\mathcal{G}_i(X))) \bm{\mathcal{Z}}_2,
\end{equation*}
for $i=1,2$. Here, we have a rank-$32c^2$ perturbation of an easily invertible $144c^2\times144c^2$-dimensional log determinant Hessian matrix, and therefore we can use the matrix inversion lemma~\eqref{eqn:matrix-inversion-lemma-a} to solve linear systems with this Hessian matrix. Additionally, as $\mathcal{G}_1$ and $\mathcal{G}_2$ are defined to extract a principal submatrix from the input matrix, they are therefore highly structured and the discussion in Remark~\ref{rem:efficient-schur} applies to allow us compute the Schur complement matrix more efficiently. We denote this proposed method where we model using a suitably parameterized cone from Corollary~\ref{cor:main}, as shown in~\eqref{eqn:dprbb84-dpr}, and use the aforementioned technique to solve linear systems with the Hessian of the barrier, as DPR. 

\paragraph{Experimental setup}
We run experiments for dprBB84 protocols for various numbers of global phases $c$ and a probability $p=0.5$ of using a particular measurement basis. We simulate the experimental constraint data $\mathcal{A}$ and $b$ using a source intensity of $\alpha=0.14$ and transmission channel length of $L=\SI{30}{\meter}$. All other experimental parameters are the same as those used in~\cite{hu2022robust}. We benchmark DPR against two other baseline methods. First, we again model the problem using a suitably parameterized cone from Corollary~\ref{cor:main}, as shown in~\eqref{eqn:dprbb84-dpr}, but instead solve linear systems with the Hessian of the barrier by constructing the Hessian matrix~\eqref{eqn:dpr-hessian} and performing a Cholesky factor-solve procedure (QKD). Second, we model the problem using the full quantum relative entropy cone (QRE), as shown in~\eqref{eqn:dprbb84-qre}, which we preprocess with the technique~\cite[Lemma 6.2]{karimi2023efficient} to reduce the size of the matrices we need to work with. Our experimental results are summarized in Table~\ref{tab:qkd-dpr}. 

\paragraph{Discussion}
Overall, we observe that using a cone and barrier specialized for the quantum key distribution problem can yield noticeable improvements to solve times. 
We see that the QKD method improves solve times by approximately a factor of $2$ over the QRE method. By further taking advantage of our knowledge that the Hessian can be interpreted as a low rank perturbation of the log determinant Hessian (i.e., DPR), we observe a further $2$ to $5$ times speedup in solve time. We note that the DPR technique seemed to be less numerically stable as the magnitudes of the residuals from solving the Newton equations solved to were observed to be larger, however the method was still able to converge to the desired tolerance. 
We compare our results to those presented in~\cite{hu2022robust}, where the authors present results for the dprBB84 protocol only for problems up to $c=5$, whereas we present experiments for problems up to $c=10$. For problems of dimension $c=4$, the reported computation times in~\cite{hu2022robust} are all at least $\SI{300}{\second}$, which is slower than all methods that we present in Table~\ref{tab:qkd-dpr}. 

\begin{table}
\footnotesize
\caption{Computational performance of computing the quantum key rates for the dprBB84 protocol with $c$ global phases\diff{, which involves quantum relative entropies with $m\times m$ matrix arguments}. We compare computing this quantity by using the full quantum relative entropy cone (QRE), against using a tailored cone and barrier (QKD), and where we additionally exploit the Hessian being a low-rank perturbation of the log determinant Hessian (DPR).}
\label{tab:qkd-dpr}
\centering
\sisetup{add-integer-zero=false,add-decimal-zero=false}
\begin{tabular*}{\textwidth}{@{\extracolsep{\fill}}rc*{3}{S[table-format=-2.1]}*{3}{S[table-format=-4.2]}*{3}{S[table-format=-1.3]}@{\extracolsep{\fill}}}
\toprule
 & & \multicolumn{3}{l}{\textbf{iter}} & \multicolumn{3}{l}{\textbf{solve time} (s)} & \multicolumn{3}{l}{\textbf{time per iter} (s/iter)} \\
 \cmidrule{3-5}\cmidrule{6-8}\cmidrule{9-11}
$c$ & \diff{$m$} & \multicolumn{1}{c}{DPR} & \multicolumn{1}{c}{QKD} & \multicolumn{1}{c}{QRE} & \multicolumn{1}{c}{DPR} & \multicolumn{1}{c}{QKD} & \multicolumn{1}{c}{QRE} & \multicolumn{1}{c}{DPR} & \multicolumn{1}{c}{QKD} & \multicolumn{1}{c}{QRE} \\ \midrule
2 & \diff{96} & 33 & 32 & 24 & .7 & 1.1 & 2.7 & .02 & .04 & .11 \\
4 & \diff{192} & 27 & 42 & 31 & 5.7 & 21.5 & 24.6 & .21 & .51 & .79 \\
6 & \diff{288} & 26 & 27 & 27 & 27.0 & 64.2 & 129.5 & 1.04 & 2.38 & 4.80 \\
8 & \diff{384} & 28 & 28 & 27 & 77.2 & 311.2 & 697.3 & 2.76 & 11.11 & 25.82 \\
10 & \diff{480} & 28 & 29 & 27 & 278.5 & 1198.2 & 1833.4 & 9.94 & 41.32 & 67.91 \\
\bottomrule
\end{tabular*}
\end{table}

\subsubsection{Discrete-modulated continuous-variable protocol}
\diff{The second protocol we study is the discrete-modulated continuous-variable (DMCV) quantum key distribution protocol~\cite{lin2019asymptotic}. This protocol is parameterized by a cutoff photon number $N_c$, in which case $\mathcal{G}:\mathbb{H}^{4(N_c+1)}\rightarrow\mathbb{H}^{16(N_c+1)}$ is defined by a single Kraus operator}
\begin{equation}\label{eqn:dmcv-kraus}
    K = \sum_{i=1}^4 e_i \otimes \mathbb{I}_4 \otimes P_i,
\end{equation}
\diff{where $e_i\in\mathbb{C}^4$ are the standard $4$-dimensional basis vectors and $P_i\in\mathbb{H}^{N_c+1}_+$ are positive definite measurement operators for $i=1,2,3,4$. The pinching map $\mathcal{Z}:\mathbb{H}^{16(N_c+1)}\rightarrow\mathbb{H}^{16(N_c+1)}$ is defined to act on a $4\times4$ block matrix, i.e., $r=4$. See~\cite[Protocol 2]{lin2019asymptotic} for additional details of this protocol.}

For this protocol, there are no obvious additional structures that we can exploit like the dprBB84 protocol. Therefore, we propose to model the problem using a suitably parameterized cone from Corollary~\ref{cor:main}, as shown in~\eqref{eqn:dprbb84-dpr}, and solve linear systems with the Hessian of the barrier by constructing and Cholesky factoring the Hessian matrix~\eqref{eqn:qkd-hessian}. Consistent with our notation for the dprBB84 experiments, we denote this method as QKD.

Note that as $\mathcal{G}$ consists of a single tall Kraus operator given by~\eqref{eqn:dmcv-kraus}, that $\mathcal{G}(X) \in \mathbb{H}^{16(N_c+1)}$ must be singular with a rank of at most $4(N_c+1)$ for any $X\in\mathbb{H}^{4(N_c+1)}$. Therefore, facial reduction on $\mathcal{G}(X)$ should always be able to reduce from a matrix of size $16(N_c+1)\times16(N_c+1)$ to a matrix of size $4(N_c+1)\times4(N_c+1)$. Similarly, $\mathcal{Z}(\mathcal{G}(X))$ decomposes into four blocks of size $4(N_c+1)\times4(N_c+1)$.

\paragraph{Experimental setup}
We run experiments for DMCV protocols for various cutoff photon numbers $N_c$, and simulate the experimental constraint data $\mathcal{A}$ and $b$ using a source intensity of $\alpha=0.35$, transmission channel length of $L=\SI{60}{\meter}$, and excess noise of $\xi=0.05$. All other experimental parameters are the same as those used in~\cite{hu2022robust}. Like the dprBB84 protocol, we compare QKD against modelling the problem using the full quantum relative entropy cone (QRE), as shown in~\eqref{eqn:dprbb84-qre}. Our results are presented in Table~\ref{tab:qkd-dmcv}. 

\paragraph{Discussion}
From the results, we see that the QRE method performs significantly worse compared to the other methods which use a more tailored cone and barrier. This is due to the facial reduction technique in~\cite[Lemma 6.2]{karimi2023efficient} failing to simplify this problem, and therefore the QRE method must handle matrices of much larger dimensions compared to QKD. Notably, the size of the Hessian matrix of the quantum relative entropy cone used in QRE is $256$ times larger than the size of the Hessian matrices of the tailored cones used by QKD. Additionally, by taking advantage of the block diagonal structure in the linear maps, the QKD method constructs the Hessian matrix by performing matrix multiplications between matrices which have a size $16$ times smaller than those required by the QRE method.
Like for the dprBB84 experiments, we compare our results to those from~\cite{hu2022robust}, which presents results for problems of up to $N_c=11$, compared to our experiments which go up to $N_c=16$. Additionally, \cite{hu2022robust} reports solve times of at least $\SI{500}{\second}$ and $\SI{700}{\second}$ for problems of dimension $N_c=10$ and $11$, respectively, which is slower than our reported solve times for the same problem dimensions.

\subsubsection{\diff{Comparison to prior works}}
\diff{Many works have studied the problem of numerically computing the quantum key rate. In~\cite{coles2016numerical,winick2018reliable}, a first-order Frank-Wolfe method was used to compute these key rates, however this method struggled at times to converge to high accuracy solutions. In~\cite{hu2022robust}, interior-point methods (see also~\cite{faybusovich2020self,karimi2023efficient}) were used with facial reduction techniques to compute the key rates in less time and to higher accuracies compared to these first-order methods. Alternatively, methods which construct a hierarchy of semidefinite programs to lower bound the quantum key rates were proposed in~\cite{brown2021device,araujo2023quantum}. The QKD baseline we present corresponds to the technique proposed in~\cite{lorente2024quantum}. In that paper, the authors present benchmarks showing that the QKD baseline outperforms previous approaches based on interior-point methods and semidefinite approximations.}

\begin{table}
\footnotesize
\caption{Computational performance of computing the quantum key rates for the DMCV protocol\diff{, which involves quantum relative entropies with $m\times m$ matrix arguments}. We compare lifting the problem to the full quantum relative entropy cone (QRE) against using a tailored cone and barrier (QKD).}
\label{tab:qkd-dmcv}
\centering
\sisetup{add-integer-zero=false,add-decimal-zero=false}
\begin{tabular*}{0.8\textwidth}{@{\extracolsep{\fill}}*{1}{S[table-format=-1.0]}c*{2}{S[table-format=-2.1]}*{1}{S[table-format=-3.2]}*{1}{S[table-format=-5.2]}*{1}{S[table-format=-1.2]}*{1}{S[table-format=-2.2]}@{\extracolsep{\fill}}}
\toprule
 & & \multicolumn{2}{l}{\textbf{iter}} & \multicolumn{2}{l}{\textbf{solve time} (s)} & \multicolumn{2}{l}{\textbf{time per iter} (s/iter)} \\
 \cmidrule{3-4}\cmidrule{5-6}\cmidrule{7-8}
$N_c$ & \diff{$m$} & \multicolumn{1}{c}{QKD} & \multicolumn{1}{c}{QRE} & \multicolumn{1}{c}{QKD} & \multicolumn{1}{c}{QRE} & \multicolumn{1}{c}{QKD} & \multicolumn{1}{c}{QRE} \\ \midrule
4 & \diff{80} & 29 & 33 & 3.1 & 1476.7 & 0.1 & 44.7 \\
8 & \diff{144} & 30 & 30 & 10.5 & 20364.1 & 0.4 & 678.8 \\
12 & \diff{208} & 34 & OOM & 84.2 & OOM & 2.5 & OOM \\
16 & \diff{272} & 30 & OOM & 294.7 & OOM & 9.8 & OOM \\
20 & \diff{336} & 33 & OOM & 1029.4 & OOM & 31.2 & OOM \\ \bottomrule
\end{tabular*}
\end{table}

\subsection{Quantum rate-distortion}
\diff{When we transmit information through a communication channel, we are often interested in how to encode the information as efficiently as possible. If we can tolerate some error or distortion in the signal, then we can perform a lossy compression scheme to encode our information. The quantum rate-distortion function~\cite{datta2012quantum,wilde2013quantum} quantifies how much a given quantum information source, represented as a positive definite Hermitian matrix $W\in\mathbb{H}^n_+$ with unit trace, can be compressed without exceeding a maximum allowable distortion $D\geq0$. This distortion is measured as a linear function $X\mapsto\inp{\Delta}{X}$, where $\Delta\in\mathbb{H}^{nm}_+$ is a positive semidefinite matrix. Given this, the \emph{entanglement-assisted rate-distortion function} is given by}
\begin{subequations}\label{eqn:qrd}
    \begin{align}
        R(D) \quad = \quad \minimize_{X \in \mathbb{H}^{nm}} \quad & S\divx{X}{\mathbb{I}\otimes \tr_1^{n,m}(X)} + S(W)  \\
        \subjto \quad & \tr_2^{n,m}(X) = W\\
                \quad & \inp{\Delta}{X} \leq D \\
                \quad & X \succeq 0.
    \end{align}
\end{subequations}
To solve~\eqref{eqn:qrd}, we can directly model the problem using the quantum conditional entropy cone from Corollary~\ref{cor:qce-cone} (see~\eqref{eqn:qrd-qce} for a concrete representation). Additionally, we can efficiently compute inverse Hessian products with the corresponding barrier function as discussed in Example~\ref{exmp:conditional}. We denote this proposed method as QCE.

\paragraph{Experimental setup}
For our experiments, we compute the entanglement-assisted rate-distortion at $D=0.5$ of randomly generated $n\times n$ density matrices sampled uniformly from the Hilbert-Schmidt measure~\cite{zyczkowski2011generating}. We define $\Delta$ to represent the entanglement fidelity distortion~\eqref{eqn:ef-distortion-matrix}. We benchmark using QCE to compute the entanglement-assisted rate-distortion against lifting the problem to the quantum relative entropy cone (QRE), as shown in~\eqref{eqn:qrd-qre}. Our results are summarized in Table~\ref{tab:qrd}.

\paragraph{Discussion}
We see that solving the problem using the QCE method is orders of magnitude faster compared to QRE. This is expected, as using Table~\ref{table:flop-count}, we can predict that solving linear systems with the Hessian of the barrier of the quantum conditional entropy cone should require approximately $n^4$ times fewer flops compared to using the quantum relative entropy cone. Additionally, we observe that fewer iterations are required to converge to the desired tolerance, particularly at larger problem dimensions, which may be attributed to the smaller barrier parameter of the quantum conditional entropy cone.

\begin{table}
\footnotesize
\caption{Computational performance of computing the entanglement-assisted rate-distortion at $D=0.5$ for randomly generated quantum states of size $n\times n$ and using the entanglement fidelity distortion, between conic program formulations using the quantum conditional entropy cone (QCE) and the quantum relative entropy cone (QRE). \diff{Each problem involves quantum relative entropies with $m\times m$ matrix arguments.}}
\label{tab:qrd}
\centering
\sisetup{add-integer-zero=false,add-decimal-zero=false}
\begin{tabular*}{0.85\textwidth}{@{\extracolsep{\fill}}lc*{2}{S[table-format=-2.1]}*{2}{S[table-format=-5.4]}*{2}{S[table-format=-3.5]}@{\extracolsep{\fill}}}
\toprule
 & & \multicolumn{2}{l}{\textbf{iter}} & \multicolumn{2}{l}{\textbf{solve time} (s)} & \multicolumn{2}{l}{\textbf{time per iter} (s/iter)}  \\
 \cmidrule{3-4}\cmidrule{5-6}\cmidrule{7-8}
$n$ & \diff{$m$} & \multicolumn{1}{c}{QCE} & \multicolumn{1}{c}{QRE} & \multicolumn{1}{c}{QCE} & \multicolumn{1}{c}{QRE} & \multicolumn{1}{c}{QCE} & \multicolumn{1}{c}{QRE} \\ \midrule
$2$ & \diff{4} & 10 & 12 & .008 & .048 & .0008 & .0040 \\
$4$ & \diff{16} & 21 & 27 & .047 & 1.746 & .0022 & .0647 \\
$6$ & \diff{36} & 25 & 36 & .272 & 12.178 & .0109 & .3383 \\
$8$ & \diff{64} & 30 & 52 & 1.191 & 145.670 & .0397 & 2.8013 \\
$10$ & \diff{100} & 41 & 61 & 5.394 & 2177.533 & .1316 & 35.6973 \\
$12$ & \diff{144} & 46 & 80 & 14.580 & 11597.550 & .3170 & 144.9694 \\
$14$ & \diff{196} & 49 & 91 & 56.232 & 100624.473 & 1.1476 & 1105.7634 \\
$16$ & \diff{256} & 55 & OOM & 101.571 & OOM & 1.8467 & OOM \\ \bottomrule
\end{tabular*}%
\end{table}

\subsubsection{Entanglement fidelity distortion}
\diff{For some choices of $\Delta$, there exist symmetries that we can exploit to reduce the problem dimension of~\eqref{eqn:qrd}~\cite{he2023efficient}. In particular, let us assume that $W=\diag(w_1,\ldots,w_n)$ is diagonal, which we can do without loss of generality due to unitary invariance of the rate-distortion problem~\cite[Appendix B]{he2023efficient}. Then it is standard to measure distortion using the entanglement fidelity distortion measure, which corresponds to taking $m=n$ and}
\begin{equation}\label{eqn:ef-distortion-matrix}
    \Delta = \mathbb{I} - \sum_{i=1}^n\sum_{j=1}^n \sqrt{w_i w_j} \, e_ie_j^\top \otimes e_ie_j^\top.
\end{equation}
\diff{For this choice of $\Delta$, we know from~\cite[Theorem 4.13]{he2023efficient} that a solution to the problem~\eqref{eqn:qrd} must lie in the subspace}
\begin{equation}\label{eqn:qrd-fixed-point-subspace}
    \mathcal{V} = \biggl\{ \sum_{i \neq j}^n y_{ij} e_ie_i^\top \otimes e_je_j^\top + \sum_{ij}^n Z_{ij} e_ie_j^\top \otimes e_ie_j^\top : y_{ij} \in \mathbb{R} \ \forall i\neq j, \ Z \in \mathbb{H}^n \biggr\},
\end{equation}
\diff{which can be interpreted as a block diagonal subspace with $n^2-n$ blocks of size $1\times1$, and one block of size $n\times n$. This allows us to numerically compute the quantum rate-distortion function more efficiently by only searching over the $(2n^2-n)$-dimensional subspace $\mathcal{V}$, rather than the full $n^4$-dimensional vector space $\mathbb{H}^{n^2}$. In We will do this by modeling the optimization problem with the following cone.}

\begin{corollary}\label{cor:cone-qrd}
    \diff{Let us parameterize the subspace~\eqref{eqn:qrd-fixed-point-subspace} using $(y, Z)\in\mathbb{R}^{n^2-n}\times\mathbb{H}^n$ together with the linear map $\mathcal{G}:\mathbb{R}^{n^2-n}\times\mathbb{H}^n\rightarrow\mathbb{H}^{n^2}$ defined by}
    \begin{equation}\label{eqn:qrd-lin-map}
        \mathcal{G}(y, Z) = \sum_{i \neq j}^n y_{ij} e_ie_i^\top \otimes e_je_j^\top + \sum_{ij}^n Z_{ij} e_ie_j^\top \otimes e_ie_j^\top.
    \end{equation}
    \diff{Then the restriction of~\eqref{eqn:qce-barrier} to the subspace~\eqref{eqn:qrd-fixed-point-subspace}, i.e.,}
    \begin{equation}\label{eqn:qrd-barrier}
        (t, y, Z) \mapsto -\log(t - S\divx{\mathcal{G}(y, Z)}{\mathbb{I}\otimes\tr_1^{n,n}(\mathcal{G}(y, Z))}) - \sum_{i\neq j}^n \log(y_{ij}) - \log\det(Z),
    \end{equation}
    \diff{defined on $\mathbb{R}\times\mathbb{R}^{n^2-n}_{++}\times\mathbb{H}^n_{++}$ is an $(n^2+1)$-logarithmically homogeneous self-concordant barrier for the restriction of the quantum conditional entropy cone to~\eqref{eqn:qrd-fixed-point-subspace}, i.e.,}
    \begin{equation}
        \mathcal{K}_{\textnormal{qrd}} \coloneqq \closure \{ (t, y, Z) \in\mathbb{R}\times\mathbb{R}^{n^2-n}_{++}\times\mathbb{H}^n_{++} : t \geq S\divx{\mathcal{G}(y, Z)}{\mathbb{I}\otimes\tr_1^{n,n}(\mathcal{G}(y, Z))} \}.
    \end{equation}
    \diff{Moreover, this barrier is optimal in the sense that any self-concordant barrier for $\mathcal{K}_{\textnormal{qrd}}$ has parameter at least $n^2+1$.}
\end{corollary}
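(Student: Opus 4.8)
The plan is to recognize that the barrier~\eqref{eqn:qrd-barrier} is nothing more than the quantum conditional entropy barrier~\eqref{eqn:qce-barrier} of Corollary~\ref{cor:qce-cone} (taken with $m=n$) composed with the linear parameterization $\mathcal{B}(t, y, Z) = (t, \mathcal{G}(y, Z))$, and then to invoke affine invariance of self-concordant barriers. First I would analyze the image of $\mathcal{G}$. The two families of basis matrices $e_ie_i^\top\otimes e_je_j^\top$ (for $i\neq j$) and $e_ie_j^\top\otimes e_ie_j^\top$ occupy disjoint entries of $\mathbb{H}^{n^2}$: the former place $y_{ij}$ on the diagonal entry indexed by $(i,j)$ with $i\neq j$, while the latter place $Z_{ij}$ at the position with row index $(i,i)$ and column index $(j,j)$. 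Hence, after reordering the tensor basis of $\mathbb{C}^{n^2}$ so that the $n$ indices of the form $(i,i)$ come first, $\mathcal{G}(y, Z)$ becomes block diagonal,
\begin{equation*}
    \mathcal{G}(y, Z) \cong Z \oplus \diag(\{y_{ij}\}_{i\neq j}),
\end{equation*}
with one $n\times n$ block equal to $Z$ and $n^2-n$ scalar blocks equal to the $y_{ij}$. In particular $\mathcal{G}$ is injective, and $\mathcal{G}(y, Z)\succ0$ if and only if $Z\succ0$ and $y_{ij}>0$ for all $i\neq j$.

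Given this structure, I would use the identity $\log\det(\mathcal{G}(y, Z)) = \log\det(Z) + \sum_{i\neq j}\log(y_{ij})$ to rewrite the $-\log\det$ term of~\eqref{eqn:qce-barrier}, and thereby conclude that~\eqref{eqn:qrd-barrier} is exactly $F\circ\mathcal{B}$, where $F$ denotes the barrier~\eqref{eqn:qce-barrier}. Since $\mathcal{B}$ is linear with image meeting $\interior\mathcal{K}_{\textnormal{qce}}$ (for instance, take $Z=\mathbb{I}$, $y_{ij}=1$, and $t$ large), affine invariance of self-concordance together with the chain-rule preservation of the bound $2\grad F[h] - \grad^2 F[h, h] \leq \nu$ shows that $F\circ\mathcal{B}$ is an $(n^2+1)$-self-concordant barrier for the preimage $\mathcal{B}^{-1}(\mathcal{K}_{\textnormal{qce}})$. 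It then remains to verify $\mathcal{B}^{-1}(\mathcal{K}_{\textnormal{qce}}) = \mathcal{K}_{\textnormal{qrd}}$: this holds because $\mathcal{G}(y, Z)\in\interior\mathbb{H}^{n^2}_+$ is equivalent to $(y, Z)\in\mathbb{R}^{n^2-n}_{++}\times\mathbb{H}^n_{++}$ by the displayed block-diagonal form, and the epigraph inequality carries over directly since $\mathcal{G}$ merely reparameterizes the argument. Logarithmic homogeneity with the same parameter $n^2+1$ then follows because $\mathcal{B}$ is linear, so $F(\mathcal{B}(\tau x)) = F(\tau\mathcal{B}(x)) = F(\mathcal{B}(x)) - (n^2+1)\log(\tau)$; alternatively it can be confirmed directly from~\eqref{eqn:logarithmic-homogeneity}.

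For optimality, I would argue exactly as in the proof of Theorem~\ref{thm:main}, invoking the lower-bound machinery of~\cite[Corollary 3.13]{fawzi2023optimal}: any self-concordant barrier for the hypograph cone $\mathcal{K}_{\textnormal{qrd}}$ restricts to one for its base domain $\mathbb{R}^{n^2-n}_{++}\times\mathbb{H}^n_{++}$, whose minimal barrier parameter is $(n^2-n) + n = n^2$ (the sum of the optimal parameters of the nonnegative orthant and the positive semidefinite cone), while the extra epigraph coordinate contributes at least one more. This gives the lower bound $n^2+1$, matching the achievable value established above.

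The block-diagonalization of $\mathcal{G}$ and the verification that~\eqref{eqn:qrd-barrier} is genuinely the restriction of~\eqref{eqn:qce-barrier} are routine once the disjointness of the two families of basis matrices is observed. The only step requiring care on the achievability side is confirming $\mathcal{B}^{-1}(\mathcal{K}_{\textnormal{qce}}) = \mathcal{K}_{\textnormal{qrd}}$ (so that the closure of the domain of $F\circ\mathcal{B}$ coincides with the stated cone) and tracking that the barrier parameter is \emph{preserved} rather than merely bounded under the linear composition. I expect the optimality lower bound to be the conceptually heaviest ingredient, but it is delegated entirely to the cited result once the base-domain product structure $\mathbb{R}^{n^2-n}_{++}\times\mathbb{H}^n_{++}$ has been identified.
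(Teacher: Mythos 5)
Your proof is correct, but it takes a different route from the paper's. The paper works at the level of \emph{compatibility}: it observes that the negative quantum conditional entropy, being $1$-compatible with $\mathbb{H}^{n^2}_+$ by Corollary~\ref{cor:main-compatibility}, remains $1$-compatible when restricted to the relatively open convex subset $\mathbb{H}^{n^2}_{++}\cap\mathcal{V}$, and then re-runs the epigraph construction of Lemma~\ref{lem:compatibility-to-barrier} with the $n^2$-self-concordant barrier $-\sum_{i\neq j}\log(y_{ij})-\log\det(Z)$ for the restricted base domain $\mathbb{R}^{n^2-n}_+\times\mathbb{H}^n_+$. You instead restrict the finished barrier of Corollary~\ref{cor:qce-cone} along the linear map $\mathcal{B}(t,y,Z)=(t,\mathcal{G}(y,Z))$ and appeal to affine invariance of self-concordant barriers (the parameter is preserved under composition with an affine map whose image meets the interior). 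Both arguments hinge on the same structural observation --- that $\mathcal{V}$ is, up to a permutation of the tensor basis, the block-diagonal subspace $Z\oplus\diag(\{y_{ij}\})$, so that $-\log\det(\mathcal{G}(y,Z))=-\log\det(Z)-\sum_{i\neq j}\log(y_{ij})$ --- and both delegate optimality to the same external result, \cite[Corollary 3.13]{fawzi2023optimal}. Your route is somewhat more economical in that it bypasses the compatibility machinery entirely for this corollary, at the cost of having to write down the explicit identity for the restricted log-determinant (which the paper's route never needs, since it simply chooses the natural barrier for the restricted base domain). The two points you flag as requiring care --- that the preimage of $\mathcal{K}_{\textnormal{qce}}$ under $\mathcal{B}$ coincides with the closure of the domain of the composed barrier, and that the parameter is preserved rather than merely bounded --- are indeed the only delicate steps, and both are covered by the standard restriction theorem for self-concordant barriers together with the fact that the image of $\mathcal{B}$ meets $\interior\mathcal{K}_{\textnormal{qce}}$.
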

\begin{proof}
    See Appendix~\ref{subsec:cor-proof}. 
\end{proof}

See~\eqref{eqn:efqrd-qrd} for a concrete representation for how we model the optimization problem using this cone. Using this cone, we can efficiently compute inverse Hessian vector products as follows. First, we have that 
\begin{align}
    S\divx{\mathcal{G}(y, Z)}{\mathbb{I}\otimes\tr_1^{n,n}(\mathcal{G}(y, Z))} &= -S(\mathcal{G}(y, Z)) + S(\tr_1^{n,n}(\mathcal{G}(y, Z))) \nonumber\\
    &= -H(y) - S(Z) + H(\hat{\mathcal{G}}(y, Z)), \label{eqn:qre-entanglement-fidelity}
\end{align}
where $\mathcal{G}:\mathbb{R}^{n^2-n}\times\mathbb{H}^n\rightarrow\mathbb{H}^{n^2}$ is given by~\eqref{eqn:qrd-lin-map}, and $\hat{\mathcal{G}}:\mathbb{R}^{n^2-n}\times\mathbb{H}^n\rightarrow\mathbb{R}^n$ is defined as
\begin{equation*}
    \hat{\mathcal{G}}(y, Z) = \diag(\tr_1^{n,n}(\mathcal{G}(y, Z))) = \biggl[Z_{ii} + \sum_{j \neq i}^n y_{ij}\biggr]_{i=1}^n.
\end{equation*}
The first equality uses Lemma~\ref{lem:partial-trace-entropy}, and the second equality uses~\cite[Corollary 4.14]{he2023efficient} and the block diagonal decomposition technique described in Section~\ref{subsec:blk-diag} (see also Remark~\ref{rem:classical}). Given this, we recognize that $\hat{\mathcal{G}}$ maps to a low-dimensional vector space, and is therefore an instance of the low-rank structures discussed in Example~\ref{exmp:low-rank-small}. Concretely, consider the function
\begin{equation}\label{eqn:f-hat-qrd-ef}
    \hat{F}(y, Z) = \zeta^{-1}S\divx{\mathcal{G}(y, Z)}{\mathbb{I}\otimes\tr_1^{n,n}(\mathcal{G}(y, Z))} - \sum_{i\neq j}^n \log(y_{ij}) - \log\det(Z),
\end{equation}
which, as we recall from Appendix~\ref{sec:barrier-hessian}, we are interested in because solving linear systems with the Hessian of the barrier function~\eqref{eqn:qrd-barrier} can be reduced to solving linear systems with the Hessian of~\eqref{eqn:f-hat-qrd-ef}. Using~\eqref{eqn:qre-entanglement-fidelity}, we can write this Hessian as
\begin{align}\label{eqn:qrd-ef-hessian}
    \grad^2\hat{F}(y, Z) &= \begin{bmatrix}
        \diag( \zeta y )^{-1} + \diag(y)^{-2} & 0 \\ 0 & -\grad^2 (\zeta^{-1} S + \log\det)(Z)
    \end{bmatrix} - \hat{\bm{\mathcal{G}}}^\dag \diag(\hat{\mathcal{G}}(y, Z) )^{-1}  \hat{\bm{\mathcal{G}}}.
\end{align}
Then, with a similar flavor as the quantum conditional entropy from Example~\ref{exmp:conditional}, we have a rank-$n$ perturbation of a $(2n^2-n)\times(2n^2-n)$ dimensional Hessian matrix which we can easily solve linear systems with. Therefore, we can apply the symmetric variant of the matrix inversion lemma~\eqref{eqn:matrix-inversion-lemma-b} to efficiently solve linear systems with the Hessian matrix of the desired barrier. Overall, the complexity of forming and Cholesky factoring the required Schur complement matrix to solve linear systems with~\eqref{eqn:qrd-ef-hessian} is approximately $O(n^3)$ flops. We denote this method for computing the entanglement-assisted rate-distortion as QRD.

\paragraph{Experimental setup}
To illustrate the benefits of taking advantage of structure in this way, we run experiments using the same problem setups as our prior quantum rate-distortion experiments. We benchmark the method QRD against two other methods of utilizing knowledge of the fixed point subspace~\eqref{eqn:qrd-fixed-point-subspace}. First, we consider modelling the problem as~\eqref{eqn:efqrd-qce}, which is similar to QCE but we constrain ourselves to work within the image of the subspace~\eqref{eqn:qrd-fixed-point-subspace} (QCE*). Second, using a similar block diagonal decomposition technique as described in Section~\ref{subsec:blk-diag} and Remarks~\ref{rem:same-block} and~\ref{rem:classical}, we can show that
\begin{align}\label{eqn:qrd-ef-double-re}
    S\divx{\mathcal{G}(y, Z)}{\mathbb{I}\otimes\tr_1^{n,n}(\mathcal{G}(y, Z))} &= H\divx{y}{\hat{\mathcal{G}}_1(y, Z)} + S\divx{Z}{\hat{\mathcal{G}}_2(y, Z)},
\end{align}
for suitable linear maps $\hat{\mathcal{G}}_1:\mathbb{R}^{n^2-n}\times\mathbb{H}^n\rightarrow\mathbb{R}^{n^2-n}$ and $\hat{\mathcal{G}}_2:\mathbb{R}^{n^2-n}\times\mathbb{H}^n\rightarrow\mathbb{H}^{n}$, and therefore we can model the problem as~\eqref{eqn:efqrd-qre} using just the classical and quantum relative entropy cones (QRE*). Our results are summarized in Table~\ref{tab:qrd-ef}.

\paragraph{Discussion}
Comparing to the results from Table~\ref{tab:qrd}, we see that all methods which take advantage of knowledge of the fixed-point subspace outperform the methods which do not. We note that the QCE* results are not significantly different from the QCE results, which is due to the bottleneck in the interior-point methods being in constructing and factoring the Hessian of the cone barrier function, rather than in building and factoring the Schur complement matrix, which is where the main advantage of QCE* over QCE lies. The other two methods, QRD and QRE*, which take advantage of the way the quantum relative entropy function can be decomposed as a sum of simpler functions, show a more significant improvement in computational performance, and the QRD method, which takes advantage of the most structure in the problem, performs the best. We remark that compared to the most basic QRE modelling strategy from Table~\ref{tab:qrd}, we have improved the theoretical flop count complexity of building and factoring the Hessian matrix by a factor of $n^9$, which is reflected in the significantly larger problem instances that we are able to solve.


\begin{table}
\footnotesize
\caption{Computational performance of computing the entanglement-assisted rate-distortion at $D=0.5$ for randomly generated quantum states of size $n\times n$ and using the entanglement fidelity distortion\diff{, which involves quantum relative entropies with $m\times m$ matrix arguments}. We compare between different ways we can account for the fixed-point subspace~\eqref{eqn:qrd-fixed-point-subspace}. These include using a specialized quantum rate-distortion cone for entanglement fidelity distortion (QRD), a simplified model using the classical and quantum relative entropy cones (QRE*), and using the quantum conditional entropy cone while encoding the fixed point subspace using linear constraints (QCE*).}
\label{tab:qrd-ef}
\centering
\sisetup{add-integer-zero=false,add-decimal-zero=false}
\begin{tabular*}{\textwidth}{@{\extracolsep{\fill}}lc*{3}{S[table-format=-2.1]}*{3}{S[table-format=-1.4]}*{3}{S[table-format=-1.5]}@{\extracolsep{\fill}}}
\toprule
 & & \multicolumn{3}{l}{\textbf{iter}} & \multicolumn{3}{l}{\textbf{solve time} (s)} & \multicolumn{3}{l}{\textbf{time per iter} (s/iter)}\\
 \cmidrule{3-5}\cmidrule{6-8}\cmidrule{9-11}
$n$ & \diff{$m$} & \multicolumn{1}{c}{QRD} & \multicolumn{1}{c}{QRE*} & \multicolumn{1}{c}{QCE*} & \multicolumn{1}{c}{QRD} & \multicolumn{1}{c}{QRE*} & \multicolumn{1}{c}{QCE*} & \multicolumn{1}{c}{QRD} & \multicolumn{1}{c}{QRE*} & \multicolumn{1}{c}{QCE*} \\ \midrule
2 & \diff{4} & 10 & 11 & 10 & 0.008 & 0.010 & 0.007 & 0.0007 & 0.0009 & 0.0007 \\
4 & \diff{16} & 20 & 23 & 21 & 0.020 & 0.080 & 0.049 & 0.0010 & 0.0035 & 0.0023 \\
8 & \diff{64} & 32 & 38 & 31 & 0.044 & 0.377 & 1.189 & 0.0014 & 0.0093 & 0.0384 \\
16 & \diff{256} & 53 & 87 & 55 & 0.143 & 3.031 & 119.114 & 0.0027 & 0.0348 & 2.1657 \\
32 & \diff{1024} & 92 & 186 & OOM & 1.414 & 52.954 & OOM & 0.0154 & 0.2847 & OOM \\
64 & \diff{4096} & 137 & OOM & OOM & 11.157 & OOM & OOM & 0.0814 & OOM & OOM \\
\bottomrule
\end{tabular*}%
\end{table}

\subsubsection{\diff{Comparison to prior works}}
\diff{Previously, first-order methods were proposed in~\cite{he2023bregman,he2023efficient,hayashi2023bregman} to compute the quantum rate-distortion function. Empirically, the methods~\cite{he2023efficient,hayashi2023bregman} were shown to converge at a linear rate, and therefore scale better to larger problem dimensions compared to the interior-point methods proposed in this paper, while still reaching high solution accuracies. However, the linear convergence rates of these first-order methods are highly dependent on the parameter $D$, whereas we found interior-point methods to be essentially independent of it.}

\subsection{Quantum channel capacity}
\diff{A closely related concept to the rate-distortion function is the channel capacity, which characterizes the maximum rate of information that can be transmitted reliably through a noisy quantum channel represented by a completely positive trace preserving (see, e.g.,~\cite[Definitions 4.4.2 and 4.4.3]{wilde2017quantum}) linear map $\mathcal{N}:\mathbb{H}^{n_i}\rightarrow\mathbb{H}^{n_o}$. Quantum channel capacities depend on what resources are available. We introduce two of these settings in the following, then present a combined discussion on the experimental results.}


\subsubsection{Entanglement-assisted capacity}
\diff{First, we recognize that any quantum channel $\mathcal{N}$ can always be associated with an isometry matrix $V\in\mathbb{C}^{n_on_e\times n_i}$ such that $\mathcal{N}(X)=\tr_2^{n_o,n_e}(VXV^\dag)$~\cite[Proposition 2.20]{watrous2018theory}. Given this, the \emph{entanglement-assisted channel capacity}~\cite{bennett2002entanglement} for this quantum channel is given by}
\begin{subequations}\label{eqn:ea}
    \begin{align}
        \maximize_{X\in\mathbb{H}^{n_i}} \quad & -S\divx{V X V^\dag}{\mathbb{I} \otimes \tr_1^{m,p}(V X V^\dag)} + S(\tr_2^{m,p}(V X V^\dag)) \label{eqn:qmi-func}\\
        \subjto \quad & \tr[X] = 1 \\
        & X\succeq 0.
    \end{align}
\end{subequations}
\diff{Note that the objective function in~\eqref{eqn:qmi-func} is known as the \emph{quantum mutual information}, which we can model using the following self-concordant barrier.}

\begin{corollary}\label{cor:qmi-cone}
    \diff{Let $V\in\mathbb{C}^{mp\times n}$ be an isometry matrix. The function}
    \begin{equation}
        (t, X) \mapsto -\log(t - [S\divx{V X V^\dag}{\mathbb{I} \otimes \tr_1^{m,p}(V X V^\dag)} - S(\tr_2^{m,p}(V X V^\dag)) + S(\tr[X])]) - \log\det(X),
    \end{equation}
    \diff{defined on $\mathbb{R}\times\mathbb{H}^n_{++}$ is an $(n+1)$-logarithmically homogeneous self-concordant barrier for the quantum mutual information cone, i.e.,}
    \begin{equation}
        \mathcal{K}_{\textnormal{qmi}} \coloneqq \closure \bigl\{ (t, X) \in \mathbb{R}\times\mathbb{H}^n_{++} : t \geq S\divx{V X V^\dag}{\mathbb{I} \otimes \tr_1^{m,p}(V X V^\dag)} - S(\tr_2^{m,p}(V X V^\dag)) + S(\tr[X]) \bigr\},
    \end{equation}
    \diff{Moreover, this barrier is optimal in the sense that any self-concordant barrier for $\mathcal{K}_{\textnormal{qmi}}$ has parameter at least $n+1$.}
\end{corollary}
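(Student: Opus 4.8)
The plan is to reduce Corollary~\ref{cor:qmi-cone} to a direct application of Corollary~\ref{cor:main}, by rewriting the quantum mutual information appearing in the definition of $\mathcal{K}_{\textnormal{qmi}}$ as a single quantum relative entropy $S\divx{\mathcal{G}(X)}{\mathcal{H}(X)}$, composed with two positive linear maps $\mathcal{G},\mathcal{H}$ that both land in a block-diagonal subspace. The first term $S\divx{VXV^\dag}{\mathbb{I}\otimes\tr_1^{m,p}(VXV^\dag)}$ is already in relative-entropy form, so the real work is to absorb the remaining two terms $-S(\tr_2^{m,p}(VXV^\dag)) + S(\tr[X])$ into a second relative-entropy block.

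For that, I would use a scalar-entropy trick. Writing $s \coloneqq \tr[X]$ and $\rho \coloneqq \tr_2^{m,p}(VXV^\dag)$, the isometry property $V^\dag V = \mathbb{I}$ gives $\tr[\rho] = \tr[X V^\dag V] = s$, so that $-S(\rho) + S(s) = \tr[\rho\log\rho] - s\log s = \tr[\rho\log\rho] - \tr[\rho\log(s\mathbb{I}_m)] = S\divx{\rho}{s\mathbb{I}_m}$, using $\log(s\mathbb{I}_m) = \log(s)\,\mathbb{I}_m$. Collecting the blocks, I would then set
\[
\mathcal{G}(X) = VXV^\dag \oplus \tr_2^{m,p}(VXV^\dag), \qquad \mathcal{H}(X) = \bigl(\mathbb{I}\otimes\tr_1^{m,p}(VXV^\dag)\bigr) \oplus \bigl(\tr[X]\,\mathbb{I}_m\bigr),
\]
both of which are positive linear maps, since congruence by $V$, the partial traces, and $X\mapsto\tr[X]\mathbb{I}_m$ are all positive. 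By the block-diagonal decomposition of relative entropy from Section~\ref{subsec:blk-diag} (see Remark~\ref{rem:same-block}), $S\divx{\mathcal{G}(X)}{\mathcal{H}(X)}$ equals the sum of the two blocks, which is exactly the quantum mutual information defining $\mathcal{K}_{\textnormal{qmi}}$; in particular $\mathcal{K}_{\textnormal{qmi}} = \mathcal{K}_{\textnormal{qre}}^{\mathcal{G},\mathcal{H}}$.

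It then remains to verify the hypotheses of Corollary~\ref{cor:main}, i.e.\ that Assumption~\ref{assump:well-defined-N} holds for $g(x) = -\log x$. Here $\hat g(x) = x\log x$, so $\hat g(0^+) = 0$ is finite, and I would confirm the domination $\mathcal{G}(X)\ll\mathcal{H}(X)$ for all $X\succ0$ blockwise in order to invoke condition~\ref{assump:well-defined-N-i}. The second block is immediate, since $\tr[X]\mathbb{I}_m\succ0$ is nonsingular. The first block requires $VXV^\dag \ll \mathbb{I}\otimes\tr_1^{m,p}(VXV^\dag)$, which is precisely the conditional-entropy domination $Y \ll \mathbb{I}\otimes\tr_1^{m,p}(Y)$ valid for every $Y\succeq0$ (the same fact underpinning Corollary~\ref{cor:qce-cone}), applied to $Y = VXV^\dag$. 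I expect this first block to be the main obstacle: because $V$ is an isometry with $mp\ge n$, the matrix $VXV^\dag$ is singular and $\mathbb{I}\otimes\tr_1^{m,p}(VXV^\dag)$ has nontrivial kernel, so this is exactly the singular-image regime that Theorem~\ref{thm:main} was built to handle, and the kernel inclusion must be argued carefully (e.g.\ by showing that $w\in\ker\tr_1^{m,p}(Y)$ forces $(\mathbb{I}_m\otimes ww^\dag)\,Y = 0$, whence $\mathbb{C}^m\otimes\ker\tr_1^{m,p}(Y)\subseteq\ker Y$). Once Assumption~\ref{assump:well-defined-N} is verified, Corollary~\ref{cor:main} immediately yields that the stated barrier is an $(n+1)$-logarithmically homogeneous self-concordant barrier for $\mathcal{K}_{\textnormal{qmi}}$, and the optimality of the parameter $n+1$ transfers verbatim from the optimality statement there.
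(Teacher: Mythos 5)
Your proof is correct, and it identifies the same essential decomposition as the paper --- namely that the residual terms satisfy $-S(\tr_2^{m,p}(VXV^\dag)) + S(\tr[X]) = S\divx{\tr_2^{m,p}(VXV^\dag)}{\tr[X]\,\mathbb{I}_m}$, using $\tr[\tr_2^{m,p}(VXV^\dag)]=\tr[X]$ from the isometry property --- but you recombine the two relative-entropy pieces differently. The paper stays at the level of the compatibility machinery: it applies Corollary~\ref{cor:main-compatibility} separately to $X\mapsto -S\divx{VXV^\dag}{\mathbb{I}\otimes\tr_1^{m,p}(VXV^\dag)}$ and to $X\mapsto -S\divx{\tr_2^{m,p}(VXV^\dag)}{\tr[X]\mathbb{I}}$, observes that the sum of two functions $1$-compatible with $\mathbb{H}^n_+$ is again $1$-compatible (immediate from Definition~\ref{defn:compatibility} since derivatives add), and then finishes as in the proof of Theorem~\ref{thm:main}. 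You instead glue the two pieces into a single map pair $\mathcal{G}(X)=VXV^\dag\oplus\tr_2^{m,p}(VXV^\dag)$, $\mathcal{H}(X)=(\mathbb{I}\otimes\tr_1^{m,p}(VXV^\dag))\oplus(\tr[X]\mathbb{I}_m)$ and invoke Corollary~\ref{cor:main} once, using the blockwise additivity of Remark~\ref{rem:same-block}. Both are sound; your packaging has the virtue of reducing the corollary to a literal instance of Corollary~\ref{cor:main} without reopening the compatibility layer (at the cost of verifying Assumption~\ref{assump:well-defined-N-i} blockwise, which you do correctly --- the kernel inclusion $\ker(\mathbb{I}\otimes\tr_1^{m,p}(Y))=\mathbb{C}^m\otimes\ker\tr_1^{m,p}(Y)\subseteq\ker Y$ for $Y\succeq0$ is exactly the right argument), while the paper's additive-compatibility route is the more modular template for combining arbitrary entropy-like terms.
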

\begin{proof}
    See Appendix~\ref{subsec:cor-proof}. 
\end{proof}

See~\eqref{eqn:eacc-qmi} for a concrete representation for how we model the entanglement-assisted channel capacity using this cone.
Using Lemma~\ref{lem:partial-trace-entropy}, and the facial reduction technique from Section~\ref{subsubsec:facial-reduction}, we can show that the (homogenized) quantum mutual information can be expressed as
\begin{multline}
    S\divx{V X V^\dag}{\mathbb{I} \otimes \tr_1^{m,p}(VXV^\dag)} - S(\tr_2^{m,p}(VXV^\dag)) + S(\tr[X]) \\
    = -S(X) + S(\tr_1^{m,p}(VXV^\dag)) - S(\tr_2^{m,p}(VXV^\dag)) + S(\tr[X]).
\end{multline}
Therefore we can compute the Hessian of the barrier of the quantum mutual information cone by using a variation of~\eqref{eqn:diff-entropy-hess}. We denote this proposed method as QMI.

\paragraph{Experimental setup}
For our experiments, we compute the entanglement-assisted channel capacity on randomly generated quantum channels with dimensions $n_i=n_o=n_e=n$, i.e., the input, output, and environment systems are all the same dimension. These channels are generated by uniformly sampling Stinespring operators $V$ on the Hilbert-Schmidt measure~\cite{kukulski2021generating}. We benchmark computing the entanglement-assisted channel capacity using QMI against modelling the problem using the Cartesian product between a quantum conditional entropy cone and a quantum entropy cone (QCE), as shown in~\eqref{eqn:eacc-qce}, and modelling the problem using the Cartesian product between a quantum relative entropy cone and a quantum entropy cone (QRE), as shown in~\eqref{eqn:eacc-qre}. Results are summarized in Table~\ref{tab:eacc}.


\begin{table}
\footnotesize
\caption{Computational performance of computing the entanglement-assisted channel capacity of random quantum channels with input, output and environment dimensions $n$, between conic program formulations using the quantum mutual information cone (QMI), the quantum conditional entropy cone (QCE), and the quantum relative entropy cone (QRE). \diff{Each problem involves quantum relative entropies with $m\times m$ matrix arguments.}}
\label{tab:eacc}
\centering
\sisetup{add-integer-zero=false,add-decimal-zero=false}
\begin{tabular*}{\textwidth}{@{\extracolsep{\fill}}lc*{3}{S[table-format=-2.1]}*{3}{S[table-format=-1.4]}*{3}{S[table-format=-1.5]}@{\extracolsep{\fill}}}
\toprule
 & & \multicolumn{3}{l}{\textbf{iter}}& \multicolumn{3}{l}{\textbf{solve time} (s)} &\multicolumn{3}{l}{\textbf{time per iter} (s/iter)} \\
 \cmidrule{3-5}\cmidrule{6-8}\cmidrule{9-11}
$n$ & \diff{$m$} & \multicolumn{1}{c}{QMI} & \multicolumn{1}{c}{QCE} & \multicolumn{1}{c}{QRE} & \multicolumn{1}{c}{QMI} & \multicolumn{1}{c}{QCE} & \multicolumn{1}{c}{QRE} & \multicolumn{1}{c}{QMI} & \multicolumn{1}{c}{QCE} & \multicolumn{1}{c}{QRE} \\ \midrule
2 & \diff{4} & 7 & 10 & 13 & 0.006 & 0.010 & 0.042 & 0.0008 & 0.0010 & 0.0032 \\
4 & \diff{16} & 8 & 20 & 32 & 0.006 & 0.076 & 1.384 & 0.0007 & 0.0038 & 0.0432 \\
8 & \diff{64} & 9 & 23 & 55 & 0.016 & 0.676 & 73.100 & 0.0018 & 0.0294 & 1.3291 \\
16 & \diff{256} & 9 & 30 & OOM & 0.092 & 30.754 & OOM & 0.0102 & 1.0251 & OOM \\
32 & \diff{1024} & 12 & 36 & OOM & 1.602 & 3338.220 & OOM & 0.1335 & 92.7283 & OOM \\
64 & \diff{4096} & 15 & OOM & OOM & 47.311 & OOM & OOM & 3.1541 & OOM & OOM \\
\bottomrule
\end{tabular*}%
\end{table}

\subsubsection{Quantum-quantum capacity of degradable channels}
\diff{Again, consider a quantum channel $\mathcal{N}$ defined as $\mathcal{N}(X)=\tr_2^{n_o,n_e}(VXV^\dag)$ for some isometry matrix $V\in\mathbb{C}^{n_on_e\times n_i}$. The complementary channel $\mathcal{N}_\textnormal{c}:\mathbb{H}^{n_i}\rightarrow\mathbb{H}^{n_e}$ is defined as $\mathcal{N}_\textnormal{c}(X)=\tr_1^{n_o,n_e}(VXV^\dag)$. A degradable channel is a quantum channel $\mathcal{N}$ such that its complementary channel $\mathcal{N}_\textnormal{c}$ can be expressed as $\mathcal{N}_\textnormal{c}=\Xi\circ\mathcal{N}$ for some completely positive trace preserving linear map $\Xi:\mathbb{H}^{n_o}\rightarrow\mathbb{H}^{n_e}$.}

\diff{The second quantum channel capacity we will consider is the \emph{quantum-quantum channel capacity}~\cite{lloyd1997capacity,shor2002quantum,devetak2005private}. In general, this channel capacity is given by a non-convex optimization problem. However, if the channel $\mathcal{N}$ is degradable, then the channel capacity is given by~\cite{devetak2005capacity}}
\begin{subequations}\label{eqn:qq}
    \begin{align}
        \maximize_{X\in\mathbb{H}^{n_i}} \quad & -S\divx{W\mathcal{N}(X) W^\dag}{\mathbb{I} \otimes \tr_1^{n_e,n_{f}}(W\mathcal{N}(X) W^\dag)} \label{eqn:qci-func}\\
        \subjto \quad & \tr[X] = 1 \\
        &X\succeq 0,
    \end{align}
\end{subequations}
\diff{where $W\in\mathbb{C}^{n_en_{f}\times n_o}$ is the isometry matrix associated with $\Xi$ such that $\Xi(X)=\tr_2^{n_e,n_f}(WXW^\dag)$. Note that the objective function in~\eqref{eqn:qci-func} is a simplification of the \emph{quantum coherent information} when $\mathcal{N}$ is degradable, which we can model using the following self-concordant barrier.}

\begin{corollary}\label{cor:qci-cone}
    \diff{Consider a positive linear map $\mathcal{N}:\mathbb{H}^n\rightarrow\mathbb{H}^m$ and an isometry matrix $W\in\mathbb{C}^{pq\times m}$. Then}
    \begin{equation}
        (t, X) \mapsto -\log(t -S\divx{W\mathcal{N}(X) W^\dag}{\mathbb{I} \otimes \tr_1^{p,q}(W\mathcal{N}(X) W^\dag)}) - \log\det(X),
    \end{equation}
    \diff{defined on $\mathbb{R}\times\mathbb{H}^n_{++}$ is an $(n+1)$-logarithmically homogeneous self-concordant barrier for the quantum coherent information cone for degradable channels, i.e.,}
    \begin{equation}
        \mathcal{K}_{\textnormal{qci}} \coloneqq \closure \bigl\{ (t, X) \in \mathbb{R}\times\mathbb{H}^n_{++} : t \geq S\divx{W\mathcal{N}(X) W^\dag}{\mathbb{I} \otimes \tr_1^{p,q}(W\mathcal{N}(X) W^\dag)} \bigr\}.
    \end{equation}
    \diff{Moreover, this barrier is optimal in the sense that any self-concordant barrier for $\mathcal{K}_{\textnormal{qci}}$ has parameter at least $n+1$.}
\end{corollary}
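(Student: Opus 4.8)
The plan is to recognize $\mathcal{K}_{\textnormal{qci}}$ as a special case of the cone $\mathcal{K}_{\textnormal{qre}}^{\mathcal{G},\mathcal{H}}$ and then invoke Corollary~\ref{cor:main} directly. Setting $\mathcal{G}(X)\coloneqq W\mathcal{N}(X)W^\dag$ and $\mathcal{H}(X)\coloneqq\mathbb{I}_p\otimes\tr_1^{p,q}(W\mathcal{N}(X)W^\dag)$ (both mapping $\mathbb{H}^n\to\mathbb{H}^{pq}$), the defining inequality of $\mathcal{K}_{\textnormal{qci}}$ becomes exactly $t\geq S\divx{\mathcal{G}(X)}{\mathcal{H}(X)}$, and the claimed barrier is precisely~\eqref{eqn:main-barrier} with $m=pq$. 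Thus the entire statement — barrier parameter $n+1$, self-concordance, logarithmic homogeneity, and optimality — would follow immediately from Corollary~\ref{cor:main}, provided its two hypotheses hold: that $\mathcal{G}$ and $\mathcal{H}$ are positive linear maps, and that $\mathcal{G}(X)\ll\mathcal{H}(X)$ for all $X\succ0$ (the branch~\ref{assump:well-defined-N-i} of the underlying assumption, since for $g=-\log$ we have $\hat{g}(0^+)=0$ finite).

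Positivity is routine: $\mathcal{N}$ is positive by hypothesis, conjugation $Y\mapsto WYW^\dag$ sends $\mathbb{H}^m_+$ into $\mathbb{H}^{pq}_+$, the partial trace $\tr_1^{p,q}$ is positive, and $Z\mapsto\mathbb{I}_p\otimes Z$ is positive; hence $\mathcal{G}$ and $\mathcal{H}$, being compositions of positive maps, are positive. The substantive step — and the only place beyond a mechanical application of Corollary~\ref{cor:main} — is the kernel containment $\mathcal{G}(X)\ll\mathcal{H}(X)$, i.e.\ $\ker\mathcal{H}(X)\subseteq\ker\mathcal{G}(X)$. This reduces to the purely operator-theoretic fact that $Y\ll\mathbb{I}_p\otimes\tr_1^{p,q}(Y)$ for every $Y\in\mathbb{H}^{pq}_+$, applied to $Y=W\mathcal{N}(X)W^\dag$. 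I would prove this directly: writing $\rho\coloneqq\tr_1^{p,q}(Y)\in\mathbb{H}^q$ and using $\ker(\mathbb{I}_p\otimes\rho)=\mathbb{C}^p\otimes\ker(\rho)$, take any $\phi\in\ker(\rho)$; then $0=\phi^\dag\rho\phi=\sum_{i=1}^p(e_i\otimes\phi)^\dag Y(e_i\otimes\phi)$, where each summand is nonnegative since $Y\succeq0$, forcing $Y(e_i\otimes\phi)=0$ for every basis vector $e_i$ of $\mathbb{C}^p$. Hence $\mathbb{C}^p\otimes\ker(\rho)=\ker(\mathbb{I}_p\otimes\rho)\subseteq\ker(Y)$, which is the claim.

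With both hypotheses verified, Corollary~\ref{cor:main} yields that~\eqref{eqn:main-barrier} is an $(n+1)$-logarithmically homogeneous self-concordant barrier for $\mathcal{K}_{\textnormal{qci}}=\mathcal{K}_{\textnormal{qre}}^{\mathcal{G},\mathcal{H}}$, and its optimality clause gives the matching lower bound of $n+1$ on the parameter of any self-concordant barrier. I expect no serious obstacle: the kernel containment above is the same structural fact already underpinning Corollary~\ref{cor:qce-cone} (the quantum conditional entropy cone), so this corollary is essentially that result composed with the positive map $X\mapsto W\mathcal{N}(X)W^\dag$. The only genuinely new checks are the positivity of the composed maps and that the kernel inclusion survives the composition — both immediate once $Y\ll\mathbb{I}_p\otimes\tr_1^{p,q}(Y)$ is in hand.
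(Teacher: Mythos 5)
Your proposal is correct and matches the paper's (implicit) approach: the paper states Corollary~\ref{cor:qci-cone} without a written proof, treating it as a direct instance of Corollary~\ref{cor:main} with $\mathcal{G}(X)=W\mathcal{N}(X)W^\dag$ and $\mathcal{H}(X)=\mathbb{I}\otimes\tr_1^{p,q}(W\mathcal{N}(X)W^\dag)$, which is exactly what you do. You additionally supply the hypothesis checks the paper leaves unstated — positivity of the composed maps and the kernel containment $Y\ll\mathbb{I}\otimes\tr_1^{p,q}(Y)$ for $Y\succeq0$ — and your argument for the latter is correct.
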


See~\eqref{eqn:qqcc-qci} for a concrete representation for how we can model the quantum-quantum channel capacity using this cone.
Like for the quantum mutual information, we can show that the quantum coherent information can be represented as
\begin{equation}
    S\divx{W\mathcal{N}(X)W^\dag}{\mathbb{I}\otimes\tr_1^{p,q}(W\mathcal{N}(X)W^\dag)} = S(\mathcal{N}(X)) - S(\tr_1^{p,q}(W\mathcal{N}(X)W^\dag)),
\end{equation}
and therefore the Hessian of the barrier function can be constructed using a suitable application of~\eqref{eqn:diff-entropy-hess}. We denote this proposed method as QCI.

\paragraph{Experimental setup}
We compute the quantum-quantum channel capacities of randomly generated pseudo-diagonal quantum channels~\cite{cubitt2008structure}, which are known to be degradable and whose structure is described in~\cite[Section 5]{cubitt2008structure}. We generate these channels such that $n_i=n_o=n_e=n_f=n$, i.e., the dimensions of the input, output, and all other environment systems are the same. We benchmark solving the quantum-quantum channel capacity using QCI with either modelling the problem using the quantum conditional entropy cone (QCE), as shown in~\eqref{eqn:qqcc-qce}, using the quantum relative entropy (QRE) cone, as shown in~\eqref{eqn:qqcc-qre}. Experimental results are shown in Table~\ref{tab:qqcc}.

\begin{table}
\footnotesize
\caption{Computational performance of computing the quantum-quantum channel capacity of random degradable quantum channels with input, output and environment dimensions $n$, between conic program formulations using the quantum coherent information cone (QCI), the quantum conditional entropy cone (QCE), and the quantum relative entropy cone (QRE). \diff{Each problem involves quantum relative entropies with $m\times m$ matrix arguments.}}
\label{tab:qqcc}
\centering
\sisetup{add-integer-zero=false,add-decimal-zero=false}
\begin{tabular*}{\textwidth}{@{\extracolsep{\fill}}lc*{3}{S[table-format=-2.1]}*{3}{S[table-format=-2.4]}*{3}{S[table-format=-1.5]}@{\extracolsep{\fill}}}
\toprule
 & & \multicolumn{3}{l}{\textbf{iter}} & \multicolumn{3}{l}{\textbf{solve time} (s)}& \multicolumn{3}{l}{\textbf{time per iter} (s/iter)}\\
 \cmidrule{3-5}\cmidrule{6-8}\cmidrule{9-11}
$n$ & \diff{$m$} & \multicolumn{1}{c}{QCI} & \multicolumn{1}{c}{QCE} & \multicolumn{1}{c}{QRE} & \multicolumn{1}{c}{QCI} & \multicolumn{1}{c}{QCE} & \multicolumn{1}{c}{QRE} & \multicolumn{1}{c}{QCI} & \multicolumn{1}{c}{QCE} & \multicolumn{1}{c}{QRE} \\ \midrule
2 & \diff{4} & 7 & 13 & 14 & 0.006 & 0.010 & 0.053 & 0.0001 & 0.0001 & 0.0038 \\
4 & \diff{16} & 6 & 20 & 30 & 0.007 & 0.071 & 1.624 & 0.0012 & 0.0035 & 0.0541 \\
8 & \diff{64} & 9 & 25 & 53 & 0.018 & 0.546 & 76.524 & 0.0020 & 0.0219 & 1.4438 \\
16 & \diff{256} & 13 & 51 & OOM & 0.147 & 33.813 & OOM & 0.0113 & 0.8247 & OOM \\
32 & \diff{1024} & 13 & 35 & OOM & 1.324 & 3310.388 & OOM & 0.1018 & 94.5825 & OOM \\
64 & \diff{4096} & 15 & OOM & OOM & 33.927 & OOM & OOM & 2.2618 & OOM & OOM \\
\bottomrule
\end{tabular*}%
\end{table}

\subsubsection{Discussion}
Overall, we see that the results for the entanglement-assisted and quantum-quantum channel capacities are very similar, due to the two problems sharing very similar structures. We first observe that the QCE methods perform significantly better than the na\"ive QRE methods in both computation time and memory requirements. Referring to Table~\ref{table:flop-count}, we see that the QRE methods require $O(n^{12})$ flops to perform an inverse Hessian product, whereas QCE methods only require $O(n^8)$ flops. We also observe that the tailored QMI and QCI are able to provide a further order of magnitude improvement in computation times and memory requirements, which only require $O(n^6)$ flops to compute inverse Hessian products. Additionally, we observe that modelling the QCI and QMI methods require significantly  fewer iterations than the other methods. We attribute this to the smaller barrier parameter of the quantum mutual information and quantum coherent information cones ($1+n$), compared to when we use the quantum conditional entropy ($1+n^2$) and quantum relative entropy ($1+2n^2$) to model the problem. 


\subsubsection{\diff{Comparison to prior works}}

\diff{Previously, interior-point methods were used to compute these capacities in~\cite{fawzi2018efficient,coey2023performance}. However, these approaches, correspoding to our QRE baselines, were limited to computing the capacity for a single-qubit amplitude damping channel (i.e., where $n_i=n_o=n_e=n_f=2$).
In~\cite{ramakrishnan2020computing,he2023bregman}, first-order methods were proposed to numerically compute the entanglement-assisted channel capacities of channels of comparable sizes to those we explore in this paper. To our knowledge, there is no first-order method which has been proposed to compute the quantum-quantum channel capacity of degradable channels.}

\subsection{Ground state energy of Hamiltonians}
\diff{A fundamental question in quantum many-body theory is to compute the ground energy of local Hamiltonians~\cite{gharibian2015quantum}. A translation-invariant 2-local Hamiltonian on an infinite one-dimensional chain is described by a single $4\times 4$ Hermitian matrix $h$. In~\cite{fawzi2023entropy}, it was shown that the ground energy density of such a Hamiltonian can be lower bounded by solving the convex optimization problem}
\begin{subequations}\label{eqn:ccm}
    \begin{align}
        \minimize_{X\in\mathbb{H}^{2^l}} \quad & \inp{h\otimes \mathbb{I}_{2^{l-2}}}{X}  \\
        \subjto \quad & \tr_1^{2,2^{l-1}}(X) = \tr_2^{2^{l-1},2}(X) \\
        & \tr[X] = 1 \\
        & S\divx{X}{\mathbb{I}\otimes\tr_1^{2,2^{l-1}}(X)} \leq 0 \\
        & X \succeq 0,
    \end{align}
\end{subequations}
\diff{where $l$ controls the accuracy of the lower bound which converges to the true value as $l\rightarrow\infty$.}

To compute this quantity, we directly optimize over the quantum conditional entropy cone from Corollary~\ref{cor:qce-cone} (see~\eqref{eqn:med-qce} for a concrete representation), and efficiently solve linear systems with the Hessian matrix of the corresponding barrier using the method discussed in Example~\ref{exmp:conditional}. We denote this proposed method as QCE.


\paragraph{Experimental setup}
We are interested in computing a hierarchy of lower bounds for the ground state energy of the XXZ-Hamiltonian parameterized by $l$, i.e., where
\begin{equation}
    H = (-\sigma_x \otimes \sigma_x - \sigma_y \otimes \sigma_y - \Delta \sigma_z \otimes \sigma_z) \otimes \mathbb{I}_{2^{l-2}},
\end{equation}
where $\sigma_x,\sigma_y,\sigma_z\in\mathbb{H}^2$ are the Pauli matrices
\begin{equation*}
    \sigma_x = \begin{bmatrix}
        0 & 1 \\ 1 & 0
    \end{bmatrix}, \qquad \sigma_y = \begin{bmatrix}
        0 & -i \\ i & 0
    \end{bmatrix}, \qquad \sigma_z = \begin{bmatrix}
        1 & 0 \\ 0 & -1
    \end{bmatrix}.
\end{equation*}
We consider the case where $\Delta=-1$. We benchmark estimating the ground state energy for various values of $l$ using QCE against lifting the problem to the quantum relative entropy cone (QRE), as shown in~\eqref{eqn:med-qre}. Experimental results are presented in Table~\ref{tab:geh}.

\paragraph{Discussion}
We begin by remarking that for this problem, the quantum conditional entropy traces out a $2$-dimensional system, which corresponds to the smallest non-trivial dimension we can trace out. From Table~\ref{table:flop-count}, we see that this corresponds to the scenario when exploiting the structure of the barrier of the quantum conditional entropy cone yields the least computational benefits. Nevertheless, we still expect some benefits, e.g., the Cholesky facotrization step for the quantum conditional entropy barrier should use approximately $64$ times fewer flops compared to the quantum relative entropy barrier. This is reflected in the results, where QCE yields between a $30$ to $80$ times speedup compared to QRE. Additionally, the QCE is more memory efficient, whereas there is insufficient memory to run QRE at the largest problem dimension $l=8$. The number of iterations between the two methods is mostly the same, however we note that at $l=6$ and $7$, the Hypaita algorithm when solving QRE was observed to fall back to less efficient stepping methods at times, which resulted in the increased number of iterations. 




\begin{table}
\footnotesize
\caption{Computational performance of computing a hierarchy of lower bounds for the ground state energies of Hamiltonians, parameterized by the number of qubit systems $l$, between conic program formulations using the quantum conditional entropy cone (QCE) and the quantum relative entropy cone (QRE). We note that the QRE experiment for $l=2$ terminated early before reaching the desired tolerance, and only reached a relative gap of $3.6\times10^{-8}$. \diff{Each problem involves quantum relative entropies with $m\times m$ matrix arguments.}}
\label{tab:geh}
\centering
\sisetup{add-integer-zero=false,add-decimal-zero=false}
\begin{tabular*}{0.9\textwidth}{@{\extracolsep{\fill}}lc*{2}{S[table-format=-2.1]}*{2}{S[table-format=-5.4]}*{2}{S[table-format=-3.5]}@{\extracolsep{\fill}}}
\toprule
 & & \multicolumn{2}{l}{\textbf{iter}} & \multicolumn{2}{l}{\textbf{solve time} (s)} & \multicolumn{2}{l}{\textbf{time per iter} (s/iter)} \\
 \cmidrule{3-4}\cmidrule{5-6}\cmidrule{7-8}
$l$ & \diff{$m$} & \multicolumn{1}{c}{QCE} & \multicolumn{1}{c}{QRE} & \multicolumn{1}{c}{QCE} & \multicolumn{1}{c}{QRE} & \multicolumn{1}{c}{QCE} & \multicolumn{1}{c}{QRE} \\ \midrule
$2$ & \diff{4} & 8 & 11 & .007 & .107 & .0009 & .0097 \\
$3$ & \diff{8} & 9 & 11 & .012 & .150 & .0013 & .0136 \\
$4$ & \diff{16} & 20 & 23 & .065 & 3.420 & .0033 & .1487 \\
$5$ & \diff{32} & 28 & 31 & .541 & 14.841 & .0193 & .4787 \\
$6$ & \diff{64} & 32 & 60 & 6.106 & 525.990 & .1908 & 8.7665 \\
$7$ & \diff{128} & 35 & 98 & 193.430 & 26050.487 & 5.5266 & 265.8213 \\
$8$ & \diff{256} & 39 & OOM & 8288.689 & OOM & 212.5305 & OOM \\ \bottomrule
\end{tabular*}%
\end{table}

\subsubsection{\diff{Comparison to prior works}}
\diff{First-order splitting methods were previously used in~\cite{fawzi2023entropy} to solve for the relaxation~\eqref{eqn:ccm} for problem sizes of up to $l=8$, the same dimensions our proposed interior-point method can solve problems up to. The fact that we can use interior-point methods to solve problems of the same dimension as first-order methods is quite remarkable as first-order methods generally scale to much larger problems than interior-point methods. Furthermore, first-order methods usually require manual tuning of parameters to obtain reliable results, unlike interior-point methods which require little or no tuning.
}

\section{Concluding remarks}


Overall, we have shown how by using cones and barriers tailored to specific problems arising in quantum information theory, we are able to solve these problems using interior-point methods much more efficiently than if we lifted them to the full quantum relative entropy cone. In particular, we showed how we can absorb positive linear maps into the quantum relative entropy cone, and established optimal self-concordance of the natural barriers for these cones. Additionally, we showed how we can exploit common structures in these linear maps, including as block diagonal, low-rank, and quantum conditional entropy-like structures, to more efficiently solve linear systems with the Hessian matrices of these barriers, which is the main bottleneck in interior-point methods for quantum relative entropy \diff{optimization problems}. Our numerical experiments demonstrate that we can solve problems in fewer iterations, with faster per iteration cost, and requiring less computational memory, allowing us to solve problems which were previously intractable, such as to compute quantum key rates of high-dimensional quantum key distribution protocols. We conclude by presenting two directions for future work.

\paragraph{Automated software}
One of the main difficulties in implementing our techniques is that each problem requires its own tailored cone and barrier oracle which users must implement by themselves. As we have generalized the techniques that we have been using in Section~\ref{sec:structure}, it would be beneficial to implement software which is able to automatically detect and exploit these structures in a way that mirrors how state-of-the-art semidefinite programming software is able to automatically take advantage of useful problem structure. 

\paragraph{Primal-dual scalings}
Most of this paper has focused on the efficient practical implementation of quantum relative entropy \diff{optimization problems} for interior-point methods. However, it is also interesting if there are any algorithmic improvements we can make to interior-point methods for solving quantum relative entropy \diff{optimization problems}. One line of work in nonsymmetric cone programming aims to use primal-dual scalings to allow for symmetric primal and dual steps to be taken~\cite{tunccel2001generalization,myklebust2014interior,myklebust2015primal}. This is a desirable property for these algorithms, and is commonly used in symmetric cone programming. This has recently been implemented for the exponential cone in~\cite{dahl2022primal}, and therefore it is of interest whether we can use the same technique to improve methods for solving quantum relative entropy \diff{optimization problems}. However, the primal-dual scaling matrices for non-symmetric cones require oracles for the conjugate barrier. In~\cite{kapelevich2022computing}, it was shown how the conjugate barrier could be efficiently computed for a range of nonsymmetric cones, and in~\cite{karimi2024domain} it was shown how the conjugate barrier for the quantum entropy cone could be computed. However, whether the conjugate barrier for the quantum relative entropy cone can be efficiently computed is still an open question.



{\appendix
\numberwithin{equation}{section}

\section{Proofs of Corollaries~\ref{cor:cone-qrd} and~\ref{cor:qmi-cone}} \label{subsec:cor-proof}
Here, we show how slight modifications to the proof for Theorem~\ref{thm:main} can yield useful barriers for cones closely related to~\eqref{eqn:main-set}. The first result uses the fact that a function which is compatible with a given domain is also compatible with any closed convex subset of this domain.
\begin{proof}[Proof of Corollary~\ref{cor:cone-qrd}]
    From Corollary~\ref{cor:main-compatibility}, we know by choosing $g(x)=\log(x)$, $\mathcal{N}_1(X)=X$, and $\mathcal{N}_2(X)=\mathbb{I}\otimes \tr_1^{n,n}(X)$ that negative quantum conditional entropy 
    \begin{equation*}
        X\mapsto -S\divx{X}{\mathbb{I}\otimes \tr_1^{n,n}(X)},
    \end{equation*}
    defined on $\mathbb{H}^{n^2}_{++}$ is $1$-compatible with respect to the domain $\mathbb{H}^{n^2}_+$. From Definition~\ref{defn:compatibility}, it is clear that if a function is $\beta$-compatible with respect to a domain, then the restriction of this function to any (relatively) open convex subset of its domain is also $\beta$-compatible with respect to the closure of this restricted domain. Therefore, the restriction of negative quantum conditional entropy to the domain $\mathbb{H}^{n^2}_{++}\cap\mathcal{V}$, where $\mathcal{V}$ is given by~\eqref{eqn:qrd-fixed-point-subspace}, i.e.,
    \begin{equation*}
        (y,Z)\mapsto -S\divx{\mathcal{G}(y, Z)}{\mathbb{I}\otimes\tr_1^{n,n}[\mathcal{G}(y, Z)]},
    \end{equation*}
    defined on $\mathbb{R}^{n^2-n}_{++}\times\mathbb{H}^n_{++}$ is $1$-compatible with respect to $\mathbb{H}^{n^2}_{+}\cap\mathcal{V}$, i.e., $(y,Z)\in\mathbb{R}^{n^2-n}_+\times\mathbb{H}^n_+$. The remainder of the proof follows from an identical argument as the proof for Theorem~\ref{thm:main}. 
\end{proof}
The next result uses the fact that the sum of two functions which are compatible with the same domain will also be compatible with this domain.
\begin{proof}[Proof of Corollary~\ref{cor:qmi-cone}]
    From Corollary~\ref{cor:main-compatibility}, we know by choosing $g(x)=\log(x)$, $\mathcal{N}_1(X)=VXV^\dag$ and $\mathcal{N}_2(X)=\mathbb{I}\otimes\tr_1^{m,p}(VXV^\dag)$ that 
    \begin{equation*}
        X \mapsto -S\divx{VXV^\dag}{\mathbb{I}\otimes\tr_1^{m,p}(VXV^\dag)},
    \end{equation*}
    defined on $\mathbb{H}^n_{++}$ is $1$-compatible with respect to $\mathbb{H}^n_+$. Similarly, by choosing $\mathcal{N}_1(X)=\tr_2^{m,p}(VXV^\dag)$ and $\mathcal{N}_2(X)=\tr[X]\mathbb{I}$, we know that
    \begin{align*}
        X &\mapsto -S\divx{\tr_2^{m,p}(VXV^\dag)}{\tr[X]\mathbb{I}} \\
        &=  S(\tr_2^{m,p}(VXV^\dag)) - S(\tr[X]),
    \end{align*}
    defined on $\mathbb{H}^n_{++}$ is also $1$-compatible with respect to $\mathbb{H}^n_+$, where we have used the fact that $X\mapsto \tr_2^{m,p}(VXV^\dag)$ is a trace-preserving map. From Definition~\ref{defn:compatibility}, it is clear that the sum of any two functions which are compatible with respect to the same domain is also compatible with respect to the same domain. Therefore, the (homogenized) quantum mutual information
    \begin{equation*}
        X \mapsto -S\divx{VXV^\dag}{\mathbb{I}\otimes\tr_1^{m,p}(VXV^\dag)} + S(\tr_2^{m,p}(VXV^\dag)) - S(\tr[X]),
    \end{equation*}
    defined on $\mathbb{H}^n_{++}$ is also $1$-compatible with respect to $\mathbb{H}^n_+$. The remainder of the proof follows from an identical argument as the proof for Theorem~\ref{thm:main}. 
\end{proof}

\section{Matrix inversion lemma}\label{appdx:matrix-inversion-lemma}
Consider the matrix $A - UBU^\top$ where $A\in\mathbb{R}^{n\times n}$, $B\in\mathbb{R}^{r\times r}$, and $U\in\mathbb{R}^{n\times r}$. When $r\ll n$, we can interpret this as a low rank perturbation of the matrix $A$, in which case it is well known that the inverse can be expressed as a low rank perturbation of $A^{-1}$. More concretely, if $A$ and $A - UBU^\top$ are nonsingular, then
\begin{equation}\label{eqn:matrix-inversion-lemma-a}
    (A - UBU^\top)^{-1} = A^{-1} + A^{-1}U(\mathbb{I} - BU^\top A^{-1}U)^{-1}BU^\top A^{-1}
\end{equation}
Note that this expression does not require $B$ to be nonsingular. This expression is well defined due to the following fact (see, e.g.,~\cite{henderson1981deriving}).
\begin{fact}
    The matrices $A$ and $A - UBU^\top$ are both nonsingular if and only if $\mathbb{I} - BU^\top A^{-1}U$ is also nonsingular.
\end{fact}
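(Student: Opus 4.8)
The plan is to reduce the statement to a single scalar determinant identity and then observe that a product of numbers vanishes exactly when one of its factors does. The natural tool is the generalized matrix determinant lemma, which I would prove by evaluating the determinant of a suitable augmented block matrix in two different ways. Concretely, I would introduce
\[
    M = \begin{bmatrix} A & U \\ BU^\top & \mathbb{I}_r \end{bmatrix} \in \mathbb{R}^{(n+r)\times(n+r)},
\]
recalling that $U\in\mathbb{R}^{n\times r}$ and $B\in\mathbb{R}^{r\times r}$, so the off-diagonal blocks have compatible sizes. Taking the Schur complement of $M$ with respect to its lower-right block $\mathbb{I}_r$ (which is always invertible) gives $\det(M) = \det(\mathbb{I}_r)\det(A - U\,\mathbb{I}_r^{-1}\,BU^\top) = \det(A - UBU^\top)$, with no assumption on $A$. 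Taking instead the Schur complement with respect to the upper-left block $A$ — legitimate precisely because $A$ is nonsingular, the standing hypothesis under which $A^{-1}$ even appears — gives $\det(M) = \det(A)\det(\mathbb{I}_r - BU^\top A^{-1}U)$. Equating the two evaluations yields
\[
    \det(A - UBU^\top) = \det(A)\,\det(\mathbb{I}_r - BU^\top A^{-1}U).
\]

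From here the Fact is immediate. The right-hand side is only meaningful when $A$ is invertible, so throughout we may take $\det(A)\neq 0$ as given. Under this assumption, $\det(A - UBU^\top)$ and $\det(\mathbb{I}_r - BU^\top A^{-1}U)$ differ only by the nonzero factor $\det(A)$, hence one is nonzero if and only if the other is. This is exactly the assertion that $A - UBU^\top$ is nonsingular if and only if $\mathbb{I}_r - BU^\top A^{-1}U$ is nonsingular, which together with the standing nonsingularity of $A$ supplies both directions of the biconditional.

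I do not anticipate a genuine obstacle, as this is a classical identity; the one point requiring care is that $B$ is allowed to be singular. This rules out the more familiar route of writing $(A - UBU^\top)^{-1}$ directly from a Sherman--Morrison--Woodbury expression carrying a $B^{-1}$ term, or of invoking a determinant lemma in a form that inverts $B$. The block-matrix computation above sidesteps this entirely, since the only inverse it ever uses is $A^{-1}$, whose existence is built into the statement through the appearance of $A^{-1}$ in the condition. The remaining work is purely bookkeeping: verifying the two Schur-complement factorizations of $M$ and keeping the block dimensions straight so that $\mathbb{I}_r$ denotes the $r\times r$ identity throughout.
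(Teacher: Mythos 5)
Your proof is correct and rests on exactly the same determinant identity the paper uses, namely $\det(A-UBU^\top)=\det(A)\det(\mathbb{I}-BU^\top A^{-1}U)$ (the paper states it in the equivalent form with $\det(A^{-1})$ on the right); you simply also supply the standard two-Schur-complement derivation of that identity, which the paper omits. Your remark that the argument avoids inverting $B$ is a correct and relevant observation, since the paper explicitly allows $B$ to be singular here.
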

\begin{proof}
    This is a consequence of the identity $\det(\mathbb{I} - BU^\top A^{-1}U)=\det(A^{-1})\det(A-UBU^\top)$. 
\end{proof}
When $B$ is also nonsingular, we have the following alternative expression for the matrix inverse, often referred to as the Sherman-Morrison-Woodbury identity
\begin{equation}\label{eqn:matrix-inversion-lemma-b}
    (A - UBU^\top)^{-1} = A^{-1} + A^{-1}U(B^{-1} - U^\top A^{-1}U)^{-1}U^\top A^{-1}
\end{equation}
Similar to the previous identity, this identity is well defined under suitable assumptions.
\begin{fact}
    The matrices $A$, $B$, and $A - UBU^\top$ are all nonsingular if and only if $B^{-1} - U^\top A^{-1}U$ is also nonsingular.
\end{fact}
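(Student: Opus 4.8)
The plan is to reduce the claim to a single determinant identity, exactly mirroring the one-line proof of the preceding Fact. The natural device is the bordered block matrix
\[
M = \begin{bmatrix} A & U \\ U^\top & B^{-1} \end{bmatrix},
\]
whose determinant I would evaluate in two ways using the Schur complement formula. Taking the Schur complement with respect to the $(1,1)$ block $A$ (which requires $A$ nonsingular) gives $\det(M) = \det(A)\,\det(B^{-1} - U^\top A^{-1} U)$, while taking the Schur complement with respect to the $(2,2)$ block $B^{-1}$ (which requires $B$ nonsingular) gives $\det(M) = \det(B^{-1})\,\det(A - U B U^\top)$. Equating these and using $\det(B^{-1}) = \det(B)^{-1}$ yields the key identity
\[
\det(B^{-1} - U^\top A^{-1} U) = \frac{\det(A - U B U^\top)}{\det(A)\,\det(B)}.
\]

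From here the equivalence is immediate. For the forward direction, if $A$, $B$, and $A - U B U^\top$ are all nonsingular, then the right-hand side above is a nonzero scalar, so $\det(B^{-1} - U^\top A^{-1} U) \neq 0$ and the matrix is nonsingular. For the converse, I would first observe that the very expression $B^{-1} - U^\top A^{-1} U$ presupposes that $A$ and $B$ are invertible; granting this, nonsingularity of $B^{-1} - U^\top A^{-1} U$ forces the numerator in $\det(A - U B U^\top) = \det(A)\,\det(B)\,\det(B^{-1} - U^\top A^{-1} U)$ to be nonzero, so $A - U B U^\top$ is nonsingular as well.

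Since this is entirely routine, there is no real obstacle; the only point requiring a little care is the bookkeeping of the implicit nonsingularity hypotheses. Both Schur-complement evaluations are valid only when the corresponding diagonal block is invertible, so the derivation of the identity already builds in the assumptions that $A$ and $B$ are nonsingular, which is precisely what makes the ``if and only if'' statement sensible. This matches the spirit of the proof of the previous Fact, where the identity $\det(\mathbb{I} - BU^\top A^{-1}U) = \det(A^{-1})\det(A - U B U^\top)$ silently presupposed invertibility of $A$.
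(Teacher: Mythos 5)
Your proof is correct and takes essentially the same route as the paper: the paper's one-line argument simply cites the identity $\det(B^{-1} - U^\top A^{-1}U)=\det(A^{-1})\det(B^{-1})\det(A-UBU^\top)$, which is exactly the identity you derive (via the two Schur-complement evaluations of the bordered matrix) and then apply. Your version just makes the derivation of that identity, and the implicit invertibility bookkeeping, explicit.
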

\begin{proof}
    This follows from $\det(B^{-1} - U^\top A^{-1}U)=\det(A^{-1})\det(B^{-1})\det(A-UBU^\top)$. 
\end{proof}
This expression is advantageous as it gives a symmetric expression for the matrix inverse. If $A$ and $B$ are both positive definite, then we have the additional property, which is a straightforward consequence of the Schur complement lemma.
\begin{fact}
    If $B\succ0$ and $A-UBU^\top\succ0$, then $B^{-1} - U^\top A^{-1} U \succ0$.
\end{fact}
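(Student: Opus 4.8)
The plan is to deduce this from the standard Schur complement lemma by applying it in \emph{both} directions to a single auxiliary block matrix. First I would record a well-definedness observation: the quantity $B^{-1}-U^\top A^{-1}U$ only makes sense if $A$ is invertible, and this is automatic here. Indeed, writing $A=(A-UBU^\top)+UBU^\top$ exhibits $A$ as the sum of the positive definite matrix $A-UBU^\top$ and the positive semidefinite matrix $UBU^\top$ (positive semidefinite because $B\succ0$), so $A\succ0$ and $A^{-1}$ exists.

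Next I would introduce the symmetric block matrix
\begin{equation*}
    M = \begin{bmatrix} A & U \\ U^\top & B^{-1} \end{bmatrix}.
\end{equation*}
The Schur complement lemma says that, for such a block matrix with invertible diagonal blocks, positive definiteness of $M$ is equivalent to positive definiteness of either diagonal block together with positive definiteness of the corresponding Schur complement. Applying this with respect to the $(2,2)$ block $B^{-1}$, the relevant Schur complement is $A-U(B^{-1})^{-1}U^\top = A-UBU^\top$. Since both $B^{-1}\succ0$ (from $B\succ0$) and $A-UBU^\top\succ0$ hold by hypothesis, this direction of the lemma yields $M\succ0$.

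Finally, I would apply the Schur complement lemma to the \emph{same} matrix $M$, but now with respect to the $(1,1)$ block $A$, whose Schur complement is exactly $B^{-1}-U^\top A^{-1}U$. Since we have just established $M\succ0$ (and $A\succ0$), this direction forces the Schur complement to be positive definite, i.e. $B^{-1}-U^\top A^{-1}U\succ0$, as desired. There is no real obstacle here beyond careful bookkeeping: the only points needing attention are confirming that $A$ is invertible so that the statement is meaningful, and keeping the two Schur complements straight (the one taken against $B^{-1}$ reproduces the hypothesis $A-UBU^\top\succ0$, while the one taken against $A$ produces the desired conclusion). The argument is symmetric in flavor to the dispensable-inverse versions recorded just above in the appendix.
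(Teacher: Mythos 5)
Your proposal is correct and follows essentially the same route as the paper: both introduce the block matrix $\begin{bmatrix} A & U \\ U^\top & B^{-1}\end{bmatrix}$, use the Schur complement lemma with respect to the $(2,2)$ block to deduce positive definiteness of the block matrix from the hypotheses, and then apply the lemma with respect to the $(1,1)$ block to obtain $B^{-1}-U^\top A^{-1}U\succ0$. The only addition is your preliminary observation that $A\succ0$ (and hence invertible), which the paper instead extracts from the second application of the lemma.
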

\begin{proof}
    Consider the matrix
    \begin{equation*}
        X = \begin{bmatrix}
            A & U \\ U^\top & B^{-1}
        \end{bmatrix}.
    \end{equation*}
    Using the Schur complement lemma (see, e.g.,~\cite[Section A.5.5]{boyd2004convex}) for the Schur complement of $B^{-1}$ on $X$, we have that $B^{-1}\succ0$ and $A-UBU^\top\succ0$ implies $X\succ0$. Now using the Schur complement lemma for the Schur complement of $A$ on $X$, we have that $X\succ0$ implies that $A\succ0$ and $B^{-1} - U^\top A^{-1} U \succ0$, as desired. 
\end{proof}
Therefore, under these assumptions, we can perform a Cholesky decomposition on the Schur complement matrix $B^{-1} - U^\top A^{-1}U$ rather than an LU decomposition to solve linear systems with this matrix.

We will also consider the following variant of the matrix inversion lemma.
\begin{lemma}\label{lem:matrix-inversion-lemma-c}
    Consider the matrices $A\in\mathbb{R}^{n\times n}$, $B\in\mathbb{R}^{n\times r}$, $C\in\mathbb{R}^{r\times r}$, and $U\in\mathbb{R}^{n\times r}$. Then if $A$ and $A+UB^\top+BU^\top-UCU^\top$ are both positive definite, then
    \begin{equation*}
        \biggl( \begin{bmatrix} \mathbb{I} & U \end{bmatrix} \begin{bmatrix}
            A      & B \\
            B^\top & -C
        \end{bmatrix} \begin{bmatrix} \mathbb{I} \\ U^\top \end{bmatrix} \biggl)^{-1} = A^{-1} + A^{-1}\begin{bmatrix}
            B & U
        \end{bmatrix} S^{-1} \begin{bmatrix}
            B^\top \\ U^\top
        \end{bmatrix}A^{-1},
    \end{equation*}
    where
    \begin{equation*}
        S = \begin{bmatrix}
            C & \mathbb{I} \\ \mathbb{I} & 0
        \end{bmatrix} + \begin{bmatrix}
            B^\top \\ U^\top
        \end{bmatrix} A^{-1} \begin{bmatrix}
            B & U
        \end{bmatrix}.
    \end{equation*}
\end{lemma}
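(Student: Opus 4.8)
The plan is to recognize the matrix being inverted as a symmetric low-rank update of $A$ and then reduce to the Sherman–Morrison–Woodbury identity recorded in the appendix. First I would carry out the multiplication, obtaining
\begin{equation*}
M \coloneqq \begin{bmatrix}\mathbb{I} & U\end{bmatrix}\begin{bmatrix}A & B\\ B^\top & -C\end{bmatrix}\begin{bmatrix}\mathbb{I}\\ U^\top\end{bmatrix} = A + BU^\top + UB^\top - UCU^\top,
\end{equation*}
which is exactly the matrix assumed positive definite, so $M^{-1}$ exists. The key observation is that the whole perturbation $BU^\top + UB^\top - UCU^\top$ is a single congruence of a tall factor through an indefinite $2r\times 2r$ core,
\begin{equation*}
M = A + P K P^\top, \qquad P \coloneqq \begin{bmatrix}B & U\end{bmatrix}, \quad K \coloneqq \begin{bmatrix}0 & \mathbb{I}\\ \mathbb{I} & -C\end{bmatrix},
\end{equation*}
which I would verify by multiplying out. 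Crucially, $K$ is invertible regardless of whether $C$ is, with
\begin{equation*}
K^{-1} = \begin{bmatrix}C & \mathbb{I}\\ \mathbb{I} & 0\end{bmatrix},
\end{equation*}
a one-line check; this is what lets the argument go through even when $C$ is singular, which is the whole reason for phrasing the core in this indefinite way.

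Second I would apply the matrix inversion lemma in the form~\eqref{eqn:matrix-inversion-lemma-b} to the update $M = A + P K P^\top$, taking the low-rank factor to be $P$ and the invertible middle matrix to be $K$. This expresses $M^{-1}$ as $A^{-1}$ together with a correction governed by the Schur complement $K^{-1} + P^\top A^{-1} P$. Substituting the explicit $K^{-1}$ above and expanding $P^\top A^{-1} P = \begin{bmatrix}B^\top\\ U^\top\end{bmatrix}A^{-1}\begin{bmatrix}B & U\end{bmatrix}$ identifies this Schur complement with the matrix $S$ of the statement, delivering the claimed expression for $M^{-1}$ in terms of $A^{-1}$, $P$, and $S^{-1}$. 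Invertibility of $S$, needed for the formula to make sense, follows from the determinant identities in the facts accompanying~\eqref{eqn:matrix-inversion-lemma-b}: since $A$ and $M$ are both nonsingular, the relevant Schur complement is nonsingular, hence so is $S$. As an alternative to invoking the lemma as a black box, I could instead verify the identity directly by multiplying $M$ against the proposed right-hand side and using the relation $P^\top A^{-1} P = S - K^{-1}$ to collapse all cross terms to the identity.

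The main point requiring care — rather than any genuine obstacle — is the sign bookkeeping through the indefinite core $K$. Because $K$ is neither positive nor negative definite, one must use the general form of the matrix inversion lemma, which requires only nonsingularity of the middle factor and not definiteness, rather than the positive-definite specialization; correctly matching $\pm K$ to the $A - UBU^\top$ template of~\eqref{eqn:matrix-inversion-lemma-a}--\eqref{eqn:matrix-inversion-lemma-b} is the single place where a sign slip is most likely. Everything else is routine linear algebra.
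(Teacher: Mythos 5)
Your route is genuinely different from the paper's. The paper proves this lemma by recasting the linear system $Mx=d$ (where $M$ denotes the matrix being inverted) as the KKT conditions of a constrained quadratic program with auxiliary variable $y=U^\top x$, assembling the resulting $3\times 3$ block KKT matrix, and block-eliminating; nonsingularity of $S$ is obtained from the identity $\det(\text{KKT matrix})=\det(A)\det(S)$. You instead package the entire perturbation as a single congruence $M=A+PKP^\top$ with $P=\begin{bmatrix}B & U\end{bmatrix}$ and the indefinite core $K=\begin{bmatrix}0&\mathbb{I}\\\mathbb{I}&-C\end{bmatrix}$, observe that $K$ is invertible regardless of $C$ with $K^{-1}=\begin{bmatrix}C&\mathbb{I}\\\mathbb{I}&0\end{bmatrix}$, and apply~\eqref{eqn:matrix-inversion-lemma-b} directly. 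The decomposition, the inverse of $K$, and the nonsingularity of $S$ via the Fact accompanying~\eqref{eqn:matrix-inversion-lemma-b} all check out, and your argument is purely algebraic and arguably more self-contained; what the paper's detour buys is that solvability is explained by uniqueness of the primal--dual solution of a strictly convex QP rather than by the earlier determinant Facts.

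However, you have not actually executed the one step you yourself flag as the likely slip. Writing $M=A-P(-K)P^\top$ and substituting into~\eqref{eqn:matrix-inversion-lemma-b} gives
\begin{equation*}
M^{-1}=A^{-1}+A^{-1}P\bigl((-K)^{-1}-P^\top A^{-1}P\bigr)^{-1}P^\top A^{-1}=A^{-1}-A^{-1}P\,S^{-1}P^\top A^{-1},
\end{equation*}
with a \emph{minus} sign on the correction term, whereas the statement as printed has a plus sign. The minus sign is the correct one: taking $n=r=1$, $A=1$, $B=0$, $U=1$, $C=-1$ gives $M=2$ and $S=\begin{bmatrix}-1&1\\1&1\end{bmatrix}$, for which $A^{-1}-A^{-1}PS^{-1}P^\top A^{-1}=1/2=M^{-1}$ while the printed right-hand side evaluates to $3/2$. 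Block-eliminating the paper's own KKT system also yields $x=A^{-1}d-A^{-1}PS^{-1}P^\top A^{-1}d$, so the displayed identity in the lemma contains a sign typo. Your assertion that the substitution ``delivers the claimed expression'' therefore papers over exactly the sign you identified as dangerous; carry it through explicitly and record that the identity should read $M^{-1}=A^{-1}-A^{-1}PS^{-1}P^\top A^{-1}$.
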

\begin{proof}
    First, we can recast solving linear systems with the desired matrix and a left-hand side vector $d\in\mathbb{R}^n$ as the optimality conditions of the  unconstrained convex quadratic minimization problem
    \begin{equation*}
        \min_{x\in\mathbb{R}^{n}} \quad x^\top \begin{bmatrix} \mathbb{I} & U \end{bmatrix} \begin{bmatrix}
            A      & B \\
            B^\top & -C
        \end{bmatrix} \begin{bmatrix} \mathbb{I} \\ U^\top \end{bmatrix} x - d^\top x,
    \end{equation*}
    which, in turn, is equivalent to the constrained convex quadratic minimization problem
    \begin{equation*}
        \min_{x\in\mathbb{R}^{n}, y\in\mathbb{R}^r} \quad \begin{bmatrix} x^\top & y^\top \end{bmatrix} \begin{bmatrix}
            A      & B \\
            B^\top & -C
        \end{bmatrix} \begin{bmatrix} x \\ y \end{bmatrix} x - d^\top x, \qquad \subjto \quad y = U^\top x.
    \end{equation*}
    The optimality conditions for this reformulated problem are for the optimal primal and dual variables $x\in\mathbb{R}^n$, $y,z\in\mathbb{R}^r$ to satisfy 
    \begin{equation*}
        \begin{bmatrix}
            A      & B & U \\
            B^\top & -C & -\mathbb{I} \\
            U^\top & -\mathbb{I} & 0
        \end{bmatrix} \begin{bmatrix}
            x \\ y \\ z
        \end{bmatrix} = \begin{bmatrix}
            d \\ 0 \\ 0
        \end{bmatrix}.
    \end{equation*}
    As $A+UB^\top+BU^\top-UCU^\top$ is assumed to be nonsingular, this minimization problem has a unique primal and dual solution, and therefore the KKT matrix must be nonsingular (see, e.g.,~\cite[Section 10.1.1]{boyd2004convex}). Performing a suitable block elimination on this system of equations to solve for $x$ gives the desired result. The expression is well defined as the (nonzero) determinant of the KKT matrix must equal $\det(A)\det(S)$ (see, e.g.,~\cite[Appendix A.5.5]{boyd2004convex}), which implies that $S$ must be nonsingular. 
\end{proof}

\section{Derivatives} \label{appdx:derivatives}
\subsection{Spectral functions} \label{sec:derivatives}
Consider a scalar valued function $f:\domain f\rightarrow\mathbb{R}$. Spectral functions extend $f$ to act on Hermitian matrices $X$ with spectral decomposition $X=\sum_{i=1}^n\lambda_i v_iv_i^\dag$ and eigenvalues satisfying $\lambda_i\in\domain f$ for all $i=1,\ldots,n$ by letting
\begin{equation*}
    f(X) \coloneqq \sum_{i=1}^n f(\lambda_i) v_i v_i^\dag.
\end{equation*}
The derivatives of these functions are fairly well-known. To characterize these derivatives, consider a real diagonal matrix $\Lambda=\diag(\lambda_1,\ldots,\lambda_n)$ with entries $\lambda_i\in\domain f$ for all $i=1,\ldots,n$. We define the first divided differences matrix $f^{[1]}(\Lambda)\in\mathbb{S}^n$ of $\Lambda$ for the function $f$ as the matrix whose $(i,j)$-th entry is $f^{[1]}(\lambda_i, \lambda_j)$, where%
\begin{align*}
    f^{[1]}(\lambda, \mu) &= \frac{f(\lambda) - f(\mu)}{\lambda - \mu}, \quad \textrm{if }\lambda\neq\mu\\
    f^{[1]}(\lambda, \lambda) &= f'(\lambda).
\end{align*}
We also define the $k$-th second divided difference matrix $f^{[2]}_k(\Lambda)\in\mathbb{S}^n$ of the matrix $\Lambda$ for the function $f$ as the matrix whose $(i,j)$-th entry is $f^{[2]}(\lambda_i, \lambda_j, \lambda_k)$, where
\begin{align*}
    f^{[2]}(\lambda, \mu, \sigma) &= \frac{f^{[1]}(\lambda, \sigma) - f^{[1]}(\mu, \sigma)}{\lambda - \mu}, \quad \textrm{if }\lambda\neq\mu \\
    f^{[2]}(\lambda, \lambda, \lambda) &= \frac{1}{2}f''(\lambda),
\end{align*}
and all other scenarios are defined by noticing that $f^{[2]}(\lambda, \mu, \sigma)$ is defined symmetrically in each of its arguments. Given these, the following two lemmas follow from~\cite[Theorem 3.23]{hiai2014introduction} and~\cite[Theorem 3.33]{hiai2014introduction}.
\begin{lemma}\label{eqn:lem-tr-f-derivative}
    Consider the function $g(X)=\tr[f(X)]$ for a twice continuously differentiable function $f:\domain f\rightarrow\mathbb{R}$. The first and second derivatives of $g$ are
    \begin{subequations}
        \begin{align}
            \grad g(X) &= f'(X)\\
            \grad^2g(X)[H] &= U[(f')^{[1]}(\Lambda) \odot (U^\dag H U)]U^\dag, \label{eqn:tr-f-second-derivative}
        \end{align}
    \end{subequations}
    where $X$ has spectral decomposition $X=U\Lambda U^\dag$.
\end{lemma}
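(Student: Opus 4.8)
The plan is to reduce both derivative formulas to a single ingredient: the Daleckii--Krein (divided-difference) formula for the Fr\'echet derivative of a matrix spectral function, which is exactly the content of the cited result of Hiai. Recall that this formula asserts that for a $C^1$ scalar function $\phi$ the matrix-valued map $X\mapsto\phi(X)$ has directional derivative
\[
    D\phi(X)[H] = U[\phi^{[1]}(\Lambda)\odot(U^\dag H U)]U^\dag,
\]
where $X=U\Lambda U^\dag$. Once this is in hand, I would obtain the first derivative by applying it with $\phi=f$, and the second derivative by applying it again with $\phi=f'$, since $\grad g(X)=f'(X)$ is itself a spectral function of $X$.

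For the first derivative, I would write $\grad g(X)[H] = \frac{d}{dt}\big|_{t=0}\tr[f(X+tH)] = \tr[Df(X)[H]]$, substitute the Daleckii--Krein formula, and use cyclicity of the trace to reduce to $\tr[f^{[1]}(\Lambda)\odot(U^\dag H U)]$. Only the diagonal survives under the trace, and there $f^{[1]}(\lambda_i,\lambda_i)=f'(\lambda_i)$, so the expression collapses to $\sum_i f'(\lambda_i)(U^\dag H U)_{ii} = \tr[f'(X)H] = \inp{f'(X)}{H}$. This identifies $\grad g(X)=f'(X)$.

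For the second derivative, I would differentiate the gradient $f'(X)$ directly as a matrix function. For a test direction $K$,
\[
    \grad^2 g(X)[H,K] = \frac{d}{ds}\Big|_{s=0}\inp{f'(X+sK)}{H} = \inp{Df'(X)[K]}{H},
\]
and applying the Daleckii--Krein formula to $f'$ gives $Df'(X)[K]=U[(f')^{[1]}(\Lambda)\odot(U^\dag K U)]U^\dag$. Writing $\tilde H = U^\dag H U$, $\tilde K = U^\dag K U$, and $M=(f')^{[1]}(\Lambda)$, this bilinear form becomes $\tr[(M\odot\tilde K)\tilde H]$. To pass from the symmetric bilinear form to the gradient vector $\grad^2 g(X)[H]$, I would invoke the fact that Hadamard multiplication by a fixed real \emph{symmetric} matrix $M$ is self-adjoint for the trace inner product, i.e. $\inp{M\odot A}{B}=\inp{A}{M\odot B}$. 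This lets me rewrite $\tr[(M\odot\tilde K)\tilde H]=\inp{M\odot\tilde H}{\tilde K}=\inp{U(M\odot\tilde H)U^\dag}{K}$, and reading off the first argument yields $\grad^2 g(X)[H]=U[(f')^{[1]}(\Lambda)\odot(U^\dag H U)]U^\dag$, as claimed.

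The only genuine content here is the Daleckii--Krein formula itself, which I would simply cite; everything else is trace bookkeeping. The one step deserving care is the passage from the bilinear form $\grad^2 g(X)[H,K]$ to the gradient $\grad^2 g(X)[H]$: this hinges on the symmetry of $M=(f')^{[1]}(\Lambda)$, which is precisely what guarantees that the Hadamard operator $A\mapsto M\odot A$ is self-adjoint and hence that $H$ and $K$ may be interchanged to produce the stated symmetric form.
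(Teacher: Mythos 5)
Your derivation is correct, and it follows the same route as the paper, which simply cites the divided-difference (Daleckii--Krein) derivative formulas from Hiai's lecture notes without further argument; you have merely filled in the trace bookkeeping and the self-adjointness of Hadamard multiplication by the symmetric matrix $(f')^{[1]}(\Lambda)$ that the citation leaves implicit. No gaps.
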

\begin{lemma}\label{eqn:lem-tr-c-f-derivative}
    Consider the function $h(X)=\tr[Cf(X)]$ for a Hermitian matrix $C\in\mathbb{H}^n$ and a twice continuously differentiable function $f:\domain f\rightarrow\mathbb{R}$. The first and second derivatives of $h$ are
    \begin{subequations}
        \begin{align}
            \grad h(X) &= U[(f')^{[1]}(\Lambda) \odot (U^\dag C U)]U^\dag\\
            \grad^2h(X)[H] &= U \biggl[ \sum_{k=1}^n (f')^{[2]}_k(\Lambda) \odot \biggl([ U^\dag CU]_k[ U^\dag HU]_k^\dag + [U^\dag HU]_k[U^\dag CU]_k^\dag \biggr) \biggr]U^\dag, \label{eqn:tr-c-f-second-derivative}
        \end{align}
    \end{subequations}
    where $X$ has spectral decomposition $X=U\Lambda U^\dag$.
\end{lemma}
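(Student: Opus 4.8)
The plan is to obtain both formulas from first-order perturbation theory for spectral matrix functions (the Daleckii--Krein formula), applied once for the gradient and twice for the Hessian, exactly in the spirit of the companion Lemma~\ref{eqn:lem-tr-f-derivative}. The single tool I would take as given is that for a $C^1$ function $f$ and $X = U\Lambda U^\dag$, the Fr\'echet derivative of the matrix function $X \mapsto f(X)$ is $\mathrm{D}f(X)[H] = U[f^{[1]}(\Lambda)\odot(U^\dag H U)]U^\dag$, where $f^{[1]}$ is the first divided difference matrix of $f$ from the statement.

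For the gradient I would use the chain rule $\grad h(X)[H] = \tr[C\,\mathrm{D}f(X)[H]]$, substitute the Daleckii--Krein formula, move the eigenvector factors $U$ through the trace by cyclicity, and abbreviate $\tilde C = U^\dag C U$ and $\tilde H = U^\dag H U$. Expanding the trace of the Hadamard product entrywise and using the symmetry $f^{[1]}(\lambda_i,\lambda_j) = f^{[1]}(\lambda_j,\lambda_i)$ of the divided-difference matrix, I can rewrite the result as $\inp{G}{H}$ with $G = U[f^{[1]}(\Lambda)\odot\tilde C]U^\dag$; since $G$ is Hermitian this identifies $\grad h(X)$. (As a consistency check, taking $C=\mathbb{I}$ collapses the Hadamard product onto the diagonal $f^{[1]}(\lambda_i,\lambda_i)=f'(\lambda_i)$ and recovers $\grad h(X)=f'(X)$, matching Lemma~\ref{eqn:lem-tr-f-derivative}.)

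For the Hessian I would differentiate $X \mapsto \grad h(X)$ a second time in direction $H$, tracking the two sources of $X$-dependence in $G = U[f^{[1]}(\Lambda)\odot\tilde C]U^\dag$ by first-order eigen-perturbation. The spectral data satisfy $\dot\Lambda = \diag(\tilde H)$ and $\dot U = UK$ with $K$ skew-Hermitian and $K_{ij} = \tilde H_{ij}/(\lambda_j - \lambda_i)$ off the diagonal, while the matrix $f^{[1]}(\Lambda)$ moves because its entries depend on the eigenvalues. Applying the product rule and collecting terms, the contributions coming from the eigenvector rotation $\dot U$ appear as difference quotients $(f^{[1]}(\lambda_i,\lambda_k) - f^{[1]}(\lambda_j,\lambda_k))/(\lambda_i-\lambda_j)$, which is precisely the second divided difference $f^{[2]}(\lambda_i,\lambda_j,\lambda_k)$, and the contribution from $\dot\Lambda$ supplies the coincident ($\lambda_i=\lambda_j$) cases. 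This reorganizes the expression into the stated sum over $k$ of Hadamard products of $f^{[2]}_k(\Lambda)$ with the symmetrized rank-one outer products $[\tilde C]_k[\tilde H]_k^\dag + [\tilde H]_k[\tilde C]_k^\dag$, i.e.~into~\eqref{eqn:tr-c-f-second-derivative}.

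I expect the main obstacle to be the bookkeeping in this last step, together with the degenerate case $\lambda_i = \lambda_j$, where the difference quotients defining $K$ and the second divided differences are nominally singular. The divided differences are defined exactly so as to extend continuously through these coincidences (via $f^{[1]}(\lambda,\lambda)=f'(\lambda)$ and $f^{[2]}(\lambda,\lambda,\lambda)=\tfrac12 f''(\lambda)$), so the clean route is to establish the identity first on the dense set of $X$ with simple spectrum and then extend to all $X$ by continuity, using $f\in C^2$. Since this is precisely the content of the second-order Daleckii--Krein formula, in the interest of brevity one may alternatively invoke~\cite[Theorem 3.33]{hiai2014introduction} directly.
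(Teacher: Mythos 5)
The paper offers no proof of this lemma beyond the one-line assertion that both formulas ``follow from [Theorem 3.23] and [Theorem 3.33] of'' the cited matrix-analysis text, so your closing sentence --- invoke the second-order Daleckii--Krein formula from~\cite{hiai2014introduction} directly --- coincides exactly with the paper's own argument. The preceding sketch (chain rule plus the first-order Daleckii--Krein formula for the gradient; eigenvalue/eigenvector perturbation, reassembly of difference quotients into second divided differences, and a density-plus-continuity argument through coincident eigenvalues for the Hessian) is the standard derivation of that theorem and is a correct, genuinely more self-contained route; the paper buys brevity by citing, you buy transparency about where the $f^{[2]}$ terms come from.

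There is, however, one discrepancy you must resolve before your argument establishes the statement \emph{as printed}. Your derivation yields $\grad h(X)=U[f^{[1]}(\Lambda)\odot(U^\dag CU)]U^\dag$ and a Hessian built from second divided differences $f^{[2]}$ of $f$ itself, whereas the lemma displays $(f')^{[1]}$ and $(f')^{[2]}$, i.e.\ divided differences of $f'$. These are not the same object, and your own consistency check exposes the mismatch: at $C=\mathbb{I}$ the correct gradient of $\tr[Cf(X)]$ collapses via $f^{[1]}(\lambda,\lambda)=f'(\lambda)$ to $f'(X)$, while the printed formula with $(f')^{[1]}$ would collapse to $f''(X)$. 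The displayed formulas are in fact the first and second derivatives of $X\mapsto\tr[Cf'(X)]$, which is how the paper actually deploys the lemma: with $f(x)=-x\log x$ one has $f'(x)=-\log x-1$ and $\tr[Cf'(X)]=S_C(X)-\tr[C]$, so that $(f')^{[2]}=(-\log)^{[2]}$ is exactly what appears in $\grad^2 S_{\mathcal{G}(X)}(\mathcal{H}(X))$ in~\eqref{eqn:hessian-naive} and in Remark~\ref{rem:matrix-form-derivative}. You should therefore either carry out your derivation for $\tr[Cf'(X)]$ (so that every divided difference you produce is of $f'$ and matches the display), or note explicitly that the statement's $(f')^{[1]},(f')^{[2]}$ should read $f^{[1]},f^{[2]}$ for the function $\tr[Cf(X)]$ as written. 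As it stands, your proof and the stated formulas concern two functions that differ by one derivative, and the proof is incomplete until that is reconciled.
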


\begin{remark}\label{rem:matrix-form-derivative}
    Sometimes, it can be convenient to represent the second derivatives in matrix form, i.e., to characterize the Hessian matrix, so that we can perform standard linear algebra techniques on these linear maps. For a matrix $X\in\mathbb{H}^n$, we define $\vect(X)$ as the operation which stacks the columns of $X$ to form an $n^2$-dimensional vector. Then using the identity $\vect(AXB^\dag)=(\conj{B}\otimes A)\vect(X)$, the Hessian, as the linear operator acting on vectorized matrices, corresponding to~\eqref{eqn:tr-f-second-derivative} is equivalent to
    \begin{equation}\label{eqn:tr-f-hessian}
        \grad^2 g(X) = (\conj{U}\otimes U) \diag(\vect((f')^{[1]}(\Lambda))) (\conj{U}\otimes U)^\dag,
    \end{equation}
    and the Hessian corresponding to~\eqref{eqn:tr-c-f-second-derivative} is equivalent to~\cite{faybusovich2020self}
    \begin{equation}\label{eqn:tr-c-f-hessian}
        \grad^2 h(X) = (\conj{U}\otimes U) S_C(X) (\conj{U}\otimes U)^\dag,
    \end{equation}
    where $S_C(X)$ is a sparse matrix given by
    \begin{equation*}
        [S_C(X)]_{ij,kl} = \delta_{kl} [U^\dag C U]_{ij} (f')^{[2]}(\lambda_i, \lambda_j, \lambda_l) + \delta_{ij} [U^\dag C U]_{kl} (f')^{[2]}(\lambda_j, \lambda_k, \lambda_l).
    \end{equation*}
    \diff{Constructing the Hessian matrices using the expressions~\eqref{eqn:tr-f-hessian} and~\eqref{eqn:tr-c-f-hessian} costs $O(n^6)$ flops (due to matrix multiplication between $n^2\times n^2$ matrices). In practice, we instead construct the Hessian matrices by building each column of the matrix by applying~\eqref{eqn:tr-f-second-derivative} and~\eqref{eqn:tr-c-f-second-derivative} to the standard basis, which only costs $O(n^5)$ flops.}
\end{remark}

\begin{remark}\label{rem:easy-inverse-hessian}
    The inverse of the second derivative map~\eqref{eqn:tr-f-second-derivative} is given by
    \begin{equation*}
        (\grad^{2}g(X))^{-1}[H] = U[(U^\dag H U) \oslash (f')^{[1]}(\Lambda)]U^\dag,
    \end{equation*}
    where $\oslash$ denotes the elementwise or Hadamard division, and is therefore relatively easy to apply. Although the second derivative map~\eqref{eqn:tr-c-f-second-derivative} is also highly structured, it is not obvious how we can similarly compute its inverse efficiently without building and factoring the Hessian matrix.
\end{remark}

As an important example, consider the case when $f(x)=\log(x)$ defined on $\mathbb{R}_{++}$, in which case we recover the log determinant function $\log\det(X)=\tr[\log(x)]$ defined on $\mathbb{H}^n_{++}$. Using the fact that the first divided differences of $f'(x)=1/x$ is
\begin{equation*}
    (f')^{[1]}(\Lambda) = -[1 / \lambda_i\lambda_j]_{i,j=1}^n,
\end{equation*}
with Lemma~\ref{eqn:lem-tr-f-derivative}, we recover the well-known result
\begin{align}
    \grad^2\log\det(X)[H] &= U[(f')^{[1]}(\Lambda) \odot (U^\dag H U)]U^\dag \nonumber \\
                        &= -U[\Lambda^{-1} (U^\dag H U) \Lambda^{-1}]U^\dag \nonumber \\
                        &= -X^{-1} H X^{-1}.
\end{align}
In the matrix form discussed in Remark~\ref{rem:matrix-form-derivative}, we can represent this as $\grad^2 \log\det(X) = -\conj{X}^{-1} \otimes X^{-1}$. It is also fairly straightforward to show that the inverse of this linear map is
\begin{align}\label{eqn:logdet-hess-inv}
    (\grad^2\log\det(X))^{-1}[H] &= -X H X.
\end{align}

\subsection{Barrier functions}\label{sec:barrier-hessian}
We are interested in barrier functions $F:\mathbb{R}\times\mathbb{H}^n_{++}\rightarrow\mathbb{R}$ of the form
\begin{equation}
    F(t, X) = -\log(t - \varphi(X)) - \log\det(X),
\end{equation}
for some convex function $\varphi:\mathbb{H}^n_{++}\rightarrow\mathbb{R}$, which turns out to be a suitable LHSCB for many cones defined as the epigraphs of spectral functions and related functions~\cite{coey2023conic,fawzi2023optimal,faybusovich2017matrix}. A straightforward computation shows that the first derivatives can be represented as
\begin{align*}
    \grad_t F(t, X) &= -\zeta^{-1}. \\
    \grad_X F(t, X) &= \zeta^{-1} \grad\varphi(X) - X^{-1},
\end{align*}
where $\zeta=t-\varphi(X)$, and, with some abuse of notation, the Hessian, as a linear operator acting on vectorized matrices, can be represented as the block matrix
\begin{align}
    \grad^2 F(t, X) &= \begin{bmatrix}
        \zeta^{-2} & -\zeta^{-2}\grad\varphi(X)^\top \\
        -\zeta^{-2}\grad\varphi(X) & \zeta^{-2}\grad\varphi(X)\grad\varphi(X)^\top + \zeta^{-1} \grad^2\varphi(X) - \grad^2\log\det(X)
    \end{bmatrix} \nonumber \\
    &= \begin{bmatrix} \zeta^{-1} \\ -\zeta^{-1}\grad\varphi(X) \end{bmatrix}\begin{bmatrix} \zeta^{-1} \\ -\zeta^{-1}\grad\varphi(X) \end{bmatrix}^\top + \begin{bmatrix}
        0 & 0 \\
        0 & \zeta^{-1} \grad^2\varphi(X) - \grad^2\log\det(X)
    \end{bmatrix}.
\end{align}
It is fairly straightforward to show, using block elimination, that the solution to the linear system $\grad^2 F(t, X)[s, H] = (u, V)$ is given by
\begin{subequations}\label{eqn:barrier-inv}
    \begin{align}
        H &= (\zeta^{-1} \grad^2\varphi(X) - \grad^2\log\det(X))^{-1}(V + u\grad\varphi(X)) \\
        s &= \zeta^2u + \inp{\grad\varphi(X)}{H},
    \end{align}
\end{subequations}
i.e., the main effort is in solving a linear system with the matrix $\zeta^{-1} \grad^2\varphi(X) - \grad^2\log\det(X)$, which is guaranteed to be positive definite (as $X$ is positive definite and $\varphi$ is convex).

\section{\diff{Newton system solver}}\label{appdx:pdipm}

\diff{In this section, we outline the approach we take to solving the Newton systems arising in each step of a primal-dual interior-point method. Note that we follow the standard approach of reducing the system to normal equation form, then focusing on solving linear systems with the Schur complement matrix e.g.,~\cite{wright1997primal,skajaa2015homogeneous}). Unless otherwise specified, in our experiments, we replaced Hypatia's default approach to solving Newton systems (which uses the method from~\cite[Section 10.3]{vandenberghe2010cvxopt}) with the approach outlined in this section.}

Consider the standard form primal and dual conic programs\\
\begin{minipage}{0.5\textwidth}
\begin{subequations}\label{eqn:primal}
    \begin{align}
        \mathmakebox[\widthof{$\subjto$}][c]{\minimize_{x\in\mathbb{R}^n}} \quad & c^\top x \\
        \subjto \quad & Ax = b \label{eqn:primal-b} \\
                \quad & x \in \mathcal{K},
    \end{align}
\end{subequations}
\end{minipage}%
\begin{minipage}{0.5\textwidth}
\begin{subequations}\label{eqn:dual}
    \begin{align}
        \maximize_{y\in\mathbb{R}^p, z\in\mathbb{R}^n} \quad & -b^\top y \\
        \mathmakebox[\widthof{$\maximize_{y\in\mathbb{R}^p, z\in\mathbb{R}^n}$}][c]{\subjto} & z - A^\top y = c \\
                \quad & z \in \mathcal{K}_*,
  \end{align}
\end{subequations}
\end{minipage}\bigskip

\noindent where $c\in\mathbb{R}^n$, $A\in\mathbb{R}^{p\times n}$, $b\in\mathbb{R}^p$, $\mathcal{K}\subseteq\mathbb{R}^n$ is a proper convex cone, and $\mathcal{K}_*\subseteq\mathbb{R}^n$ is the dual cone of $\mathcal{K}$. Usually, we will consider $\mathcal{K}$ to be the Cartesian product of multiple proper cones, i.e., $\mathcal{K}=\mathcal{K}_1\times\ldots\times\mathcal{K}_m$, where $\mathcal{K}_i$ is a proper cone for all $i=1,\ldots,m$.

\diff{Let} $F:\interior\mathcal{K}\rightarrow\mathbb{R}$ be a $\nu$-logarithmically homogeneous self-concordant barrier for the cone $\mathcal{K}$. The central path \diff{$\{(x(\mu), y(\mu), z(\mu)) : \mu>0\}$, parameterized by a barrier parameter $\mu>0$,} can be characterized by the following KKT equations \diff{(see, e.g.,~\cite{skajaa2015homogeneous})}
\begin{subequations}\label{eqn:central-path}
    \begin{align}
        z(\mu) - A^\top y(\mu) &= c\\
        Ax(\mu)     &= b \\
        \mu \grad F(x(\mu))  + z(\mu) &= 0.
    \end{align}
\end{subequations}
Primal-dual interior-point methods \diff{follow the central path to $x\downarrow0$} by taking appropriate linearizations of the central path equations. These steps $(\Delta x, \Delta y, \Delta z)$ satisfy linear equations (also known as Newton systems) of the form
\begin{align}
    \begin{bmatrix} 
        0  & -A^\top & \mathbb{I} \\ 
        -A & 0      & 0      \\
        \mu \grad^2F(x) & 0      & \mathbb{I}
    \end{bmatrix} \begin{bmatrix} \Delta x \\ \Delta y \\ \Delta z \end{bmatrix} = \begin{bmatrix} r_x \\ r_y \\ r_z \end{bmatrix},
\end{align}
for some right-hand side $(r_x, r_y, r_z)$ which are dependent on how we linearize the central path equations, e.g., if we want to solve for a direction tangential to the central path, or to solve the central path equations as a system of nonlinear equations. If we perform block elimination on this system of equations, we obtain
\begin{subequations}\label{eqn:pdipm-block-elim}
    \begin{align}
        A\grad^2F(x)^{-1}A^\top\Delta y &=  A\grad^2F(x)^{-1}(r_z - r_x) - \mu r_y  \\
        \Delta x &= \frac{1}{\mu}\grad^2F(x)^{-1} (r_z - r_x - A^\top\Delta y) \\
        \Delta z &= A^\top \Delta y + r_x.
    \end{align}
\end{subequations}
Taking this approach, the main cost in solving for a stepping direction is in forming and Cholesky factoring the Schur complement matrix $A\grad^2F(x)^{-1}A^\top$. 

\diff{Note that in our actual experiments, we do not solve this exact version of the Newton systems, but a more complicated variant which uses a homogeneous self-dual embedding. Doing this has some numerical and practical advantages (see, e.g.,~\cite{skajaa2015homogeneous,coey2023performance,vandenberghe2010cvxopt}). However, the main ideas used in solving the Newton systems remain largely the same as those presented in this appendix.}

\section{Additional details on the discrete-phase-randomized protocol}\label{appdx:dprBB84}
Here, we provide some additional details about the structure of the linear maps for the dprBB84 quantum key rate protocol. Recall that for this protocol, we are interested in the linear map
\begin{equation*}
    \mathcal{G}(X) = K_1 X K_1^\dag + K_2 X K_2^\dag,
\end{equation*}
where $K_1$ and $K_2$ are given by~\eqref{eqn:dprbb84-kraus}. A first observation is that $K_1^\dag K_2 = K_2^\dag K_1 = 0$, and therefore $\inp{K_1XK_1^\dag}{K_2XK_2^\dag}=0$ for all $X\in\mathbb{H}^{12c}$. This implies that $\mathcal{G}(X)$ has a block diagonal structure. More concretely, we can show that
\begin{equation}\label{eqn:dpr-g}
    \mathcal{G}(X) = P_g \begin{bmatrix}
        \mathcal{G}_1(X) & & \\ & \mathcal{G}_2(X) & \\ & & 0
    \end{bmatrix} P_g^\top,
\end{equation}
where $P_g$ is a suitable permutation matrix, and $\mathcal{G}_1(X)\coloneqq \hat{K}_1X\hat{K}_1^\dag$ and $\mathcal{G}_2(X)\coloneqq \hat{K}_2X\hat{K}_2^\dag$ where
\begin{subequations}
    \begin{align}
        \hat{K}_1 = \sqrt{p_z}& \mleft(\bigoplus_{i=1}^c \begin{bmatrix}
            1 & 0 & 0 & 0 \\ 0 & 1 & 0 & 0
        \end{bmatrix} \mright) \otimes \begin{bmatrix}
            1 & 0 & 0 \\ 0 & 1 & 0
        \end{bmatrix} \\
        \hat{K}_2 = \sqrt{1 - p_z} &\mleft(\bigoplus_{i=1}^c \begin{bmatrix}
            0 & 0 & 1 & 0 \\ 0 & 0 & 0 & 1
        \end{bmatrix} \mright) \otimes \begin{bmatrix}
            1 & 0 & 0 \\ 0 & 1 & 0
        \end{bmatrix}.
    \end{align}
\end{subequations}
Next, we also recognize that the pinching map, the act of zeroing off-diagonal blocks, can only ever split a block diagonal block into multiple smaller block diagonal blocks. Therefore, we expect $Z_jK_iXK_i^\dag Z_j^\dag$ for $i,j=1,2$ to represent four distinct block diagonal blocks of the image of $\mathcal{Z}\circ\mathcal{G}$, where $Z_j=e_je_j^\top\otimes\mathbb{I}$ for $j=1,2$. More concretely, we can show that
\begin{equation}\label{eqn:dpr-z}
    \mathcal{Z}(\mathcal{G}(X)) = P_z \begin{bmatrix}
        \mathcal{Z}_1(\mathcal{G}_1(X)) & & & & \\ & \mathcal{Z}_2(\mathcal{G}_1(X)) & & & \\ & & \mathcal{Z}_1(\mathcal{G}_2(X)) & & \\ & & & \mathcal{Z}_2(\mathcal{G}_2(X)) & \\ & & & & 0
    \end{bmatrix} P_z^\top,
\end{equation}
where $P_z$ is a suitable permutation matrix, and $\mathcal{Z}_1(X)\coloneqq \hat{Z}_1X\hat{Z}_1^\dag$ and $\mathcal{Z}_2(X)\coloneqq \hat{Z}_2X\hat{Z}_2^\dag$ where
\begin{subequations}
    \begin{align}
        \hat{Z}_1 &= \mleft(\bigoplus_{i=1}^c \begin{bmatrix}
            1 & 0
        \end{bmatrix} \mright) \otimes \mathbb{I} \\
        \hat{Z}_2 &= \mleft(\bigoplus_{i=1}^c \begin{bmatrix}
            0 & 1
        \end{bmatrix} \mright) \otimes \mathbb{I}.
    \end{align}
\end{subequations}
Note that the linear maps $\mathcal{G}_1$, $\mathcal{G}_2$, $\mathcal{Z}_1$, and $\mathcal{Z}_2$ all have the effect of extracting a principal submatrix from a matrix, up to a scaling factor. Therefore, computationally, these maps are easy to apply as they only require indexing and scalar multiplication operations.

\section{Additional details on experimental setup}\label{appdx:model}
Here, we provide additional details about how we model the quantum relative entropy \diff{optimization problems} to parse into Hypatia for the numerical experiments we present in Section~\ref{sec:exp}. Our proposed methods all model the problem in the standard form~\eqref{eqn:primal} and solve the Newton equations using the block elimination method described in Appendix~\ref{appdx:pdipm}. On the other hand, for many of the benchmark methods we compare our proposed approaches to, we found it computationally advantageous to give problems to Hypatia in the more general form 
\begin{equation}
    \minimize_{x\in\mathbb{R}^n} \quad \inp{c}{x} \quad \subjto \quad Ax = b, \quad h-Gx \in\mathcal{K},
\end{equation}
where $c\in\mathbb{R}^n$, $A\in\mathbb{R}^{p\times n}$, $b\in\mathbb{R}^p$, $G\in\mathbb{R}^{q\times n}$, $h\in\mathbb{R}^q$, and $\mathcal{K}\subset\mathbb{R}^q$ is a proper convex cone, and to solve problems using Hypatia's default Newton system solver, which is based on the method from~\cite[Section 10.3]{vandenberghe2010cvxopt}.

\subsection{Quantum key distribution}
For the QRE method, we model the problem as the standard quantum relative entropy \diff{optimization problem}
\begin{equation}\label{eqn:dprbb84-qre}
    \begin{aligned}
        \minimize_{t,X} \quad & t \\
        \subjto \quad & \mathcal{A}(X) = b \\
        & (t, V^\dag\mathcal{G}(X)V, V^\dag\mathcal{G}(\mathcal{Z}(X))V) \in \mathcal{K}_{\textnormal{qre}} \\
        & X \succeq 0,
    \end{aligned}
\end{equation}
where $V$ is a suitable isometry performing facial reduction based on~\cite[Lemma 6.2]{karimi2023efficient}. For the DPR and QKD methods, we model the quantum key distribution problem in the same way using a suitably parameterized cone from Corollary~\ref{cor:main}, i.e.,
\begin{equation}\label{eqn:dprbb84-dpr}
    \begin{aligned}
        \minimize_{t,X} \quad & t \\
        \subjto \quad & \mathcal{A}(X) = b \\
        & (t, X) \in \mathcal{K}_{\textnormal{qre}}^{\mathcal{G}, \mathcal{Z}\circ  \mathcal{G}}.
    \end{aligned}
\end{equation}

\subsection{Quantum rate-distortion}
For the QRE method, we model the quantum rate-distortion problem as
\begin{equation}\label{eqn:qrd-qre}
    \begin{aligned}
        \minimize_{t,X} \quad & t \\
        \subjto \quad & \tr_2^{n,m}(X) = W \\
        & (t, X, \mathbb{I}\otimes\tr_1^{n,m}(X)) \in \mathcal{K}_{\textnormal{qre}} \\
        & \inp{\Delta}{X} \leq D.
    \end{aligned}
\end{equation}
Alternatively, as we do for QCE, we can directly model the problem using the quantum conditional entropy cone from Corollary~\ref{cor:qce-cone} as
\begin{equation}\label{eqn:qrd-qce}
    \begin{aligned}
        \minimize_{t,X} \quad & t \\
        \subjto \quad & \tr_2^{n,m}(X) = W \\
        & (t, X) \in \mathcal{K}_{\textnormal{qce}} \\
        & \inp{\Delta}{X} \leq D.
    \end{aligned}
\end{equation}

\subsubsection{Entanglement fidelity distortion}
There are a few ways we can take into account the fixed point subspace~\eqref{eqn:qrd-fixed-point-subspace}. As we do for QCE*, the simplest way is to work directly within the image of the subspace as follows
\begin{equation}\label{eqn:efqrd-qce}
    \begin{aligned}
        \minimize_{t,y,Z} \quad & t \\
        \subjto \quad & \tr_2^{n,n}(\mathcal{G}(y,Z)) = W \\
        & (t, \mathcal{G}(y,Z)) \in \mathcal{K}_{\textnormal{qce}} \\
        & \inp{\Delta}{\mathcal{G}(y,Z)} \leq D,
    \end{aligned}
\end{equation}
where $\mathcal{G}$ is given by~\eqref{eqn:qrd-lin-map}. Alternatively, as we do for QRE*, we can use~\eqref{eqn:qrd-ef-double-re} to model the problem using classical and quantum relative entropies
\begin{equation}\label{eqn:efqrd-qre}
    \begin{aligned}
        \minimize_{t_1,t_2,y,Z} \quad & t_1 + t_2 \\
        \subjto \quad & \tr_2^{n,n}(\mathcal{G}(y,Z)) = W \\
        & (t_1, y, \hat{\mathcal{G}}_1(y, Z)) \in \mathcal{K}_{\textnormal{qre}} \\
        & (t_2, Z, \hat{\mathcal{G}}_2(y, Z)) \in \mathcal{K}_{\textnormal{cre}} \\
        & \inp{\Delta}{\mathcal{G}(y,Z)} \leq D.
    \end{aligned}
\end{equation}
Finally, we can directly use the tailored cone from Corollary~\ref{cor:cone-qrd} to model the problem, which we use for the QRD method
\begin{equation}\label{eqn:efqrd-qrd}
    \begin{aligned}
        \minimize_{t,y,Z} \quad & t \\
        \subjto \quad & \tr_2^{n,n}(\mathcal{G}(y,Z)) = W \\
        & (t, y, Z) \in \mathcal{K}_{\textnormal{qrd}} \\
        & \inp{\Delta}{\mathcal{G}(y,Z)} \leq D.
    \end{aligned}
\end{equation}

\subsection{Quantum channel capacity}

\subsubsection{Entanglement-assisted capacity}
To compute the entanglement-assisted channel capacity, for the na\"ive QRE method we model the problem using the quantum relative entropy and quantum entropy cones as
\begin{equation}\label{eqn:eacc-qre}
    \begin{aligned}
        \minimize_{t_1,t_2,X} \quad & t_1 + t_2 \\
        \subjto \quad & \tr[X] = 1 \\
        & (t_1, VXV^\dag, \mathbb{I}\otimes\tr_1^{m,p}(VXV^\dag)) \in \mathcal{K}_{\textnormal{qre}} \\
        & (t_2, \tr_2^{m,p}(VXV^\dag), \tr[X]) \in \mathcal{K}_{\textnormal{qe}}.
    \end{aligned}
\end{equation}
For the QCE method, we model the problem using the quantum conditional entropy cone as
\begin{equation}\label{eqn:eacc-qce}
    \begin{aligned}
        \minimize_{t_1,t_2,X} \quad & t_1 + t_2 \\
        \subjto \quad & \tr[X] = 1 \\
        & (t_1, VXV^\dag) \in \mathcal{K}_{\textnormal{qce}} \\
        & (t_2, \tr_2^{m,p}(VXV^\dag), \tr[X]) \in \mathcal{K}_{\textnormal{qe}}.
    \end{aligned}
\end{equation}
Alternatively, for the QMI method we directly optimize over the quantum mutual information cone from Corollary~\ref{cor:qmi-cone} as follows
\begin{equation}\label{eqn:eacc-qmi}
    \begin{aligned}
        \minimize_{t,X} \quad & t \\
        \subjto \quad & \tr[X] = 1 \\
        & (t, X) \in \mathcal{K}_{\textnormal{qmi}}.
    \end{aligned}
\end{equation}

\subsubsection{Quantum-quantum capacity of degradable channels}
Similarly, for the quantum-quantum channel capacity, in the QRE method, we model the problem using the full quantum relative entropy cone as
\begin{equation}\label{eqn:qqcc-qre}
    \begin{aligned}
        \minimize_{t,X} \quad & t \\
        \subjto \quad & \tr[X] = 1 \\
        & (t, W\mathcal{N}(X)W^\dag, \mathbb{I}\otimes\tr_1^{p,q}(W\mathcal{N}(X)W^\dag)) \in \mathcal{K}_{\textnormal{qre}} \\
        & X \succeq 0.
    \end{aligned}
\end{equation}
Alternatively, in the QCE method, we model the problem using the quantum conditional entropy cone as
\begin{equation}\label{eqn:qqcc-qce}
    \begin{aligned}
        \minimize_{t,X} \quad & t \\
        \subjto \quad & \tr[X] = 1 \\
        & (t, W\mathcal{N}(X)W^\dag) \in \mathcal{K}_{\textnormal{qce}} \\
        & X \succeq 0.
    \end{aligned}
\end{equation}
In the QCI method, we instead directly minimize over the quantum coherent information cone from Corollary~\ref{cor:qci-cone} as
\begin{equation}\label{eqn:qqcc-qci}
    \begin{aligned}
        \minimize_{t,X} \quad & t \\
        \subjto \quad & \tr[X] = 1 \\
        & (t, X) \in \mathcal{K}_{\textnormal{qci}}.
    \end{aligned}
\end{equation}

\subsection{Ground state energy of Hamiltonians}
To compute compute lower bounds for the ground state energy of Hamiltonians, we can either model the problem using the quantum relative entropy cone as in the QRE method,
\begin{equation}\label{eqn:med-qre}
    \begin{aligned}
        \minimize_{X} \quad & \inp{H}{X} \\
        \subjto \quad & \tr_1^{2,2^{l-1}}(X) = \tr_2^{2^{l-1},2}(X) \\
        & \tr[X] = 1 \\
        & (0, X, \mathbb{I}\otimes\tr_1^{2,2^{l-1}}(X)) \in \mathcal{K}_{\textnormal{qre}},
    \end{aligned}
\end{equation}
or, as we do for the QCE method, we can directly optimize over the quantum conditional entropy function from Corollary~\ref{cor:qce-cone} 
\begin{equation}\label{eqn:med-qce}
    \begin{aligned}
        \minimize_{t,X} \quad & \inp{H}{X} \\
        \subjto \quad & \tr_1^{2,2^{l-1}}(X) = \tr_2^{2^{l-1},2}(X) \\
        & \tr[X] = 1 \\
        & t = 0 \\
        & (t, X) \in \mathcal{K}_{\textnormal{qce}}.
    \end{aligned}
\end{equation}
}

\bibliographystyle{IEEEtran}
\bibliography{refs}

\end{document}